\definecolor{darkred}{rgb}{0.5,0.2,0.2}
\definecolor{darkgrey}{rgb}{0.7,0.7,0.7}
\newcommand{\MCSone}{\cellcolor{darkgrey}}  
\definecolor{grey}{rgb}{0.9,0.9,0.9}
\definecolor{purple}{rgb}{0.6,0,0.6}
\newcommand{\MCStwo}{\cellcolor{grey}}  
\newcommand\mikko[1]{\textcolor{black}{#1}}
\newcommand\mikkof[1]{\textcolor{black}{#1}}
\newcommand\mikkol[1]{\textcolor{black}{#1}}
\newcommand\mikkel[1]{\textcolor{black}{#1}}
\newcommand\mik[1]{\textcolor{black}{#1}}
\newcommand\mikkell[1]{\textcolor{black}{#1}}
\newcommand\mikkelll[1]{\textcolor{black}{#1}}
\newcommand\miksim[1]{\textcolor{black}{#1}}
\newcommand\mikkolll[1]{\textcolor{black}{#1}}
\theoremstyle{plain}
\newtheorem{theorem}{Theorem}[section]
\newtheorem{proposition}{Proposition}[section]
\theoremstyle{definition}
\newtheorem{example}{Example}[section]
\theoremstyle{remark}
\newtheorem{remark}{Remark}[section]
\def\F{{\mathcal F}}
\newcommand{\N}{\mathbb{N}}
\newcommand{\R}{\mathbb{R}}
\newcommand{\E}{\mathbb{E}}
\newcommand{\BSS}{\mathcal{BSS}}
\def\bi{\begin{itemize}}
\def\ei{\end{itemize}}
\numberwithin{equation}{section}
\newif\ifi
\begin{document}

\title{Decoupling the short- and long-term behavior of stochastic volatility}


\author{Mikkel Bennedsen\thanks{
Department of Economics and Business Economics and CREATES, 
Aarhus University, 
Fuglesangs All\'e 4,
8210 Aarhus V, Denmark.
E-mail:\
\href{mailto:mbennedsen@econ.au.dk}{\nolinkurl{mbennedsen@econ.au.dk}}. 
} \and 
Asger Lunde\thanks{
Department of Economics and Business Economics and CREATES, 
Aarhus University, 
Fuglesangs All\'e 4,
8210 Aarhus V, Denmark.
E-mail:\ 
\href{mailto:alunde@econ.au.dk}{\nolinkurl{alunde@econ.au.dk}}. 
} \and 
Mikko S. Pakkanen\thanks{
Department of Mathematics, 
Imperial College London, 
South Kensington Campus,
London SW7 2AZ, UK and CREATES, Aarhus University, Denmark.
E-mail:\ 
\href{mailto:m.pakkanen@imperial.ac.uk}{\nolinkurl{m.pakkanen@imperial.ac.uk}}.
}}

\maketitle
\begin{abstract}
We introduce a new class of continuous-time models of the stochastic volatility of asset prices. The models can simultaneously incorporate roughness and slowly decaying autocorrelations, including proper long memory, which are two stylized facts often found in volatility data. Our prime model is based on the so-called Brownian semistationary process and we derive a number of theoretical properties of this process, relevant to volatility modeling. Applying the models to realized volatility measures covering a vast panel of assets, we find evidence consistent with the hypothesis that time series of realized measures of volatility are both rough and very persistent. Lastly, we illustrate the utility of the models in an extensive forecasting study; we find that the models proposed in this paper outperform a wide array of benchmarks considerably, indicating that it pays off to exploit both roughness and persistence in volatility forecasting.
\end{abstract}

\noindent {\bf Keywords}: Stochastic volatility; high-frequency data; rough volatility; persistence; long memory; forecasting; Brownian semistationary process.

\vspace*{1em}

\noindent {\bf JEL Classification}: C22, C51, C53, C58, G17

\vspace*{1em}

\noindent {\bf MSC 2010 Classification}: 60G10, 60G15, 60G17, 60G22, 62M09, 62M10, 91G70

\baselineskip=17pt

\section{Introduction}\label{sec:intro}

A large body of literature in financial econometrics has found that the volatility of asset prices, or realized measures thereof, is persistent, in the sense that its autocorrelation function decays very slowly \citep[e.g.][]{DGE1993,Baillie1996,Cont2001,ABDL01,ABDL03,BCDS2019}. For this reason, it has become popular to employ stochastic volatility models that enjoy the long memory property, meaning that their autocorrelation function is not integrable \citep[e.g.][]{PG2003,FK2011}. The empirical evidence of persistence inspired \cite{CR96} to introduce the \emph{fractional stochastic volatility} (FSV) model, which is based on  a fractional Ornstein--Uhlenbeck process, driven by a fractional Brownian motion (fBm) with Hurst parameter $H > 1/2$, resulting in a continuous-time model of stochastic volatility with the long memory property.

Recently, there has been considerable interest in rough models of volatility. This is due to both theoretical developments in implied volatility modeling \citep[][]{ALV2007,fukasawa15} as well as empirical evidence based on realized volatility \citep[][]{GJR14}. An important contribution to this literature is the model suggested by \cite{GJR14}, which is inspired by FSV model of \cite{CR96}. While the FSV model of \cite{CR96} uses an fBm with Hurst index $H > 1/2$, giving rise to long memory, \cite{GJR14}, by contrast, switch to an fBm with $H< 1/2$ to allow for roughness. They therefore term their model the \emph{rough fractional stochastic volatility} (RFSV) model.

When designing a realistic model of volatility, it is important to be aware of the shortcoming, stressed by \citet[][]{GS04}, of \emph{self-similar processes}, such as the fBm, whereby their roughness is \emph{inseparable} from their long term behavior. A work-around to this limitation is to use fBm as a driver of a stochastic differential equation (SDE), as in the fractional Ornstein-Uhlenbeck process used in the FSV and RFSV models. Setting aside the intricacies of working with SDEs driven by fBm, this approach does not make it easy to specify the long-term behavior of the process independent of roughness, however. A case in point, the Hurst index $H$ still controls the rate of decay of the autocorrelation function of the fractional Ornstein-Uhlenbeck process in the rough regime $H\in \left( 0, \frac{1}{2} \right)$.

The first contribution this paper is to introduce a class of continuous-time models of volatility incorporating both roughness (irregular behavior at short time scales) and persistence (strong dependence at longer time scales). In particular, we advocate the use of \emph{Brownian semistationary} ($\BSS$) processes \citep{BNSc07,BNSc09} as models of logarithmic volatility. As we will show, these processes are flexible in the sense that they allow for decoupling of the fine properties (roughness) from the long-term behavior (memory/persistence). Indeed, we show that $\BSS$ processes, under suitable specifications, can accommodate both bona fide long memory (i.e., non-integrable autocorrelations) and short memory (exponentially decaying autocorrelations), while the degree of roughness is specified independently. Under rather general conditions, $\BSS$ processes are stationary, and they allow for easy inclusion of non-Gaussianity and the leverage effect. In spite of their generality, the suggested models are simple, in terms of their mathematical structure, and parsimonious, relying on only two parameters controlling short- and long-term behavior, respectively. Moreover, as we demonstrate below, the models are easy to estimate, simulate, and forecast.
 
The second contribution of the paper is to develop an estimation procedure for the parameters of the proposed models. Since the stochastic volatility process of an asset price is unobservable, any estimation procedure needs to be based on indirect measures of volatility, which might constitute (very) noisy measurement of the true underlying process. This engenders a measurement-error problem, which can bias estimates of the model parameters, especially estimates of the roughness index \citep{bennedsen16a}. We propose noise-robust estimators of all parameters in the proposed models. Our robust estimator of the roughness index is novel and relies on a non-linear least squares (NLLS) regression. To our knowledge, this is first time such a robust estimator has been considered in a study of rough volatility.\footnote{Concurrently with the writing of the present version of this paper, \cite{FTW2019} \mikkolll{have} developed a quasi-likelihood approach to estimation of rough volatility models, which also seeks to take account of the noise induced by using a realized volatility measure in place of the latent volatility process.} In a simulation study, we inspect the finite sample properties of the robust estimator of the roughness index and compare with those of a basic (non-robust) estimator. We find that the basic estimator, which does not take measurement noise into account, can become severely biased when the data are noisy, while our novel NLLS-based estimator can significantly alleviate this bias.

Whereas the first two contributions of the paper are methodological, the third is empirical. We apply our modelling framework to the study of data on the E-mini S\&P $500$ futures contract at a wide range of intraday time scales, ranging from ten minutes to one day. To our knowledge, this is the first time the roughness properties of intraday volatility have been studied, a surprising fact given that roughness is inherently a fine scale property and is thus most accurately studied at small scales. We also study the roughness and persistence of daily realized volatility measures of close to $2\ 000$ individual equities. Our empirical findings indicate that both intraday and daily volatility measures display high degrees of roughness and high degrees of persistence, including possible long memory, thus indicating that our proposed theoretical models could potentially be useful as models of stochastic volatility. 

Lastly, to examine the practical usefulness of the suggested volatility models, we conduct an extensive forecasting study. We find that our proposed models outperform a wide array of benchmark models, especially at intraday time scales. This indicates that it is important to carefully model both small- and large-scale behavior of stochastic volatility in forecasting and, therefore, that models that decouple the behavior at these scales can be useful in practice.

The rest of the paper is structured as follows. Section \ref{sec:model} defines the concepts of roughness and persistence, in the context of this paper, through the autocorrelation of a stochastic process. The section then introduces a class of stochastic volatility models, based on $\mathcal{BSS}$ processes, that are able to parsimoniously incorporate roughness and high persistence, including long memory, simultaneously. Section \ref{sec:estimate} describes the estimation procedure and presents the novel noise-robust estimator of the  roughness parameter. Section \ref{sec:sim} contains a number of simulation studies, investigating the finite sample properties of the proposed estimator. Section \ref{sec:empirical} is empirical and estimates the various models using realized volatility data. Section \ref{sec:forecast} presents a forecasting study, where we compare our new models to existing volatility forecasting models. Lastly, Section \ref{sec:concl} concludes. Some further details and proofs of the technical results are given in an Appendix. A Web Appendix, containing further simulation experiments  is available online.


\section{Models of stochastic volatility that decouple short- and long-term behavior}\label{sec:model}
Let $X = (X_t)_{t \in \R}$ be a stationary continuous stochastic process, defined on a probability space $(\Omega, \mathcal{F}, \mathbb{P})$, satisfying the usual conditions. Define $\rho$ as the autocorrelation function (ACF) of $X$, i.e.
\begin{align*}
\rho(h) := \textnormal{Corr}(X_{t+h},X_t), \quad h \in \R.
\end{align*}

We say that $X$ is \emph{rough} if its ACF adheres to the asymptotic relationship
\begin{align}\label{eq:assA}
1-\rho(h)  \sim c |h|^{2\alpha + 1}, \quad |h| \rightarrow 0,
\end{align}
for a constant $c>0$ and some $\alpha \in \left(-\frac{1}{2},\frac{1}{2}\right)$.\footnote{Here and below ``$\sim$" indicates that the ratio between the left- and right-hand side tends to one --- when the constant in this relationship is immaterial, we will denote it by the generic $c$, which  may vary from from formula to another. } We call $\alpha$ the \emph{roughness index} of $X$. For stationary Gaussian processes, the relationship \eqref{eq:assA} implies that the process $X$ has a modification with locally $\phi$-H\"older continuous trajectories for any $\phi \in (0,\alpha+1/2)$  \citep[e.g.][Proposition 2.1]{bennedsen16a}. \mikkolll{For completeness, we recall here that a function $x : \R \rightarrow \R$ is locally $\phi$-H\"older continuous for $\phi \in (0,1]$ if for any compact interval $I \subset \R$ there is a constant $c_I>0$ such that
\begin{align*}
|x(s)-x(t)| \leq c_I |s-t|^{\phi}, \quad s,\,t \in I.
\end{align*}}Thus, the index $\alpha$ can be seen as a measure of roughness, with small values indicating more roughness. It is worth recalling here that a standard Brownian motion has locally $\phi$-H\"older continuous trajectories for any $\phi \in (0,1/2)$. Thus, negative values of $\alpha$ suggest trajectories rougher than those of a standard Brownian motion.

Rough models of volatility are consistent with some empirically observed features of implied volatility surfaces \citep{gatheral06}. In particular, as shown in \cite{ALV2007} and \cite{fukasawa15}, such models can accurately capture the short-time behavior of the at-the-money volatility skew, which conventional local/stochastic volatility models based on It\^o diffusions fail to capture. To model the roughness of volatility, earlier studies have mainly relied on the ``canonical" rough process, the fractional Brownian motion (fBm) with Hurst index $H \in (0,1/2)$. For the fBm, the simple relationship $H = \alpha + 1/2$ holds, which means that $H < 1/2$ implies roughness.

That volatility is very persistent has long been a well-established fact \cite[e.g.,][]{BW00,ABDL03}. A large body of literature has therefore focused on modeling (log) volatility using persistent models, i.e., volatility models whose autocorrelations decay at a slow polynomial rate:
\begin{align}\label{eq:assB}
\rho(h)  \sim c |h|^{-\beta}, \quad |h| \rightarrow \infty,
\end{align}
for a constant $c>0$ and some $\beta \in (0,\infty)$. When $\beta \in (0,1)$, the correlation function $\rho$ is not integrable, that is, $\int_0^\infty |\rho(h)|dh = \infty$, and we say that $X$ has the \emph{long memory property}. Models of log volatility that are able to reproduce the long memory property have been traditionally built using an fBm with Hurst index $H = 1-\beta/2 \in (1/2,1)$ as  driving noise; see, e.g., \cite{CR96}, \cite{comte96}, \cite{CR98}, and \cite{CCR12} for literature on continuous-time volatility models with long memory.

\mikkolll{The fBm is stationary only in terms of increments. While for small $H<1/2$, the process has highly anti-persistent increments, rendering its trajectories seemingly stationary at moderately long time scales, the formal lack of stationarity of levels is a nuisance when we would like to talk about long-term properties of volatility like \eqref{eq:assB}. The remedy, proposed by \cite{CR96,CR98} is to construct a volatility model from a fractional Ornstein-Uhlenbeck process, i.e., an Ornstein-Uhlenbeck process driven by an fBm \citep{CKM2003} with Hurst index $H$. The choice $H>1/2$ advocated by \cite{CR96,CR98} gives rise to a model with long memory, the fractional stochastic volatility (FSV) model, while the choice $H < 1/2$, championed recently by \cite{GJR14}, results in a model with rough trajectories, the rough fractional stochastic volatility (RFSV) model.} 

\mikkolll{The theory of fractional Ornstein-Uhlenbeck processes, Theorem 2.3 of \citet{CKM2003} to be specific, tells us that the processes have an autocorrelation function that decays at a polynomial rate with exponent $\beta=2-2H$. Thus, for $H<1/2$ the process loses the long memory property, as formalized above, while it is still undoubtedly persistent. However, letting the parameter $H$ govern both the roughness and long-term behaviour of the process via $\alpha = H + 1/2$ and $\beta = 2-2H$, respectively, appears to have neither theoretical nor empirical motivation in the context of volatility modeling. This prompts us look for processes $X$ for which we can specify the roughness index $\alpha$ independently of the memory parameter $\beta$, or vice versa, or even replace polynomial decay of autocorrelations with exponential decay without affecting roughness.}
Given a process $X$ with such \mikkolll{qualities}, we then define our \mikkolll{general} model for the spot volatility process $\sigma_t$ as
\begin{align}\label{eq:sig}
\sigma_t = \xi \exp\left(X_t \right), \quad t \geq 0,
\end{align}
where $\xi > 0$ is a free parameter. 


\subsection{Models for log volatility}
We \mikkolll{entertain} two main candidates for $X$: the Cauchy process and the so-called Brownian semistationary process. Common to the two processes is that they, in the setting we consider here, both have two parameters, controlling the short- and long-term behavior, respectively. Thus, the number of parameters is, in either case, no higher than in the RFSV model \citep[][Sect. 3.1]{GJR14}. We will see that the Brownian semistationary process, in particular, is ideally suited to volatility modeling

\subsubsection{The Cauchy process}\label{sec:cauchy}
A flexible Gaussian process that decouples the short- and long-term behavior can be obtained by using the \emph{Cauchy class} of autocorrelation functions \citep{GS04}. The resulting \emph{Cauchy process} is a centered, stationary Gaussian process $G = (G_t)_{t\in \R}$ with autocorrelation function
\begin{align*}
\rho(h) = \left(1+|h| ^{2\alpha+1}\right)^{-\frac{\beta}{2\alpha+1}}, \quad h \in \R,
\end{align*}
where $\alpha \in \left(-\frac{1}{2},\frac{1}{2}\right)$ and $\beta > 0$. In particular, the process $G$ satisfies \eqref{eq:assA} and \eqref{eq:assB}. 

However, the \mikkolll{main} limitation of this process, from a modeling point of view, is its inherent Gaussianity. Yet it is possible to go beyond Gaussianity by volatility modulation, that is, by specifying a process
\begin{align}\label{eq:volCauchy}
X_t = \int_0^t v_s dG_s, \quad t \geq 0,
\end{align}
using the Cauchy process $G$ and some additional process $v = (v_t)_{t\in \R}$ that models the volatility of volatility. Under suitable assumptions, $X$ will inherit the roughness properties of $G$, see, e.g., \citet[Example $3$]{BNCP09}. It is worth stressing that since $G$ is typically a non-semimartingale --- and with parameter values relevant to rough volatility modeling it indeed is --- the stochastic integral in \eqref{eq:volCauchy} cannot be defined as an \emph{It\^o integral}, but \emph{pathwise Young integration} \citep{DN11} needs to be used, which requires some additional assumptions on the roughness of $v$. In the case where the Cauchy process $G$ is rough, these assumptions become rather restrictive, unfortunately --- for example, $v$ cannot be a semimartingale (with non-zero quadratic variation).

In sum, the Cauchy class provides a convenient model that decouples roughness and memory properties. 
Since the Cauchy class is characterized by a closed-form ACF, it is very easy to work with. Moreover, as we will see in the forecasting experiment below, volatility forecasts derived using a Cauchy class model perform rather well. However, this class of models is not easy to extend beyond Gaussianity, which limits its applicability as a model of log volatility. Next, we propose a different modeling framework, \mikkolll{which easily accommodates non-Gaussianity.}

\subsubsection{The Brownian semistationary process}\label{sec:bss}
A stochastic process that is able to capture all of our desiderata, consisting of roughness, strong persistence, stationarity, and non-Gaussianity is the Brownian semistationary process ($\BSS$), which was introduced in \cite{BNSc07,BNSc09}. This process is defined via the moving-average representation 
\begin{align}\label{eq:BSS}
X_t = \int_{-\infty}^t g(t-s) v_s dW_s, \quad t \geq 0,
\end{align}
where $W = (\mikkolll{W_t})_{t \in \R}$ is a standard Brownian motion defined on $\mathbb{R}$, $g:(0,\infty) \rightarrow \R$ is a square-integrable kernel function, and $v = (v_t)_{t \in \mathbb{R}}$ is an adapted, covariance-stationary volatility (of volatility) process. Note that when $v$ is deterministic, $X$ is Gaussian, while a stochastic $v$ makes $X$ non-Gaussian. In particular, when $v$ is independent of $W$, we have
\begin{align*}
X_t | (v_s)_{s\leq t} \sim N\left(0,\int_0^{\infty}g(x)^2v^2_{t-x}dx \right),
\end{align*}
showing that the marginal distribution of $X_t$ is a normal mean-variance mixture with conditional variance governed by $v$ and $g$. It is shown in \cite{BNBV13} that when $g$ is given by the so-called gamma kernel (Example \ref{ex:gamma} below), we can choose the normal-inverse Gaussian (NIG) distribution as the marginal distribution of $X$.\footnote{In an earlier version of this paper (\url{https://arxiv.org/abs/1610.00332v2}), we showed that the NIG distribution appears to fit the empirical distribution of log volatility very well, especially at intraday time scales. This is an encouraging property of the $\BSS$ framework, and using this as a guide to specify a model for volatility of volatility is a promising approach, but beyond the scope of the present study.}

Under the assumptions given above, the process $X$ is already well-defined and covariance-stationary. For stationarity, integration from $-\infty$ in \eqref{eq:BSS} is crucial. We will now introduce additional assumptions concerning the properties of the kernel function $g$, which enable us to derive some theoretical results \mikkolll{on} $X$. 
\begin{enumerate}[label=(A\arabic*),ref=A\arabic*,leftmargin=3em]
\item\label{A:zero} For some $\alpha \in (-1/2,1/2)\setminus\{ 0\}$,\begin{align}
g(x)  &= x^{\alpha} L_0(x), \quad x \in (0,1], \label{eq:g1}
\end{align}
where the function $L_0$ is continuously differentiable, bounded away from zero, and slowly varying function at zero in the sense that $\lim_{x \downarrow 0} \frac{L_0(tx)}{L_0(x)} = 1$ for all $t>0$.\footnote{We refer to \cite{BGT} for an extensive treatment of slowly varying (and regularly varying) functions.} Furthermore, the derivative $L'_0$ of $L_0$ satisfies
\begin{align*}
|L'_0(x)| \leq C(1+x^{-1}), \quad x \in (0,1],
\end{align*}
for some constant $C>0$.
\item\label{A:smooth} The function $g$ is continuously differentiable with derivative $g'$ that is ultimately monotonic and satisfies $\int_1^\infty g'(x)^2 dx<\infty$.
\item\label{A:infty} For some $\lambda \geq 0$ and $\gamma \in \R$, such that $\gamma > 1/2$ when $\lambda = 0$,
\begin{align}
g(x)  = e^{-\lambda x} x^{-\gamma} L_1(x), \quad x \in (1,\infty), \label{eq:g2}
\end{align}
where $L_1$ is slowly varying at infinity and bounded away from zero and $\infty$ on any finite interval.
\end{enumerate}

Assumptions \eqref{A:zero} and \eqref{A:infty} refine the earlier standing assumption that $g$ is square-integrable. Indeed, in the case $\lambda=0$, a simple application of the so-called Potter bounds \citep[][Theorem 1.5.6(ii)]{BGT} shows that $\alpha > -1/2$ and $\gamma > 1/2$ are sufficient conditions for $g$ to be square integrable under the specifications \eqref{eq:g1} and \eqref{eq:g2}. Similarly, in the case $\lambda>0$, the conditions $\alpha > -1/2$ and $\gamma \in \R$, under \eqref{eq:g1}  and \eqref{eq:g2}, suffice for square integrability. The assumptions \eqref{A:zero}, \eqref{A:smooth}, and \eqref{A:infty} are similar to those used in \cite{BLP15}, the only difference being that \eqref{A:infty} is slightly more specific compared to the corresponding assumption in \mikkel{that paper}.

The following proposition shows that under assumptions \eqref{A:zero} and \eqref{A:smooth}, the $\BSS$ process $X$, defined by \eqref{eq:BSS}, has $\alpha$ as its roughness index in the sense of Equation \eqref{eq:assA}. The result is a straightforward adaptation of Proposition 2.1 in \cite{BLP15} and we refer to that paper for a proof. Below, and in what follows, we denote by $\rho_X$ the autocorrelation function of $X$.

\begin{proposition}\label{prop:BSSa}
If the kernel function $g$ satisfies \eqref{A:zero} and \eqref{A:smooth}, then
\begin{align*}
1-\rho_X(h) \sim \left( \frac{\frac{1}{2\alpha + 1} + \int_0^\infty \big( (x+1)^\alpha - x^\alpha\big)^2 dx}{\int_0^{\infty}g(x)^2dx} \right) L_0(|h|)^2 |h|^{2\alpha +1}, \quad |h| \downarrow 0.
\end{align*}
\end{proposition}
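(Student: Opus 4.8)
The plan is to reduce the statement to the small-lag asymptotics of the \emph{variogram} $\gamma_2(h):=\E[|X_{t+h}-X_t|^2]$, exploiting the covariance-stationary identity $1-\rho_X(h)=\gamma_2(h)/(2\,\mathrm{Var}(X_t))$. Since $X$ is the $\BSS$ process \eqref{eq:BSS} and $v$ is covariance-stationary with $c_v:=\E[v_0^2]$ constant, the It\^o isometry turns the relevant second moments into deterministic integrals of $g$. First I would record $\mathrm{Var}(X_t)=c_v\int_0^\infty g(x)^2\,dx$. Writing
\begin{equation*}
X_{t+h}-X_t=\int_{-\infty}^t\big(g(t+h-s)-g(t-s)\big)v_s\,dW_s+\int_t^{t+h}g(t+h-s)v_s\,dW_s
\end{equation*}
as a sum of two stochastic integrals over disjoint time sets, hence uncorrelated, the isometry (together with $\E[v_s^2]=c_v$) and the substitutions $x=t-s$ and $x=t+h-s$ give
\begin{equation*}
\gamma_2(h)=c_v\big(I_1(h)+I_2(h)\big),\qquad I_1(h):=\int_0^\infty\big(g(x+h)-g(x)\big)^2\,dx,\quad I_2(h):=\int_0^h g(x)^2\,dx.
\end{equation*}
The factor $c_v$ cancels in $1-\rho_X$, so it remains to find the $h\downarrow 0$ asymptotics of $I_1$ and $I_2$.

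For $I_2$, assumption \eqref{A:zero} gives $g(x)^2=x^{2\alpha}L_0(x)^2$ near the origin, a function regularly varying at zero with index $2\alpha>-1$; Karamata's theorem then yields $I_2(h)\sim h^{2\alpha+1}L_0(h)^2/(2\alpha+1)$. For $I_1$, I would fix a small $\varepsilon\in(0,1]$ and split $I_1=\int_0^\varepsilon+\int_\varepsilon^\infty$. On the near-zero piece the rescaling $x=hy$ produces
\begin{equation*}
\int_0^\varepsilon\big(g(x+h)-g(x)\big)^2\,dx=h^{2\alpha+1}L_0(h)^2\int_0^{\varepsilon/h}\left(\frac{(y+1)^\alpha L_0(h(y+1))-y^\alpha L_0(hy)}{L_0(h)}\right)^2 dy.
\end{equation*}
By slow variation of $L_0$ the integrand converges pointwise to $\big((y+1)^\alpha-y^\alpha\big)^2$, while Potter's bounds \citep[][Theorem 1.5.6]{BGT} furnish a $y$-integrable majorant uniform in small $h$; dominated convergence then gives the contribution $h^{2\alpha+1}L_0(h)^2\int_0^\infty\big((x+1)^\alpha-x^\alpha\big)^2\,dx$, the limiting integral converging exactly because $\alpha>-1/2$ controls the origin and $\alpha<1/2$ controls infinity. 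On the far piece, \eqref{A:smooth} gives $g\in C^1$ with $\int_1^\infty g'(x)^2\,dx<\infty$, so Cauchy--Schwarz yields $\int_\varepsilon^\infty(g(x+h)-g(x))^2\,dx\le h^2\int_\varepsilon^\infty g'(x)^2\,dx=O(h^2)$, which is $o\big(h^{2\alpha+1}L_0(h)^2\big)$ since $\alpha<1/2$.

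Combining the three estimates gives $I_1(h)+I_2(h)\sim h^{2\alpha+1}L_0(h)^2\big(\tfrac{1}{2\alpha+1}+\int_0^\infty((x+1)^\alpha-x^\alpha)^2\,dx\big)$, and normalizing through the variogram identity $1-\rho_X(h)=\gamma_2(h)/(2\,\mathrm{Var}(X_t))$ produces the constant displayed in the proposition. The one genuinely delicate step is the dominated-convergence argument for the near-zero part of $I_1$: the slowly varying factor $L_0$ must be controlled uniformly in $h$ (via the Potter bounds) to justify passing the limit inside the rescaled integral, and one must simultaneously check that the rescaled domain $(0,\varepsilon/h)$ expanding to $(0,\infty)$ does not let mass escape to infinity. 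Everything else --- the isometry computation, Karamata's theorem for $I_2$, and the mean-value/Cauchy--Schwarz bound for the tail --- is routine.
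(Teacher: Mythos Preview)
The paper does not prove this proposition at all; it simply remarks that the result is a straightforward adaptation of Proposition~2.1 in \cite{BLP15} and refers the reader there. Your argument is precisely the standard route for such results and almost certainly mirrors what that reference does: reduce to the variogram via $1-\rho_X(h)=\gamma_2(h)/(2c_X(0))$, split $\gamma_2(h)/c_v$ into the singular piece $I_2(h)=\int_0^h g^2$ (handled by Karamata) and the increment piece $I_1(h)=\int_0^\infty(g(\cdot+h)-g)^2$ (handled by rescaling, Potter bounds, and dominated convergence near zero, and by the mean-value theorem together with \eqref{A:smooth} away from zero). One minor arithmetic point: carrying your own normalization through gives denominator $2\int_0^\infty g(x)^2\,dx$, not $\int_0^\infty g(x)^2\,dx$ as printed in the proposition; your derivation is correct, so this is either a harmless typo in the displayed constant or a factor of $2$ absorbed differently in the cited reference.
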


The tail behavior of the kernel function $g$ at infinity, as specified by \eqref{A:infty}, controls the long-term memory properties of $X$. We consider first the case $\lambda=0$, where the exponential damping factor in \eqref{eq:g2} disappears. In this case, the parameter $\gamma$ controls the asymptotic memory properties of the $\BSS$ process $X$.
\begin{proposition}\label{prop:BSSlm}
Suppose that the kernel function $g$ satisfies \eqref{A:infty}, with $\lambda = 0$, and $\int_0^1 g(x) dx <  \infty$. 
\begin{enumerate}[label=(\roman*),ref=\roman*,leftmargin=3em]
\item\label{asymp1} If $\gamma \in (1,\infty)$, then
\begin{align*}
\rho_X(h) \sim  \left( \frac{\int_0^{\infty}g(x)dx}{\int_0^{\infty}g(x)^2dx} \right) L_1(|h|) |h|^{-\gamma}, \quad |h| \rightarrow \infty.
\end{align*}
\item\label{asymp2} If $\gamma \in (1/2,1)$, then
\begin{align*}
\rho_X(h) \sim  \left( \frac{\int_0^{\infty} x^{-\gamma}(1+x)^{-\gamma}dx}{\int_0^{\infty}g(x)^2dx} \right) L_1(|h|)^2 |h|^{1-2\gamma}, \quad |h| \rightarrow \infty.
\end{align*}
\end{enumerate}
\end{proposition}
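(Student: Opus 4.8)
The plan is to first reduce the autocorrelation function $\rho_X$ to a purely deterministic integral of the kernel, and then to extract its large-lag asymptotics using Karamata-type arguments for the slowly varying factor $L_1$. Since $v$ is adapted and covariance-stationary, the second-order structure of $X$ involves the randomness of $v$ only through the constant $c_v := \E[v_0^2]$. I would take $h>0$ (the case $h<0$ following by symmetry) and split $X_{t+h} = \int_{-\infty}^t g(t+h-s)v_s\,dW_s + \int_t^{t+h} g(t+h-s)v_s\,dW_s$; the second integral has vanishing $\mathcal{F}_t$-conditional mean and is therefore uncorrelated with the $\mathcal{F}_t$-measurable variable $X_t$, so the It\^o isometry yields
\begin{align*}
\E[X_{t+h}X_t] = \E\Big[\int_{-\infty}^t g(t+h-s)g(t-s)v_s^2\,ds\Big] = c_v\int_0^\infty g(x)g(x+h)\,dx,
\end{align*}
after the substitution $x=t-s$ and using $\E[v_s^2]=c_v$. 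Dividing by the variance $c_v\int_0^\infty g(x)^2\,dx$, the factor $c_v$ cancels and
\begin{align*}
\rho_X(h) = \frac{r(h)}{\int_0^\infty g(x)^2\,dx}, \qquad r(h):=\int_0^\infty g(x)g(x+h)\,dx,
\end{align*}
so it remains to determine the behavior of $r(h)$ as $h\to\infty$, the denominator being a fixed positive constant.

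For part \eqref{asymp1} ($\gamma>1$), the hypotheses $\int_0^1 g<\infty$ and $\int_1^\infty x^{-\gamma}L_1(x)\,dx<\infty$ make $g$ integrable on $(0,\infty)$, so the mass of $r(h)$ concentrates at bounded $x$. I would write $r(h)/g(h)=\int_0^\infty g(x)\,g(x+h)/g(h)\,dx$ and apply dominated convergence: for fixed $x$, $g(x+h)/g(h)=(1+x/h)^{-\gamma}L_1(x+h)/L_1(h)\to 1$ by slow variation, while Potter's bound (Theorem 1.5.6(ii) of \cite{BGT}) gives, for all $h$ large, $g(x+h)/g(h)\le C(1+x)^{\varepsilon}$ with $\varepsilon$ small enough that $x\mapsto g(x)(1+x)^{\varepsilon}$ is integrable (possible since $\gamma>1$). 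This gives $r(h)\sim g(h)\int_0^\infty g(x)\,dx = L_1(|h|)\,|h|^{-\gamma}\int_0^\infty g(x)\,dx$, and dividing by the variance yields the stated asymptotics.

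For part \eqref{asymp2} ($1/2<\gamma<1$), $g$ is no longer integrable but $g^2$ is, and the mass of $r(h)$ now sits at $x$ of order $h$. I would split $r(h)=\int_0^1+\int_1^\infty$; the first piece is bounded by $\sup_{x\in(0,1)}g(x+h)\int_0^1 g(x)\,dx = O(L_1(|h|)|h|^{-\gamma})$, which is negligible relative to the claimed rate $|h|^{1-2\gamma}$ because $\gamma<1$. In the second piece both arguments exceed $1$, so I can use the tail form of $g$ and rescale $x=hu$:
\begin{align*}
\int_1^\infty g(x)g(x+h)\,dx = |h|^{1-2\gamma}L_1(h)^2\int_{1/h}^\infty u^{-\gamma}(1+u)^{-\gamma}\,\frac{L_1(hu)}{L_1(h)}\frac{L_1(h(1+u))}{L_1(h)}\,du.
\end{align*}
The ratios of slowly varying factors tend to $1$ pointwise, and Potter's bound again furnishes an integrable majorant of the form $C\max(u^{-\gamma-\varepsilon},u^{-\gamma+\varepsilon})(1+u)^{-\gamma+\varepsilon}$ (integrable at $0$ since $\gamma+\varepsilon<1$ and at $\infty$ since $2\gamma-2\varepsilon>1$). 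Dominated convergence then gives $r(h)\sim L_1(|h|)^2|h|^{1-2\gamma}\int_0^\infty u^{-\gamma}(1+u)^{-\gamma}\,du$, and dividing by $\int_0^\infty g(x)^2\,dx$ produces the result.

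The main obstacle in both parts is the uniform control required to pass the limit through the integral: the slowly varying factor $L_1$ converges to its limit only non-uniformly, and near the origin $g$ departs from its pure power form, so the integration range must be split into a bounded and a tail part with the near-origin contribution estimated separately. Potter's bounds are the right tool for producing integrable majorants in each regime, and the fact that $g$ is integrable precisely when $\gamma>1$ while $g^2$ is integrable already when $\gamma>1/2$ is exactly what separates the two decay rates $|h|^{-\gamma}$ and $|h|^{1-2\gamma}$.
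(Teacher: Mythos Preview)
Your proposal is correct and follows essentially the same route as the paper: both proofs reduce to analyzing $r(h)=\int_0^\infty g(x)g(x+h)\,dx$, handle \eqref{asymp1} by factoring out $h^{-\gamma}L_1(h)$ and applying dominated convergence (with a Potter-bound majorant), and handle \eqref{asymp2} by splitting at $x=1$, showing the near-origin piece is $O(h^{-\gamma}L_1(h))=o(h^{1-2\gamma}L_1(h)^2)$, and rescaling $x=hu$ in the tail with Potter bounds to justify dominated convergence. The only extras in your write-up are the preliminary derivation of the covariance formula via It\^o isometry (the paper simply quotes \eqref{eq:cX}) and a slightly more explicit Potter-bound majorant in part \eqref{asymp1}.
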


\begin{remark}
In the critical case $\gamma=1$, the asymptotic behavior of $\rho_X$ is indeterminate under \eqref{A:infty}, and would require additional assumptions on the slowly varying function $L_1$.
\end{remark}

\begin{remark}\label{rem:polyBSS}
It follows from Proposition \ref{prop:BSSlm} that if the kernel function $g$ satisfies \eqref{A:infty} with $\lambda = 0$, the resulting $\BSS$ process will --- up to a slowly varying factor --- have an autocorrelation function which decays polynomially as $h\rightarrow \infty$. Indeed, letting the rate of polynomial decay be denoted by $\beta$ as in \eqref{eq:assB}, we see that $\gamma = \beta$ when $\gamma > 1$, while $\beta = 2\gamma-1$ when $\gamma \in (1/2,1)$. From this it also follows that if $\gamma \in (1/2,1)$, then
\begin{align*}
\int_0^{\infty} \rho_X(h)dh = \infty,
\end{align*}
i.e., $X$ has the long memory property. 
\end{remark}

In contrast to the case $\lambda=0$, the assumption \eqref{A:infty} with $\lambda>0$ allows for models where autocorrelations decay to zero exponentially fast, leading to short memory, as shown in the following result.
\begin{proposition}\label{prop:BSSsm}
Suppose that the kernel function $g$ satisfies \eqref{A:infty} with $\lambda > 0$ and \mikkel{$\gamma \in \R$}, such that $\int_0^1 g(x) dx <  \infty$. Then
\begin{align*}
\rho_X(h) \sim  \left( \frac{\int_0^{\infty} g(x)e^{-\lambda x}dx}{\int_0^{\infty}g(x)^2dx} \right) e^{-\lambda |h|} |h|^{-\gamma} L_1(|h|), \quad  |h| \rightarrow \infty.
\end{align*}
\end{proposition}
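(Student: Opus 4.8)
The plan is to start from the autocovariance representation of the $\BSS$ process, which is common to Propositions \ref{prop:BSSa}--\ref{prop:BSSsm}, and then to isolate the exponential factor before analyzing the residual integral by dominated convergence. Since $X$ is centered and $v$ is adapted and covariance-stationary, the It\^o isometry (the cross term over $[t,t+h]$ vanishes by adaptedness and the martingale property) gives, for $h>0$,
\[
R(h) := Cov(X_t, X_{t+h}) = c_v \int_0^\infty g(u) g(u+h)\, du, \qquad R(0) = c_v \int_0^\infty g(u)^2\, du,
\]
with $c_v = \E[v_0^2]$; this is precisely the representation underlying Proposition \ref{prop:BSSlm}, cf.\ also Proposition 2.1 of \cite{BLP15}. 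Hence $\rho_X(h) = R(h)/R(0)$ and the claim reduces to the asymptotics of $I(h) := \int_0^\infty g(u) g(u+h)\, du$ as $h \to \infty$.

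For $h>1$ the argument $u+h$ always exceeds $1$, so \eqref{A:infty} applies and $g(u+h) = e^{-\lambda(u+h)}(u+h)^{-\gamma} L_1(u+h)$. Factoring out $e^{-\lambda h}$ and dividing by $h^{-\gamma} L_1(h)$, it suffices to show
\[
J(h) := \int_0^\infty g(u)\, e^{-\lambda u} \Big(1 + \tfrac{u}{h}\Big)^{-\gamma} \frac{L_1(u+h)}{L_1(h)}\, du \longrightarrow \int_0^\infty g(u)\, e^{-\lambda u}\, du, \qquad h \to \infty.
\]
Pointwise in $u$, the factor $(1+u/h)^{-\gamma} \to 1$, while the uniform convergence theorem for slowly varying functions \citep[Theorem 1.2.1]{BGT} gives $L_1(u+h)/L_1(h) \to 1$ (applied on a fixed compact neighborhood of $1$, which eventually contains $1 + u/h$). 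Thus the integrand converges to $g(u)\,e^{-\lambda u}$.

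For the dominating function, Potter's bounds \citep[Theorem 1.5.6]{BGT} yield, for an arbitrary fixed $\delta>0$ and all $h$ large, $L_1(u+h)/L_1(h) \le A(1+u/h)^{\delta} \le A(1+u)^{\delta}$; combined with $(1+u/h)^{-\gamma} \le (1+u)^{|\gamma|}$ for $h\ge 1$, the integrand is bounded uniformly in large $h$ by $A\, g(u)\, e^{-\lambda u} (1+u)^{|\gamma|+\delta}$. The crucial step, and the main obstacle, is the integrability of this bound. Near the origin it is controlled by the hypothesis $\int_0^1 g(x)\, dx < \infty$, the remaining factors being bounded there. At infinity \eqref{A:infty} gives $g(u)e^{-\lambda u} \sim e^{-2\lambda u} u^{-\gamma} L_1(u)$, and because $\lambda>0$ the factor $e^{-2\lambda u}$ dominates the polynomial $u^{|\gamma|+\delta-\gamma}$ and the slowly varying $L_1(u)$, so the bound is integrable on $[1,\infty)$. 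This is exactly where the short-memory regime $\lambda>0$ enters: the exponential decay of $g(u)e^{-\lambda u}$ simultaneously secures domination and produces the finite limiting constant.

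Dominated convergence then gives $J(h) \to \int_0^\infty g(u) e^{-\lambda u}\, du$, whence $I(h) \sim \big(\int_0^\infty g(u) e^{-\lambda u}\, du\big) e^{-\lambda h} h^{-\gamma} L_1(h)$. Dividing by $R(0)/c_v = \int_0^\infty g(u)^2\, du$ and invoking the symmetry of $\rho_X$ to replace $h$ by $|h|$ yields the stated equivalence.
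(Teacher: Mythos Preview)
Your proof is correct and follows essentially the same route as the paper: factor out $e^{-\lambda h} h^{-\gamma} L_1(h)$ from $\int_0^\infty g(x)g(x+h)\,dx$ and pass to the limit in the remaining integral by dominated convergence. The paper's proof is terser, merely noting that domination ``can be justified in the same manner as in the proof of Proposition~\ref{prop:BSSlm}'', whereas you spell out the Potter-bound argument and the role of $\lambda>0$ explicitly.
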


\begin{remark}
Assumption \eqref{A:zero} is a sufficient condition for the requirement that $\int_0^1 g(x) dx <  \infty$ in Propositions \ref{prop:BSSlm} and  \ref{prop:BSSsm}.
\end{remark}

An example of a kernel function that satisfies \eqref{A:zero}, \eqref{A:smooth}, and \eqref{A:infty}, which will be important for us later on, is the \emph{power law kernel}.
\begin{example}[Power law kernel]\label{ex:power}
Let $g$ be the power law kernel
\begin{align}\label{eq:powerLaw}
g(x) = x^{\alpha}(1+x)^{-\gamma-\alpha}, \qquad x > 0, \qquad \alpha \in \left(-\frac{1}{2},\frac{1}{2}\right), \quad \gamma \in \left(\frac{1}{2},\infty\right).
\end{align}
\citet[Example 2.2]{BLP15} show that this kernel function indeed satisfies \eqref{A:zero}, \eqref{A:smooth}, and \eqref{A:infty}. In particular, with this kernel function, the $\BSS$ process $X$ has roughness index $\alpha$ and memory properties controlled by $\gamma$, as expounded in Proposition \ref{prop:BSSlm}. In the following, we will refer to the $\BSS$ process with the power law kernel as the \emph{Power-$\BSS$} process.
\end{example}

Later we will need the correlation structure of the Power-$\BSS$ process. By covariance-sta\-tion\-arity of $v$, the autocovariance function of the general $\BSS$ process \eqref{eq:BSS} is
\begin{align}\label{eq:cX}
c_X(h) := Cov(X_t,X_{t+h}) = \E[v_0^2] \int_0^{\infty} g(x)g(x+|h|) dx, \quad h \in \R.
\end{align}
From this we deduce that when $g$ is given as in \eqref{eq:powerLaw} we have
\begin{align*}
c_X(0) := Var(X_t) &=\E[v_0^2] \int_0^{\infty} x^{2\alpha}\mikkel{(1+x)^{-2\gamma -2\alpha}} dx \\
	&=\E[v_0^2] \mathcal{B}(2\alpha + 1, \mikkel{2\gamma} -1),
\end{align*}
where $\mathcal{B}(x,y) = \int_0^1t^{x-1}(1-t)^{y-1}dt = \int_0^{\infty} t^{x-1}(1+t)^{-x-y} dt$ is the beta function \citep[e.g.,][formula 8.380.3]{integrals}. To calculate the correlation function $\rho_X(h) = c_X(h)/c_X(0)$ we resort to numerical integration of \eqref{eq:cX}. Note that $\rho_X$ does not depend on $\E[v_0^2]$.

Another example of a kernel function that satisfies equations \eqref{A:zero}, \eqref{A:smooth}, and \eqref{A:infty}, which will also be important in the sequel, is the \emph{gamma kernel}.
\begin{example}[Gamma kernel]\label{ex:gamma}
Let $g$ be the gamma kernel
\begin{align*}
g(x) = x^{\alpha}e^{-\lambda x}, \qquad x>0, \qquad \alpha \in \left(-\frac{1}{2},\frac{1}{2}\right), \quad \lambda \in \left(0,\infty\right),
\end{align*}
which satisfies \eqref{A:zero}, \eqref{A:smooth}, and \eqref{A:infty}, as shown in \citet[Example 2.1]{BLP15}. With this kernel function, the process $X$ has roughness index $\alpha$ and memory properties controlled by $\lambda$, as per Proposition \ref{prop:BSSsm}. In the following, we will call the $\BSS$ process with the gamma kernel the \emph{Gamma-$\BSS$} process.
\end{example}

We will also need the correlation structure of the Gamma-$\BSS$ process. We easily find
\begin{align*}
c_X(0) := Var(X_t) &=\E[v_0^2] \int_0^{\infty} x^{2\alpha}e^{-2\lambda} dx \\
	&= \E[v_0^2](2\lambda)^{-2\alpha-1} \Gamma (2\alpha + 1),
\end{align*}
where $\Gamma(a) = \int_0^{\infty} x^{a-1}e^{-x}dx$ is the gamma function. For general $h \in \R$, we have the autocovariance function, using \citet[formula 3.383.8]{integrals},
\begin{align*}
c_X(h) := Cov(X_{t+h},X_t) =\E[v_0^2] \frac{\Gamma(\alpha + 1)}{\sqrt{\pi}} \left( \frac{|h|}{2\lambda} \right)^{\alpha + 1/2} K_{\alpha + 1/2}(\lambda |h|),
\end{align*}
where $K_{\nu}(x)$ is the modified Bessel function of the third kind with index $\nu$, evaluated at $x$ \citep[see e.g.,][section 8.4]{integrals}, and the autocorrelation function can be computed using the identity $\rho_X(h) = c_X(h)/c_X(0)$. We note that $c_X$ is the \emph{Mat\'ern covariance function} \citep{matern60,matern93}, which is widely used in many areas of statistics, e.g., spatial statistics, geostatistics, and machine learning. 

\begin{remark}\label{rem:LMvsNLM}
These two examples of kernel functions exemplify the theoretical distinction between long and short memory. In particular, by Proposition \ref{prop:BSSsm}, the Gamma-$\BSS$ process adheres to
\begin{align*}
\rho_X(h) \sim c \cdot e^{-\lambda h} h^{\alpha}, \quad h \rightarrow \infty,
\end{align*}
i.e., it has short memory, while the Power-$\BSS$ process will have polynomially decaying ACF and, in particular, the long memory property when $\gamma < 1$, cf.\ Remark \ref{rem:polyBSS}. Although, theoretically, the Gamma-$\BSS$ process has short memory, by selecting very small values of $\lambda$, it is possible to specify processes with a very high degree of persistence, mimicking long memory on finite time intervals. Empirically, these two $\BSS$ models allow us to assess if there is any gain from using a model with bona fide long memory, as opposed to a highly persistent model with (\mikkolll{formally}) short memory, in particular in terms of forecasting accuracy.
\end{remark}

\mikkolll{
\begin{remark}
Thanks to the moving average representation derived by \cite{BNBO2011}, the fractional Ornstein-Uhlenbeck process is a special case of the $\BSS$ process \eqref{eq:BSS}, whereby the FSV and RFSV models are special cases of the general model formulated in this section. That said, we do not analyze these models for this viewpoint in the current paper due to their limitations discussed earlier.
\end{remark}}

\subsection{Implications for  volatility}
The following results show that when log volatility is a $\BSS$ process, the roughness and memory properties will carry over to the volatility process itself. A result related to Theorem \ref{th:sig}(i) below was given in \cite{CR98} in the case where $X$ is an fBm. Our results here are stated for a Gaussian $\BSS$ process, i.e., when $v_t = v > 0$ is constant for all $t$, but we conjecture that the results hold also for more general $\BSS$ processes, under suitable assumptions.

Let $\sigma = (\sigma_t)_{t \geq 0}$ be as in \eqref{eq:sig} and
\begin{align*}
\rho(h) = \text{Corr}(\sigma_{t+h},\sigma_t), \quad h \in \R.
\end{align*}
The first part of the following theorem shows that $\sigma$ inherits the roughness properties of the $\BSS$ process $X$. The second part shows that the same is true of the long-term memory properties.
\begin{theorem}\label{th:sig}
Let $\sigma$ be given by \eqref{eq:sig} where $X$ is a $\BSS$ process satisfying \eqref{A:zero}, \eqref{A:smooth}, and \eqref{A:infty}, with $v_t = v>0$ for all $t$. Then,
\begin{enumerate}[label=(\roman*),ref=\roman*,leftmargin=*]
\item\label{item:rough} as $ |h| \rightarrow 0$,
\begin{align*}
1- \rho(h) \sim c |h|^{2\alpha+1}L_0(|h|).
\end{align*}
\item\label{item:memory} as $|h| \rightarrow \infty$,
\begin{align*}
\rho(h) \sim c \cdot \rho_X(h).
\end{align*}
\end{enumerate}
\end{theorem}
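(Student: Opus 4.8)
The plan is to exploit the fact that, with $v_t = v > 0$ constant, $X$ is a centered stationary \emph{Gaussian} process, so that $\sigma = \xi e^{X}$ is lognormal and its correlation structure can be written in closed form in terms of $\rho_X$. First I would reduce to $\xi = 1$, since $\rho(h) = Corr(\sigma_{t+h},\sigma_t)$ is invariant under the scaling $\sigma_t \mapsto \xi\sigma_t$. Writing $s^2 := Var(X_t) = c_X(0) = v^2\int_0^\infty g(x)^2 dx$, which is finite by square-integrability of $g$, the pair $(X_t, X_{t+h})$ is bivariate normal with zero means, common variance $s^2$, and covariance $c_X(h) = s^2\rho_X(h)$. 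The standard moment formula for a bivariate lognormal vector then gives
\begin{align*}
Cov(\sigma_{t+h},\sigma_t) = e^{s^2}\big(e^{s^2\rho_X(h)} - 1\big), \qquad Var(\sigma_t) = e^{s^2}\big(e^{s^2} - 1\big),
\end{align*}
whence the exact identity
\begin{align}\label{eq:lognormal_rho}
\rho(h) = \frac{e^{s^2\rho_X(h)} - 1}{e^{s^2} - 1}, \qquad h \in \R .
\end{align}
Everything then reduces to an elementary asymptotic analysis of \eqref{eq:lognormal_rho}.

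For part \eqref{item:memory} I would use that $\rho_X(h) \to 0$ as $|h| \to \infty$; this is immediate from $c_X(h) = v^2\int_0^\infty g(x)g(x+|h|)dx$, since Cauchy--Schwarz bounds this by $\|g\|_2\big(\int_{|h|}^\infty g(y)^2 dy\big)^{1/2} \to 0$, and it also follows directly from Propositions \ref{prop:BSSlm} and \ref{prop:BSSsm}. As $e^{u} - 1 \sim u$ when $u \to 0$, substituting $u = s^2\rho_X(h)$ into \eqref{eq:lognormal_rho} yields $\rho(h) \sim \frac{s^2}{e^{s^2}-1}\rho_X(h)$, which is the claim with $c = s^2/(e^{s^2}-1)$.

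For part \eqref{item:rough} I would use that $\rho_X(h) \to 1$ as $|h| \to 0$, which follows from Proposition \ref{prop:BSSa}. Rewriting \eqref{eq:lognormal_rho} as
\begin{align*}
1 - \rho(h) = \frac{e^{s^2} - e^{s^2\rho_X(h)}}{e^{s^2} - 1} = \frac{e^{s^2\rho_X(h)}\big(e^{s^2(1-\rho_X(h))} - 1\big)}{e^{s^2} - 1},
\end{align*}
and combining $e^{s^2\rho_X(h)} \to e^{s^2}$ with $e^{s^2(1-\rho_X(h))} - 1 \sim s^2(1-\rho_X(h))$ as $1-\rho_X(h) \to 0$, I obtain $1 - \rho(h) \sim \frac{s^2 e^{s^2}}{e^{s^2}-1}\big(1-\rho_X(h)\big)$. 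Substituting the short-lag asymptotics of $1-\rho_X(h)$ supplied by Proposition \ref{prop:BSSa} then delivers the stated power-law-times-slowly-varying behavior.

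The computation is essentially mechanical once the Gaussian/lognormal reduction \eqref{eq:lognormal_rho} is in place, so there is no serious obstacle; the only points requiring care will be verifying the two boundary limits $\rho_X(h) \to 0$ and $\rho_X(h) \to 1$, both of which are furnished by results already established earlier in this section, and keeping track of the multiplicative constants (and the slowly varying factor) when transferring the asymptotics from $\rho_X$ to $\rho$. It is worth noting that the restriction to constant $v$ is what makes \eqref{eq:lognormal_rho} available; for genuinely non-Gaussian $v$ this closed form breaks down and one would instead have to expand the conditionally Gaussian mixture, which is precisely why the theorem is stated, and the general case only conjectured, under $v_t = v$.
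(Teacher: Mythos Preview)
Your proposal is correct and follows essentially the same approach as the paper: both reduce to $\xi=1$, derive the closed-form lognormal identity $\rho(h) = (e^{c_X(h)}-1)/(e^{c_X(0)}-1)$ from Gaussianity of $X$, and then perform an elementary asymptotic analysis based on $e^u-1\sim u$ as $u\to 0$. Your presentation is in fact slightly more streamlined --- you invoke $e^u-1\sim u$ directly, whereas the paper Taylor-expands and appeals to a separate remainder estimate --- and you additionally identify the explicit constants $c=s^2/(e^{s^2}-1)$ and $c=s^2 e^{s^2}/(e^{s^2}-1)$, which the paper leaves generic.
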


\subsection{Simulation of the stochastic volatility model}
Fast and efficient simulation of a stochastic volatility model is advantageous for a number of reasons. For instance, one might wish to conduct simulation experiments to assess the properties of the model, or one might wish to price derivatives by Monte Carlo simulation. We discuss here briefly how our model can be simulated rather easily and efficiently. \mikkolll{Numerically} efficient simulation methods for rough volatility models, such as the ones considered in this paper, should not be taken for granted, however. Rough volatility models are typically non-Markovian, depending on the entire history of the process, which makes conventional recursive simulation methods inapplicable. What is more, the possibility of non-Gaussianity of the $\BSS$ process poses further problems, as this rules out simulation methods based on Gaussianity, such as Cholesky factorization and circulant embedding methods \citep[e.g.,][Chapter XI]{AG07}.

The model to be simulated is
\begin{align*}
S_t &= S_0\exp \left( \int_0^t \sigma_s dB_s - \frac{1}{2}\int_0^t \sigma_s^2 ds \right), \\
\sigma_t &= \xi \exp \mikkolll{\left(X_t \right)},
\end{align*}
\mikkolll{where $S_0>0$,  $\xi >0$, $B$ is a Brownian motion,} and $X$ is one of our candidate models for log volatility, presented in Section \ref{sec:model}. We can simulate $S$ on a grid using Riemann-sum approximations of the integrals. To this end, we need to first simulate $B$ and $\sigma$ on the same grid, which boils down to simulating $B$ and $X$. As we typically want to make these processes correlated, to capture the leverage effect, it is necessary to simulate $B$ and $X$ jointly.

When $X$ is Gaussian, for instance a Cauchy process or a $\BSS$ process with constant volatility, it can be simulated exactly using, e.g., a Cholesky factorization of the covariance matrix of the observations \citep[][pp.\ 311--314]{AG07}. One can additionally compute the covariance structure of the Gaussian bivariate process $(B,X)$ and simulate $B$ and $X$ jointly, and in this way account for correlation between the two processes. This was the approach taken in \cite{BFG15}. However, as the authors also note, the Cholesky factorization is computationally expensive and can become even infeasible if the number of observations to be simulated is large. Instead, we recommend using the circulant embedding method \citep[][pp.\ 314--316]{AG07} in the Gaussian case or the \emph{hybrid scheme} of \cite{BLP15} in the general case. The hybrid scheme is tailor-made for $\BSS$ processes and its advantages are that (i) simulation is fast and in most cases accurate\footnote{The hybrid scheme requires truncating the integral representation \eqref{eq:BSS} ``near'' infinity. If the kernel function has extremely slow decay, this truncation may lead to some loss of accuracy, with regards to the memory properties of the process.} (although approximate), (ii) it allows for non-Gaussianity of $X$ through volatility (of volatility) modulation, and (iii) inclusion of leverage, i.e., correlation between $X$ and $B$, is straightforward. 

We refer to \cite{BLP15} for an exposition of the hybrid scheme for the $\BSS$ process $X$, defined by \eqref{eq:BSS}, under assumptions \eqref{A:zero}, \eqref{A:smooth}, and \eqref{A:infty}. The authors explain in the paper \citep[Section 3.1]{BLP15} how to incorporate correlation between $v$ and $W$, while the same procedure can be used to introduce correlation between $W$ and $B$ or, indeed, between $W$, $B$, and $v$.

\section{Estimation of the models}\label{sec:estimate}
We propose a two-step estimation procedure \mikkolll{for the models of Section \ref{sec:model}.} First, we estimate the roughness parameter $\alpha$ semiparametrically using the scaling relationship \eqref{eq:assA} and then the remaining parameters are estimated using a parametric method-of-moments approach. 

We consider two different estimators of $\alpha$, details of which are given in Section \ref{sec:estAlpha}. The first estimator is well-known in the literature \citep[e.g.,][]{CH94,GSP12} and relies on a simple OLS regression. The second estimator is novel; it relies on a non-linear least squares (NLLS) regression and is constructed so that it is robust to measurement noise in the observations. This latter feature is important in the context of this paper, where, when we apply the methods to real data, we do not have access to the spot variance process but only to a noisy estimate thereof. Section \ref{sec:mom} briefly reviews the second method-of-moments step of our proposed estimation approach, which can also be conducted in a noise-robust way.

\subsection{Estimating the roughness parameter of log volatility} \label{sec:estAlpha}
The estimation of the roughness parameter $\alpha$ from observations of the log-variance process $X_t = \log \sigma^2_t$ is well-understood. A particularly simple, and much-used, estimator can be constructed as follows. Equation \eqref{eq:assA} implies that the \emph{second order variogram} $\gamma_2(\cdot)$ satisfies
\begin{align*}
\gamma_2(h) := \mathbb{E}[(\log \sigma_{t+h} - \log \sigma_t)^2] \sim c |h|^{2\alpha +1}, \quad |h| \rightarrow 0, 
\end{align*}
for a constant $c>0$. This motivates running the ordinary least squares (OLS) regression
\begin{align}\label{eq:OLSa}
\log \hat{\gamma}_2(k) = a_0 + a_1 \log k + \epsilon_k, \quad k = 1, 2, \ldots, m,
\end{align}
where $m \in  \N$ is a bandwidth parameter, $\epsilon_k$ is an error \mikkolll{term}, and
\begin{align*}
\mikkolll{\hat{\gamma}_2(k) = \frac{1}{n-k} \sum_{i=1}^{n-k} | \log \sigma_{(i+k)\Delta} - \log \sigma_{i\Delta} |^2}
\end{align*}
is the empirical \mikkolll{variogram corresponding to} $\gamma_2(k\Delta)$. The OLS estimat\mikkolll{or} of $\alpha$ is thus $\hat{\alpha} = \hat{\alpha}_{OLS} = (\hat{a}_1 - 1)/2$, where $\hat{a}_1$ is the OLS estimat\mikkolll{or} of $a_1$ from the regression \eqref{eq:OLSa}. We will \mikkolll{henceforth} refer to this estimator of $\alpha$ as the \mikkolll{\emph{OLS estimator}}. Following \cite{bennedsen16a}, we choose a small value for the bandwidth and set $m = 6$.

Unfortunately,  in \mikkolll{reality} we do not have access to the latent volatility process $\log \sigma_t$. Instead, the available data are estimates of log volatility, $\log \hat{\sigma}_t$, see Section \ref{sec:empirical} below for further details. It is reasonable to ask whether it is possible to accurately estimate the roughness index of the latent log volatility process $\log \sigma_t$ using $\log \hat{\sigma}_{k}$. As we will see, \mikkolll{this concern is legitimate as} the OLS estimator of $\alpha$ can be severely biased in the presence of measurement noise, \mikkolll{which may arise} from using $\log \hat{\sigma}_t$ in place of $\log \sigma_t$.  To alleviate this possible bias, we \mikkolll{formulate} an estimator that is robust to additive noise in the observations. To fix ideas, suppose that the observations are related to the latent process through
\begin{align}\label{eq:u}
\log \hat{\sigma}_k = \log \sigma_{k\Delta} + u_k,
\end{align}
where $u_k$ is a noise \mikkolll{term}. If \mikkolll{the noise terms are} iid with zero mean and $u_k$ is independent of $\log \sigma_{k\Delta}$, we have
\begin{align}\label{eq:gam_star}
\gamma^*_2(h) := \mathbb{E}[(\log \hat{\sigma}_{t+h} - \log \hat{\sigma}_t)^2] = 2\sigma_u^2 + \gamma_2(h), 
\end{align}
where $\gamma_2(\cdot)$ is the second-order variogram of the latent process $\log \sigma_t$ and $\sigma_u^2$ is the variance of $u_k$. When $\sigma_u^2 > 0$,  $\log \gamma^*(h)$ is not linear in $\log h$, and the OLS regression \eqref{eq:OLSa} can not be expected to yield a consistent estimate of $\alpha$ \citep[][Proposition 3.3]{bennedsen16a}. Instead, \eqref{eq:gam_star} motivates a non-linear least least squares (NLLS) regression:
\begin{align*}
\hat{\gamma}^*_2(k) = b_0 + b_1 (k\Delta)^{2\alpha+1} + \epsilon_k, \quad k = 1, 2, \ldots, m,
\end{align*}
where $m \in  \N$ is a bandwidth parameter, $\epsilon_k$ is an error \mikkolll{term}, and
\begin{align*}
\hat{\gamma}^*_2(k) = \frac{1}{n-k} \sum_{i=1}^{n-k} | \log \hat{\sigma}_{(i+k)\Delta} - \log \hat{\sigma}_{i\Delta} |^2
\end{align*}
is the empirical \mikkolll{variogram corresponding to} $\gamma^*_2(k\Delta)$. We refer to this estimator as the \mikkolll{\emph{NLLS estimator} in the sequel}. The following result shows that the NLLS estimator is asymptotically consistent. The proof is given in Appendix \ref{app:proofs}. 

\begin{theorem}\label{th:NLLS}
Let $X$ be a \mikkolll{square-integrable} continuous process with stationary increments and variogram
\begin{align*}
\gamma_2(h) = \E[ |X_{t+h} - X_t|^2] = c_0 h^{2\alpha_0+1}, \quad h > 0,
\end{align*}
\mikkolll{with constant $c_0>0$ and roughness parameter $\alpha_0 \in (-1/2,1/2)$}. Fix $\Delta > 0$ and let
\begin{align*}
Z_{i\Delta} = X_{i\Delta} + u_i , \quad  i = 1, \ldots, n,
\end{align*}
be equidistant observations \mikkolll{such that} $u = \{u_i\}_{i=1}^n$ is a zero-mean iid sequence, independent of $X$, with finite $Var(u_1) = \sigma_u^2 \geq 0$. Denote by $\gamma_2^*(h)$ the variogram of $Z$ and \mikkolll{ let $\Theta$ be a} compact subset of $\R_+ \times \R_+ \times (-1/2,1/2)$ such that $(2\sigma_u^2,c_0,\alpha_0)$ is in the interior of $\Theta$\mikkolll{. Finally,} define the NLLS estimat\mikkolll{or} of $\theta := (a,c,\alpha)$ as
\begin{align*}
\hat{\theta} = (\hat{a},\hat{c},\hat{\alpha}) := \arg \inf_{\theta \in \Theta} \sum_{k=1}^m \left( \hat{\gamma}_2^*(k,\Delta) - a - c (k\Delta)^{2\alpha+1} \right)^2,
\end{align*}
with a fixed bandwidth $m \geq 3$, and
\begin{align*}
\hat{\gamma}_2^*(k,\Delta) := \frac{1}{n-k} \sum_{j=1}^{n-k}|Z_{(j+k)\Delta} - Z_{j\Delta}|^2, \quad k = 1, 2,  \ldots, m.
\end{align*}
Then
\begin{align*}
(\hat{a},\hat{c},\hat{\alpha}) \rightarrow (2\sigma_u^2,c_0,\alpha_0) \quad  \textnormal{in probability,} \quad \textnormal{as} \quad n \rightarrow \infty.
\end{align*}
\end{theorem}

\begin{remark}\label{rem:caveat}
The assumption that
\begin{align*}
\gamma_2(h) =  c_0 h^{2\alpha_0+1}, \quad h > 0,
\end{align*}
in Theorem \ref{th:NLLS} is \mikkolll{ somewhat idealized and does not cover} many models of interest. For instance, \mikkolll{in this paper we mostly work with models such that}
\begin{align*}
\gamma_2(h) =  h^{2\alpha_0+1}L(h), \quad h > 0,
\end{align*}
where the function $L$ is \mikkolll{slowly varying at zero, which we recall means} that $\lim_{x\rightarrow 0} \frac{L(tx)}{L(x)} = 1$ for all $t>0$. It is possible that the presence of the function $L$ could bias the NLLS estimates. However, we conjecture that such bias should typically be small, since, by the properties of slowly varying functions \citep[e.g.,][]{BGT}, it is possible to find positive constants $C_1$ and $C_2$ such that for all $\epsilon>0$ and $h$ sufficiently small,
\begin{align}\label{eq:variogram-potter}
C_1  h^{2\alpha_0+ 1+\epsilon} \leq \gamma_2(h) \leq C_2  h^{2\alpha_0+1 - \epsilon},
\end{align}
indicating that any potential bias in the estimate of $\alpha_0$ can be made arbitrarily small by setting $\Delta$ and $m$ small enough.\footnote{\mikkolll{In fact, in the context of the Gamma-$\BSS$ and Power-$\BSS$ models the function $L$ is actually bounded near the origin so \eqref{eq:variogram-potter} holds with $\epsilon=0$ then.}} Developing an asymptotic result for the NLLS estimators under these more general assumptions is beyond the scope of the present paper. However, as we will see below in the simulation studies of Section \ref{sec:sim}, we find that the presence of the function $L$ has only negligible impact on the NLLS estimates of $\alpha_0$ in settings relevant to the one considered in this paper.
\end{remark}

How to choose the bandwidth $m$ optimally for the NLLS estimator is an open problem, which is beyond the scope of this paper. Instead, in Section \ref{sec:sim1}, we use simulations to guide the choice of bandwidth in practice.

\subsection{Parametric estimation of the remaining parameters by \mikkolll{the method of moments}}\label{sec:mom}
Let $X$ be a \mikkolll{zero-mean stochastic process with unit variance,} satisfying assumption\mikkolll{s} (\ref{A:zero})--(\ref{A:infty}). Consider the model of log-volatility
\begin{align*}
\log \sigma_t = \mu + \nu X_t,
\end{align*}
for constants $\mikkolll{\mu =  \log \xi} \in \R$ and $\nu > 0$. Clearly, $\mu$ and $\nu$ can be estimated by the sample mean and sample standard deviation of $\log \sigma_t$, respectively, while the roughness parameter $\alpha$ of $X$ can be estimated using either the OLS or the NLLS estimator \mikkolll{discussed} above, applied to $\log \sigma_t$. 

The remaining parameters of \mikkolll{the} model \mikkolll{in the vector $\theta$} depend on the particular choice of parametric \mikkolll{specification of} $X$. \mikkolll{For instance, if $X$ is the Cauchy process introduced in Section \ref{sec:cauchy} then $\theta = \beta$, while if $X$ is the Power-$\BSS$ process of Example \ref{ex:power} then $\theta = \gamma$, and if $X$ is the Gamma-$\BSS$ process of Example \ref{ex:gamma} then $\theta = \lambda$.} We propose to estimate $\theta$ by fitting the empirically estimated autocorrelation function of $\log \sigma_t$ to the autocorrelation implied by the parametric model for $X$, given the first-step estimate of $\alpha$. Let the empirical autocorrelation of the data be denoted by $\hat{\rho}(h)$ and the parametric autocorrelation function implied by the model for $X$ be denoted by $\rho(h;\alpha,\theta)$. We can now define the estimator of $\theta$ as 
\begin{align*}
\hat{\theta} = \arg \min_{\theta} \sum_{i=1}^H \left(\hat\rho (h) - \rho(h;\hat \alpha;\theta)\right)^2,
\end{align*}
for some $H \geq 2$, where $\hat\alpha$ is the first-step estimat\mikkolll{or} of $\alpha$, obtained by either the OLS or NLLS \mikkolll{approach} \mikkolll{discussed in Section \ref{sec:estAlpha}}.

A noise-robust alternative to the estimator $\hat \theta$ can be constructed \mikkolll{in a straightforward manner}. Indeed, according to the model for the noisy observations \eqref{eq:u}, we can write
\begin{align*}
\rho^*(h):= Corr(\log \hat \sigma_{k+h}, \log \hat \sigma_k) &= \frac{Cov(\log \sigma_{(k+h)\Delta} + u_{k+h}, \log \sigma_{k\Delta} + u_k)}{Var(\log \sigma_{k\Delta} + u_k)} \\
	&= \frac{Cov(\log \sigma_{(k+h)\Delta}, \log \sigma_{k\Delta} )}{Var(\log \sigma_{k\Delta} ) + Var(u_k)} \\
	& = C\cdot \rho(h), 
\end{align*}
where $\rho(h) := Corr(\sigma_{t+\Delta h},\sigma_t)$, $C = (1+\gamma_{\mathrm{noise}}^2)^{-1}$ is a constant, and the noise-to-signal ratio $\gamma_{\mathrm{noise}}$ is given by
\begin{align*}
\gamma_{\mathrm{noise}}^2 = \frac{Var(u_k)}{Var(\log \sigma_t)}.
\end{align*}
Therefore, the logarithm of the correlation function of the noisy observations \mikkolll{$\log \hat \sigma_i$, $i=1,\ldots,n$,} obeys
\begin{align*}
\log \rho^*(h) = c + \log \rho (h),
\end{align*}
where $c = \log C$ is a constant. In conclusion, we can estimate $\theta$ in \mikkolll{a noise-robust} manner by $\hat \theta^*$ from the non-linear least squares regression
\begin{align*}
(\hat c, \hat{\theta}^*) = \arg \min_{(c,\theta)} \sum_{i=1}^H \left( \log \hat\rho (h) - c - \log \rho(h;\hat \alpha;\theta)\right)^2,
\end{align*}
where $\hat\alpha$ is the \mikkolll{first-step noise-robust NLLS estimator of $\alpha$, developed in Section \ref{sec:estAlpha}.}

%

\section{Simulation studies}\label{sec:sim}
The aim of this section is to investigate the finite sample properties of the estimation procedure proposed in the previous section. The Web Appendix contains \mikkolll{more extensive} simulation results.

Section \ref{sec:sim1} presents results from a small simulation study, which \mikkolll{facilitates} the choice of bandwidth parameter \mikkolll{$m$} for the NLLS estimator of $\alpha$. A brief comparison with the OLS estimator is included. Section \ref{sec:sim2} contains a more in-depth comparison of the performance of the OLS and NLLS estimators when applied to more realistic data. Section  \ref{sec:sim_lam} briefly investigates the properties of the \mikkolll{method-of-moments} estimator of the remaining parameters, presented in Section \ref{sec:mom}.

\subsection{The influence of the bandwidth parameter $m$ on the NLLS estimator}\label{sec:sim1}
To assess the effect of the bandwidth parameter $m$ for the NLLS estimator, we conduct a small simulation study. Consider the model
\begin{align*}
Y_t = \mu + B_t^{\alpha+1/2} + \kappa \epsilon_t,\quad \mikkolll{t  \geq 0,}
\end{align*}
where $\mikkolll{\mu = \log \xi} \in \R$, $\kappa \geq 0$, and \mikkolll{$B^{\alpha+1/2}$} is a fractional Brownian motion with Hurst index $H = \alpha + 1/2$, with $\alpha \in (-1/2,1/2)$, independent of the iid \mikkolll{error terms $\epsilon_t \sim N(0,1)$, $t\geq 0$}. Thus, $Y_t$ satisfies the assumptions underlying Theorem \ref{th:NLLS}. We set $\mu = 0$, $\alpha \in \{-0.4,-0.2,0,0.2\}$ and simulate $n$ observations of $Y_{i\Delta}$ with $\Delta = 0.10$ for $i = 1,2, \ldots, n$. We \mikkolll{generate} $500$ Monte Carlo replications of $Y_t$ and use the NLLS estimator to estimate $\alpha$ for various values of the bandwidth parameter $m$, the true roughness index $\alpha$, and for the noise variance $\kappa$. From these $500$ simulated instances, we record the ``optimal'' value of the bandwidth parameter \mikkolll{$m^*$}, i.e., the value of $m$ that \mikkolll{minimizes the} mean squared error of the estimates of $\alpha$ over the $500$ simulations. The results for $n \in \{1000,2000\}$,  $m\in \{3,4,\ldots, 50\}$, and $\kappa \in \{0,0.1,0.2,0.3,0.4,0.5\}$ are shown in Figure \ref{fig:mopt}.

It is clear that, as the noise level $\kappa$ increases, so does the optimal bandwidth. Likewise, the rougher the underlying process $B_t^{\alpha+1/2}$ is, i.e., the lower $\alpha$ is, the larger the optimal bandwidth $m^*$. Going from $n=1\ 000$ to $n= 2\ 000$ does not seem to have a large effect on $m^*$. For \mikkolll{relevant} values of $\alpha$, i.e.\mikkolll{,} $\alpha \leq 0$, the optimal bandwidth seems to be roughly between $m^* = 10$ and $m^* = 20$, although for values of $\alpha$ very close to $-0.50$, $m^*$ might be even larger. 

The NLLS estimator can be quite \mikkolll{sensitive to} the numerical optimization procedure \mikkolll{employed}. This, together with the uncertainty \mikkolll{over} the precise value of $m^*$ in practical applications, prompts us to suggest to average the estimates coming from several values of the bandwidth parameter $m$. (Alternatively, one could take the median of the estimates instead of the mean.) Given the \mikkolll{admittedly somewhat rudimentary} simulation study of this section, we recommend averaging over the $11$ estimates obtained from setting $m = 10, 11, \ldots, 20$. We have found that this strategy works quite well in practice, but we stress again that a more \mikkolll{systematic} way \mikkolll{of choosing} $m$ would be preferable. This, however, is beyond the scope of the present article and is left for future work.

To illustrate the performance of this strategy, and compare with the OLS estimator, Figure \ref{fig:bias0} plots the bias of the proposed NLLS estimator, i.e.\mikkolll{,} the average of the NLLS estimates from $m = 10, \ldots, 20$, as well as the bias of the OLS estimator, in the same simulation study used to construct Figure \ref{fig:mopt}. It is clear that that the OLS estimator is very biased for $\kappa>0$, while the NLLS estimator seems to alleviate the bias to a \mikkolll{great extent}. The next section further examines the properties of the two estimators in a more realistic setting.

\begin{figure}[!t] 
\centering 
\includegraphics[scale=0.92]{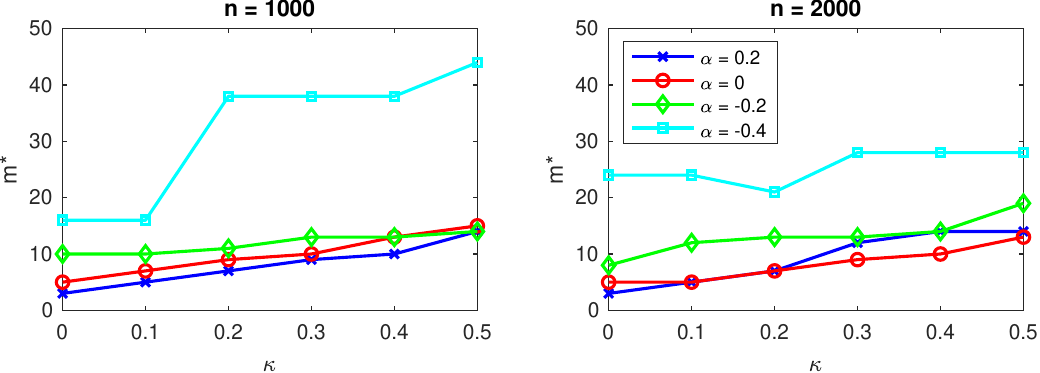}
\caption{\it Determining the optimal (\mikkolll{with respect to mean squared error}) bandwidth $m^*$  for the NLLS estimator using simulations. The simulation setup is described in Section \ref{sec:sim1}.}
\label{fig:mopt}
\end{figure}

\begin{figure}[!t] 
\centering 
\includegraphics[scale=0.92]{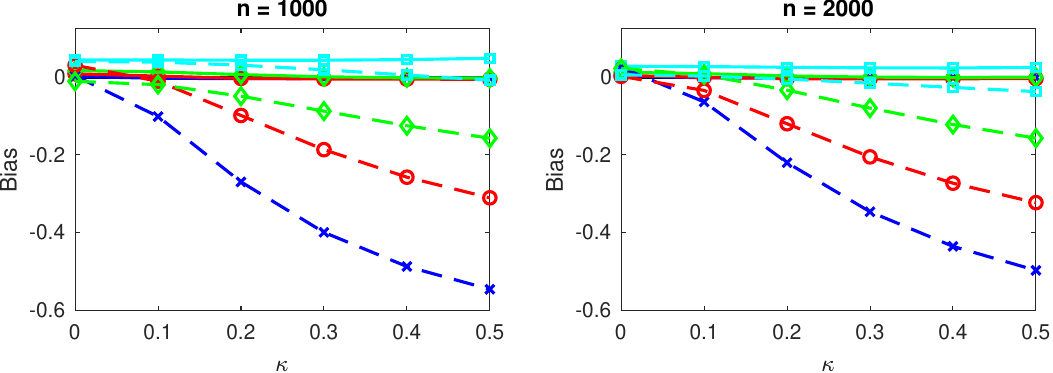}
\caption{\it Biases of the NLLS (\mikkolll{solid} lines) and OLS (dashed lines) estimators in the simulation study of Section \ref{sec:sim1}. The NLLS estimator is the average of \mikkolll{individual NLLS estimators with} bandwidths $m = 10,11, \ldots, 20$, while the OLS estimator uses $m = 6$.}
\label{fig:bias0}
\end{figure}

\subsection{Estimating the roughness index of a noisy time series}\label{sec:sim2}
Consider now the model 
\begin{align*}
Y_t = \mu + \nu X_t + \kappa \epsilon_t, \quad \mikkolll{t \geq 0,}
\end{align*}
where $\mikkolll{\mu = \log \xi}\in \R$, $\nu,\,\kappa>0$, \mikkolll{$X$ is a zero-mean stochastic process with unit variance} satisfying assumption\mikkolll{s} (\ref{A:zero})--(\ref{A:infty}), and $\epsilon_t \sim N(0,1)$\mikkolll{, $t \geq 0$,} are iid error terms. The term $u_t = \kappa \epsilon_t$ represents the measurement noise perturbing the observations of the latent process $Y_t^* = \mu + \nu X_t$. The relevant noise-to-signal ratio is denoted by $s  = \sqrt{Var(u_t)/Var(Y_t^*)} = \kappa/\nu$.

Using this setup, we want to \mikkolll{assess} the finite sample properties of the two estimators of $\alpha$ when $X$ is the Gamma-$\BSS$ process of Example \ref{ex:gamma}, \mikkolll{sampled at daily or higher frequency}. Note that, strictly speaking, this process does not \mikkolll{satisfy} the assumptions behind Theorem \ref{th:NLLS}, cf.\ Remark \ref{rem:caveat}. We set $\lambda = 0.02$, $\mu = 1$, $\nu = 1$ and simulate $n = 1\ 000$ observations with sampling frequency $\Delta>0$, mimicking $n$ observations of log volatility data, observed $1/\Delta$ times per day. We vary the noise-to-signal ratio $s \in \{0,0.1,0.2,0.3,0.4,0.5\}$ by changing the \mikkolll{standard deviation of the noise} $\kappa  \in \{0,0.1,0.2,0.3,0.4,0.5\}$. Two values of $\alpha$ are considered, namely $\alpha = 0$ and $\alpha = -0.35$, which are particularly interesting for the purposes of this paper: $\alpha=0$ corresponds to the ``standard'' semimartingale framework, while $\alpha = -0.35$ is a value often found when estimating the roughness index of realized volatility measures, see, e.g., \cite{GJR14} and Section \ref{sec:empirical} below. Two values are also considered for the sampling frequency, namely $\Delta = 1$ and $\Delta = 0.1$. The first value corresponds to daily observations, while the second value corresponds to \mikkolll{intraday} observations (roughly hourly) of realized log volatility. The value for the mean reversion parameter, $\lambda = 0.02$, corresponds to what we estimate on the S\&P $500$ data in Section \ref{sec:SP} below. We performed similar simulations for different values of $\alpha$, $\lambda$, and $\Delta$, and found very similar results\mikkolll{, which} are presented in the Web Appendix.

We simulate $500$ instances of $Y_t$ and estimate the roughness index $\alpha$ using both the OLS and the NLLS estimator. Table \ref{tab:estRes} reports the mean, median, and standard deviation (SD) of the $500$ estimates of $\alpha$ coming from each of the two estimators. As remarked above, we consider $\alpha = 0$ (Panels A and C) and $\alpha = -0.35$ (Panels B and D), as well as $\Delta = 1$ (Panels A and B) and $\Delta = 0.1$ (Panels C and D). Several things can be learned from the results. \mikkolll{Firstly}, the NLLS estimator is more \mikkolll{dispersed} than the OLS estimator, as evidenced by the larger standard deviation of the estimates coming from the NLLS estimator. \mikkolll{Secondly}, the OLS estimator quickly becomes biased when the \mikkolll{noise-to-signal} ratio $s$ increases. The NLLS estimator, however, is able to alleviate this bias to a \mikkolll{great} exten\mikkolll{t}. This is especially clear in the case of $\alpha = 0$, where the OLS estimator is very biased, while the NLLS estimator is much more accurate. When sampling infrequently, i.e.\mikkolll{,} when $\Delta = 1$, the NLLS estimator is  \mikkolll{slightly downwards biased}, however, due to the mean reversion in the process, cf. Panel A of Table \ref{tab:estRes}. This is evidence of the fact, that when the sampling frequency is low, mean reversion can  \mikkolll{resemble roughness}.  At the mean reversion levels relevant to this paper, i.e.\mikkolll{,} for very persistent processes, the bias is small, however.\footnote{\cite{JE2019} show that rough models can be approximated by a mixture of standard models with varying degrees of mean reversion, including components with arbitrarily fast mean reversion. However, such models are arguably less parsimonious and flexible than the rough models proposed in this paper.} We refer the interested reader to the Web Appendix, where this behavior is studied in more detail. 

\begin{table}
\caption{\it Monte Carlo simulation results} \vspace{.2cm} 
\scriptsize
\begin{center}
\begin{tabularx}{0.98\textwidth}{@{\extracolsep{\stretch{1}}}llcccccc@{}} 
\toprule
\multicolumn{8}{l}{Panel A: $\alpha = 0$, $\Delta = 1$}  \\  
 & $s = $ & $0$ & $0.1$ & $0.2$ & $0.3$ & $0.4$ & $0.5$ \\ 
\midrule
OLS & Mean & $  -0.01 $ & $  -0.10 $ & $  -0.24 $ & $  -0.33 $ & $  -0.38 $ & $  -0.42 $\\
 & Median & $  -0.01 $ & $  -0.10 $ & $  -0.24 $ & $  -0.33 $ & $  -0.38 $ & $  -0.42 $\\
 & SD & $   0.02 $ & $   0.02 $ & $   0.02 $ & $   0.02 $ & $   0.02 $ & $   0.01 $\\
\cmidrule{3-8}    
NLLS & Mean & $  -0.04 $ & $  -0.07 $ & $  -0.08 $ & $  -0.07 $ & $  -0.07 $ & $  -0.06 $\\
 & Median & $  -0.04 $ & $  -0.07 $ & $  -0.08 $ & $  -0.07 $ & $  -0.07 $ & $  -0.07 $\\
 & SD & $   0.04 $ & $   0.04 $ & $   0.06 $ & $   0.07 $ & $   0.08 $ & $   0.09 $\\
\end{tabularx}
\begin{tabularx}{0.98\textwidth}{@{\extracolsep{\stretch{1}}}llcccccc@{}} 
\toprule
\multicolumn{8}{l}{Panel B: $\alpha = -0.35$, $\Delta = 1$}  \\  
 & $s = $ & $0$ & $0.1$ & $0.2$ & $0.3$ & $0.4$ & $0.5$ \\ 
\midrule
OLS & Mean & $  -0.35 $ & $  -0.35 $ & $  -0.37 $ & $  -0.38 $ & $  -0.40 $ & $  -0.41 $\\
 & Median & $  -0.35 $ & $  -0.35 $ & $  -0.37 $ & $  -0.38 $ & $  -0.40 $ & $  -0.41 $\\
 & SD & $   0.02 $ & $   0.02 $ & $   0.02 $ & $   0.02 $ & $   0.02 $ & $   0.02 $\\
\cmidrule{3-8}    
NLLS & Mean & $  -0.33 $ & $  -0.33 $ & $  -0.33 $ & $  -0.33 $ & $  -0.33 $ & $  -0.33 $\\
 & Median & $  -0.34 $ & $  -0.34 $ & $  -0.35 $ & $  -0.35 $ & $  -0.36 $ & $  -0.36 $\\
 & SD & $   0.05 $ & $   0.05 $ & $   0.05 $ & $   0.06 $ & $   0.07 $ & $   0.08 $\\
\end{tabularx}
\begin{tabularx}{0.98\textwidth}{@{\extracolsep{\stretch{1}}}llcccccc@{}} 
\toprule
\multicolumn{8}{l}{Panel C: $\alpha = 0$, $\Delta = 0.1$}  \\  
 & $s = $ & $0$ & $0.1$ & $0.2$ & $0.3$ & $0.4$ & $0.5$ \\ 
\midrule
OLS & Mean & $  -0.00 $ & $  -0.33 $ & $  -0.44 $ & $  -0.47 $ & $  -0.48 $ & $  -0.49 $\\
 & Median & $  -0.00 $ & $  -0.33 $ & $  -0.44 $ & $  -0.47 $ & $  -0.48 $ & $  -0.49 $\\
 & SD & $   0.02 $ & $   0.02 $ & $   0.01 $ & $   0.01 $ & $   0.01 $ & $   0.01 $\\
\cmidrule{3-8}    
NLLS & Mean & $   0.00 $ & $  -0.01 $ & $  -0.01 $ & $   0.01 $ & $   0.02 $ & $   0.02 $\\
 & Median & $   0.00 $ & $  -0.01 $ & $  -0.01 $ & $  -0.00 $ & $   0.02 $ & $   0.05 $\\
 & SD & $   0.05 $ & $   0.07 $ & $   0.11 $ & $   0.18 $ & $   0.24 $ & $   0.28 $\\
\end{tabularx}
\begin{tabularx}{0.98\textwidth}{@{\extracolsep{\stretch{1}}}llcccccc@{}} 
\toprule
\multicolumn{8}{l}{Panel D: $\alpha = -0.35$, $\Delta = 0.1$}  \\  
 & $s = $ & $0$ & $0.1$ & $0.2$ & $0.3$ & $0.4$ & $0.5$ \\ 
\midrule
OLS & Mean & $  -0.35 $ & $  -0.36 $ & $  -0.38 $ & $  -0.40 $ & $  -0.42 $ & $  -0.44 $\\
 & Median & $  -0.35 $ & $  -0.36 $ & $  -0.38 $ & $  -0.40 $ & $  -0.42 $ & $  -0.43 $\\
 & SD & $   0.02 $ & $   0.02 $ & $   0.02 $ & $   0.02 $ & $   0.01 $ & $   0.01 $\\
\cmidrule{3-8}    
NLLS & Mean & $  -0.32 $ & $  -0.32 $ & $  -0.32 $ & $  -0.33 $ & $  -0.32 $ & $  -0.32 $\\
 & Median & $  -0.33 $ & $  -0.34 $ & $  -0.34 $ & $  -0.35 $ & $  -0.34 $ & $  -0.34 $\\
 & SD & $   0.05 $ & $   0.06 $ & $   0.06 $ & $   0.07 $ & $   0.09 $ & $   0.10 $\\
\bottomrule
\end{tabularx}
\end{center}
{\footnotesize \it Monte Carlo simulation study of the OLS and NLLS estimators of $\alpha$. DGP: Gamma-$\BSS$ with $\lambda = 0.02$. Also $\mu = 1, \nu = 1$, $n = 1\ 000$, $T = n\Delta$. Number of Monte Carlo simulations: $500$.} 
\label{tab:estRes}
\end{table}

\subsection{Estimating the remaining parameters \mikkolll{using the method of moments}}\label{sec:sim_lam}
Table \ref{tab:estRes_lam} presents the estimation results using the moment-based estimator of $\lambda$ given in Section \ref{sec:mom}. For brevity, we only report the results for $\Delta=1$\mikkolll{, as} the results for $\Delta = 0.1$ are similar and can be found in the Web Appendix. The \mikkolll{simulation procedure is identical to the one} used in Section \ref{sec:sim2} to study the estimators of $\alpha$, i.e.\mikkolll{, $X$ is a \mikkolll{Gamma-}$\BSS$ process} with $\lambda = 0.02$. \mikkolll{Panels A and B of Table \ref{tab:estRes_lam} contain the results for $\alpha = 0$ and $\alpha = -0.35$, respectively.} Mean, median and standard deviation are \mikkolll{reported using both the ordinary estimator $\hat \theta = \hat \lambda$ and the robust} estimator $\hat \theta^* = \hat \lambda^*$, cf. Section \ref{sec:mom}.

\begin{table}
\caption{\it Monte Carlo simulation results} \vspace{.2cm} 
\scriptsize
\begin{center}
\begin{tabularx}{0.98\textwidth}{@{\extracolsep{\stretch{1}}}llcccccc@{}} 
\toprule
\multicolumn{8}{l}{Panel A: $\alpha = 0$, $\Delta = 1$}  \\  
 & $s = $ & $0$ & $0.1$ & $0.2$ & $0.3$ & $0.4$ & $0.5$ \\ 
\midrule
Standard & Mean & $   0.02 $ & $   0.02 $ & $   0.02 $ & $   0.02 $ & $   0.02 $ & $   0.02 $\\
 & Median & $   0.02 $ & $   0.02 $ & $   0.02 $ & $   0.02 $ & $   0.02 $ & $   0.02 $\\
 & SD & $   0.01 $ & $   0.01 $ & $   0.01 $ & $   0.01 $ & $   0.01 $ & $   0.01 $\\
\cmidrule{3-8}    
Robust & Mean & $   0.02 $ & $   0.02 $ & $   0.02 $ & $   0.02 $ & $   0.02 $ & $   0.02 $\\
 & Median & $   0.02 $ & $   0.02 $ & $   0.02 $ & $   0.02 $ & $   0.02 $ & $   0.02 $\\
 & SD & $   0.01 $ & $   0.01 $ & $   0.01 $ & $   0.01 $ & $   0.01 $ & $   0.01 $\\
\end{tabularx}
\begin{tabularx}{0.98\textwidth}{@{\extracolsep{\stretch{1}}}llcccccc@{}} 
\toprule
\multicolumn{8}{l}{Panel B: $\alpha = -0.35$, $\Delta = 1$}  \\  
 & $s = $ & $0$ & $0.1$ & $0.2$ & $0.3$ & $0.4$ & $0.5$ \\ 
\midrule
Standard & Mean & $   0.03 $ & $   0.03 $ & $   0.03 $ & $   0.03 $ & $   0.03 $ & $   0.03 $\\
 & Median & $   0.03 $ & $   0.03 $ & $   0.03 $ & $   0.03 $ & $   0.02 $ & $   0.02 $\\
 & SD & $   0.02 $ & $   0.02 $ & $   0.02 $ & $   0.02 $ & $   0.02 $ & $   0.03 $\\
\cmidrule{3-8}    
Robust & Mean & $   0.03 $ & $   0.03 $ & $   0.03 $ & $   0.03 $ & $   0.02 $ & $   0.02 $\\
 & Median & $   0.03 $ & $   0.03 $ & $   0.02 $ & $   0.02 $ & $   0.02 $ & $   0.02 $\\
 & SD & $   0.01 $ & $   0.02 $ & $   0.02 $ & $   0.02 $ & $   0.02 $ & $   0.02 $\\
\bottomrule
\end{tabularx}
\end{center}
{\footnotesize \it Monte Carlo simulation study of the estimators of $\lambda$, see Section \ref{sec:mom}. DGP: Gamma-$\BSS$ with $\lambda = 0.02$. Also $\mu = 1, \nu = 1$, $n = 1\ 000$, $T = n\Delta$. Number of Monte Carlo simulations: $500$.} 
\label{tab:estRes_lam}
\end{table}

\section{Application to realized volatility data}\label{sec:empirical}

We \mikkolll{employ a general} model for high-frequency asset returns, \mikkolll{where} the \emph{efficient log price} $Y = (Y_t)_{t \geq 0}$ of \mikkolll{the} asset \mikkolll{follows} an It\^o semimartingale
\begin{align}\label{eq:dS}
dY_t= \mu_tdt + \sigma_t dB_t + dJ_t, \quad t\geq 0,
\end{align}
\mikkolll{specified using standard Brownian motion $B = (B_t)_{t\geq 0}$, drift process $\mu = (\mu_t)_{t\geq 0}$, jump process $J = (J_t)_{t\geq 0}$ and volatility process $\sigma = (\sigma_t)_{t\geq 0}$, satisfying the usual assumptions of adaptedness and local boundedness.} 
Since we sample prices at high frequency, we follow the standard practice of high-frequency financial econometrics by treating the data as noisy observations of the efficient log price $Y$, contaminated by \emph{market microstructure noise},
\begin{align*}
\log S_t =  Y_t + U_t, \quad t \geq 0,
\end{align*}
where $U = (U_t)_{t \geq 0}$ is a microstructure noise process.

Since our focus will be on the volatility process $\sigma$, we do not go into much detail regarding the jump process $J$ or the noise process $U$. We \mikkolll{impose} only mild \mikkolll{standard} assumptions on these processes, such that the jump- and noise-robust estimator of the volatility process, explained below, will be consistent. Sufficient, but not necessary, conditions for this are, e.g., that the jump term is given by a compound Poisson process and the noise term\mikkolll{s are} iid with zero-mean and finite fourth moment, and independent of the efficient price $Y$ \citep[][Proposition 1]{COP14}. Also, in the context of high-frequency returns, the drift process $\mu$ is empirically negligible, and will be ignored from now on. The following sections will explain the data we use for $S$, how we estimate the latent volatility process $\sigma$, and the subsequent empirical findings on this process.

We seek to extract the realized spot volatility process $(\sigma_t)_{t \in [0,T]}$, for some time horizon $T>0$,  from high-frequency observations of the asset price $S$. As $\sigma$ is not directly observable, we need to construct a proxy for it. In particular, in this section, we are interested in assessing \emph{intraday} variation of volatility. This should be contrasted with, e.g., \cite{GJR14}, who consider volatility proxies computed at daily frequency.

To this end, we first specify a step size $\Delta > 0$ such that $T=n\Delta$ for some large $n \in \N$. Then we aim to estimate the \emph{integrated variance} (IV),
\begin{align*}
IV_{t}^{\Delta} := \int_{t-\Delta}^t \sigma^2_sds, \quad t = \Delta, 2\Delta, \ldots, n\Delta.
\end{align*}
Estimators of IV have been extensively studied, prominent examples being realized variance \citep{ABDL01,BNS02a}, realized kernels \citep{BNHLS08}, two-scale estimators \citep{ZMS05}, and pre-averaging methods \citep{JLMPV09}. Except for the first, these methods are robust to market microstructure effects, which is crucial when using prices sampled at higher frequencies \citep[e.g.,][]{PeterAsgerRV06}. Further, as we will see below, the pre-averaging methods \mikkolll{can be adapted in a straightforward manner} to handle jumps in the price process, which is our main reason for choosing this particular approach.

By letting $\Delta$ be sufficiently small, \mikkolll{so that variation in the level of volatility in each time interval of size $\Delta$ is small,} we can use the proxy
\begin{align}\label{eq:spotvol}
\hat{\sigma}^2_t = \Delta^{-1} \widehat{IV}_t^{\Delta}, \quad t = \Delta, 2\Delta,\ldots, n\Delta,
\end{align}
where $\widehat{IV}_t^{\Delta}$ is an estimate of IV derived from one of the methods mentioned. A priori, we do not want to fix any specific value of the step size $\Delta$ to sample volatility, as the choice of this parameter would vary from application to another, and as there is no canonical choice when $\Delta$ is less than a day.  For this reason, we perform the subsequent analyses for various values of $\Delta$ and, \mikkolll{as we shall see}, our empirical findings are similar for essentially all intraday, as well as daily, time scales.

The proxy \eqref{eq:spotvol} can be seen as a (finite difference) time derivative of the estimate of \mikkolll{IV}. We assume that these data \mikkolll{can be described by} a model of the kind considered in the simulation study above, i.e.\mikkolll{,}
\begin{align*}
\log \hat{\sigma}_{k\Delta} = \mu + \nu X_{k\Delta} + \kappa \epsilon_k, \quad k = 1, 2, \ldots, n,
\end{align*}
where $\epsilon_k \sim N(0,1)$ is \mikkolll{a term that captures the measurement error arising from the use of} $\log \hat \sigma_k$, as given in \eqref{eq:spotvol}, in place of the true log volatility process $\log \sigma_{k\Delta}=  \mu + \nu X_{k\Delta}$. Related estimators of spot volatility have been suggested in the literature; see, e.g., \cite{kristensen10}, \cite{BJK12}, and \cite{ZB14}. In this paper we will restrict attention to \eqref{eq:spotvol} where we estimate IV \miksim{using the so-called \emph{pre-averaged bipower variation} measure, developed in \cite{JLMPV09}. In what follows, this measure will be denoted by $BV^{\Delta,*}_t$; \mikkolll{its} implementation is briefly reviewed in Appendix \ref{app:BV}. The statistic $BV^{\Delta*}$ is robust to both jumps ($J$) and market microstructure noise ($U$), which is why we choose to use this particular measure of IV. We \mikkolll{have} also conducted analogous analyses using the so-called pre-averaged realized variance $RV^{\Delta*}$ (see Appendix \ref{app:BV} for the definition)\mikkolll{, obtaining results that consistent with those in the present paper.}}\footnote{We \mikkolll{have} also \mikkolll{carried out} the analyses using the realized kernel estimator of \cite{BNHLS08}, obtaining similar results. The details can be found in the first version of this paper, available at:\ \url{https://arxiv.org/1610.00332v1}.}

Lastly, we ask what a realistic signal-to-noise ratio \mikkolll{$s$} is in the data studied here. The answer to this question is important, since it will allow us to judge whether we can rely on the promising \mikkolll{simulation} results of Section \ref{sec:sim} and, with reasonable confidence, \mikkolll{rely upon} the estimates coming from our \mikkolll{various} estimators. To \mikkolll{work out} a rough estimate of \mikkolll{$s$}, note that the related spot volatility estimator of \cite{kristensen10} adheres to \citep[][Theorem 3]{kristensen10}
\begin{align*}
\hat \sigma^2_t = \sigma_t^2 + \eta_t , \quad \eta_t \sim MN(0,2 K \sigma^4_t /n),
\end{align*}
where  \mikkolll{$MN$} denotes a mixed normal distribution, indicating that $\eta_t$ is normally distributed, conditional on $\sigma_t$, while $n$ is the number of observations used to compute the estimate $\hat \sigma^2_t$. The constant $K$ is related to the kernel function used in the estimator of  \cite{kristensen10}. For instance, $K = 0.6$ for the so-called Epanechnikov kernel and $K = 0.5$ for the uniform kernel. This implies that, in this setup,
\begin{align*}
s = \sqrt{2K/(n\Delta)}.
\end{align*}

Assuming $K \approx 1$, then, using these rough calculations, for daily IV ($\Delta = 1$ day) with prices sampled every $5$ minutes ($n = 6.5\cdot 60 /5 = 78$) we get $s = \sqrt{2 /(78\cdot1)} = 0.16$, while for hourly IV ($\Delta = 65$ minutes) with prices sampled every \mikkolll{minute} we have $s = \sqrt{2 /(65/6.5)} = 0.45$. These calculations ignore the possible presence of measurement noise, which should increase the noise-to-signal ratio. On the other hand, we use tick-by-tick data, so $n$ will be large in our applications. \mikkolll{This suggests} that $s \in [0,0.5]$ \mikkolll{is a plausible range of values} for the noise-to-signal ratio in our data, at least for $\Delta \geq 30$ minutes, say. (For smaller $\Delta$, say $10$ or $15$ minutes, the noise-to-signal ratio could be even \mikkolll{higher} than $0.5$.) We conclude that the  simulation results of Section \ref{sec:sim} \mikkolll{should} give a reasonable idea of how the estimators will behave when applied to our data. In particular, the OLS estimator \mikkolll{is prone to being} downwards biased, especially if the true value of \mikkolll{the} roughness parameter $\alpha$ is close to $0$, while the NLLS estimators should be able to alleviate this bias \mikkolll{significantly}.

\subsection{Application to the S\&P500 E-mini futures contract}\label{sec:SP}

We analyze tick-by-tick transaction data on the front month E-mini S\&P $500$ futures contract, traded on the CME Globex electronic trading platform, from January 2, 2011 until December 31, 2014 excluding weekends and holidays. After removing days which were not full trading days we arrive at a total of 996 days in our sample. As we are interested in assessing volatility at intraday time scales, we rely on there being a lot of trading activity on the underlying asset. For this reason we restrict our attention to the period of the day when most trading is taking place; this is when the \mikkolll{cash equity market} is open, from 9.30 a.m.\ until 4 p.m.\ Eastern \mikkolll{ Time (ET).}

It is well-known that intraday volatility displays significant seasonality \citep[e.g.,][]{AB97,AB98}. In particular, the ``U-shape'' is ubiquitous, where volatility is high at the opening and at the close of the market, while being lower around midday. \citep[See, e.g.,][Figure 1.]{ABKO16} When $\Delta < 1$ day, it is therefore important to control for this seasonality before performing any further analyses as subsequent estimates could be affected if one does not take this into account \citep[][]{RF15}. We use a multiplicative decomposition
\begin{align*}
\sigma_t = \sigma_t^s \tilde{\sigma}_t, \quad t \geq 0,
\end{align*}
where $\sigma^s$ is the seasonal component and $\tilde{\sigma}$ is the deseasonalized stochastic process we are interested in. To estimate $\sigma^s$ we use the \emph{flexible Fourier form} (FFF) approach of \cite{AB97,AB98}. We then estimate $\tilde{\sigma}^2_t$ by
\begin{align*}
\widehat{\tilde{\sigma}^2_t} = \Delta^{-1}\widehat{IV}^{\Delta}_t/\widehat{(\sigma_t^s)^2} = \Delta^{-1}BV^{\Delta*}_t/\widehat{(\sigma_t^s)^2}, \quad t= \Delta, 2\Delta, \ldots,n\Delta,
\end{align*}
and will from now on be working with these de-seasonalized data when $\Delta < 1$ day. Further, we abuse notation slightly and will write $\sigma$ even though we actually refer to the de-seasonalized process $\tilde{\sigma}$.

An earlier version of this paper contains an in-depth analysis of the dynamic properties of the \mikkolll{de-seasonalized} E-mini S\&P $500$ log volatility data.\footnote{Available at \url{https://arxiv.org/abs/1610.00332v2}. Detailed results are also available from the authors upon request.} The main findings are that: (i) The log volatility process appears to be \mikkolll{approximately} stationary for all values of $\Delta$; (ii) semiparametric estimates of the long-term decay rate of the autocorrelations of log volatility show evidence of very slowly decaying autocorrelations; and (iii) the empirical probability distribution of log volatility appears to deviate noticeably from the Gaussian distribution, especially for log volatility measured at intraday horizons, i.e. for $\Delta < 1$ day. Indeed, the Normal Inverse Gaussian distribution \citep[e.g.,][]{BN97} provides a good fit for these empirical probability distributions.

Panel A of Table \ref{tab:estBoth} reports the estimates of $\alpha$, $\beta$, and  $\lambda$ for the different models proposed in Section \ref{sec:model}, using the data on log volatility of the E-mini S\&P $500$ futures contracts, extracted as explained above.  We used $H = \lceil n^{1/3}\rceil$ lags in the ACF for the method-of-moments estimation procedure. While this choice \mikkolll{may seem somewhat} arbitrary, the results \mikkolll{are} robust to alternative choices. We \mikkolll{find} that both the OLS and NLLS estimators of $\alpha$ indicate that the realized measures of log-volatility are rough with $\hat \alpha \in [-0.3,-0.40]$. The estimates of $\alpha$ are slightly lower for the \mikkolll{smaller} values of $\Delta$, possibly \mikkolll{due to the idea} that the data here can be expected to \mikkolll{exhibit higher levels of} noise, \mikkolll{in relative terms,} as discussed above. The estimates of the memory parameters are very low in all cases\mikkolll{, Indeed}, when estimating the Cauchy and Power-\mikkolll{$\BSS$} models, \mikkolll{we find} that $\beta < 1$, \mikkolll{which suggests} that the data generating process might possess the long memory property. The exponential decay parameter \mikkolll{$\lambda$} of the Gamma-\mikkolll{$\BSS$} process is also estimated to be a very low value, on the order of $0.01$. Estimates of the memory parameters appear to be insensitive to the value of $\Delta$ and to whether or not a noise-robust estimation procedure is used. (In \mikkolll{Table \ref{tab:estBoth}}, an asterisk denotes a noise-robust estimate.) In conclusion, it appears that the realized measures of the log volatility of the E-mini S\&P $500$ futures data studied in this section are rough and \mikkolll{exhibit} a \mikkolll{high} degree of persistence. 

Although we have sought to alleviate the detrimental effects of measurement \mikkolll{error} by using estimators which are (theoretically) robust to noise, the conclusions of course come with the caveat that the \emph{true} stochastic volatility is unobserved. Hence, strictly speaking, \mikkolll{the above conclusions about the nature of volatility can only be drawn insofar the realized measure $\sigma_t$ is concerned.} Nevertheless, the simulation results of Section \ref{sec:sim2}, see Table \ref{tab:estRes}, seem to indicate that if the true value of the roughness index \mikkolll{were} $\alpha = 0$, the OLS estimator would be severely biased, but the NLLS estimator would not. Hence, even though the noisy nature of the data makes it impossible to pinpoint the exact value of $\alpha$ for the underlying latent volatility process, it is, according to the simulation results, somewhat implausible that the true value of $\alpha$ \mikkolll{would be} $\alpha = 0$.\footnote{\mikkolll{This conclusion, of course, has its intrinsic caveats. That is, \mikkolll{it is} predicated on \mikkolll{certain fundamental} assumptions, such as the efficient log price process being given by \eqref{eq:dS}, the noise-to-signal ratio $s$ not being  excessively large, log volatility not being subject to jumps or structural breaks, etc.}} 

Panels B and C of Table \ref{tab:estBoth} report the results from similar analyses, but now using (non-logarithmic) volatility and variance as the data, respectively.\footnote{\label{foot:PanelC}For the variance time series (Panel C), we excluded the three largest observations from the data before performing the analysis, since these observations were so extreme that the estimates of the parameters, in particular $\hat \alpha_{NLLS}^*$ appeared to be unstable. These three large observations, or “outliers”, all occur in $2011$, i.e. in the first year of the sample. We conducted the analysis with both three observations and the entire year $2011$ removed, obtaining similar results. To keep the results as comparable as possible to those for volatility and log volatility, we report them using the former approach which removes far fewer observations. The results using the latter approach are available upon request.} The estimates of the parameters obtained from the volatility (Panel B) and variance (Panel C) time series are very similar to those of the log-volatility time series (Panel A). Overall, it appears that volatility and variance largely inherit the roughness and memory properties of log volatility, as suggested by Theorem \ref{th:sig}.






\begin{table}
\caption{\it Estimation results: E-mini S\&P500 futures data}
\scriptsize
\vspace{.2cm} 
\begin{center}
\begin{tabularx}{0.995\textwidth}{@{\extracolsep{\stretch{1}}}lcccccccc@{}} 
\toprule
\multicolumn{9}{l}{Panel A: Log-volatility}  \\  
 $\Delta$ & $\hat \alpha_{OLS}$ & $\hat \alpha_{NLLS}^*$ & $\hat \beta_{Cau}$ & $\hat \beta^*_{Cau}$ & $\hat \beta_{pBSS}$ & $\hat \beta^*_{pBSS}$ & $\hat \lambda_{gBSS}$ & $\hat \lambda^*_{gBSS}$ \\ 
\midrule
$10$ minutes & $  -0.38 $ & $  -0.38 $ & $   0.17 $ & $   0.18 $ & $   0.31 $ & $   0.26 $ & $   0.03 $ & $   0.02 $\\
$15$ minutes & $  -0.35 $ & $  -0.37 $ & $   0.18 $ & $   0.18 $ & $   0.02 $ & $   0.26 $ & $   0.04 $ & $   0.03 $\\
$30$ minutes & $  -0.31 $ & $  -0.35 $ & $   0.18 $ & $   0.18 $ & $   0.02 $ & $   0.31 $ & $   0.03 $ & $   0.03 $\\
$65$ minutes& $  -0.30 $ & $  -0.35 $ & $   0.18 $ & $   0.18 $ & $   0.30 $ & $   0.31 $ & $   0.02 $ & $   0.02 $\\
$130$ minutes & $  -0.32 $ & $  -0.35 $ & $   0.00 $ & $   0.17 $ & $   0.29 $ & $   0.30 $ & $   0.02 $ & $   0.02 $\\
$1$ day & $  -0.30 $ & $  -0.33 $ & $   0.00 $ & $   0.18 $ & $   0.32 $ & $   0.34 $ & $   0.02 $ & $   0.02 $\\
\end{tabularx}
\begin{tabularx}{0.995\textwidth}{@{\extracolsep{\stretch{1}}}lcccccccc@{}} 
\toprule
\multicolumn{9}{l}{Panel B: Volatility}  \\  
 $\Delta$ & $\hat \alpha_{OLS}$ & $\hat \alpha_{NLLS}^*$ & $\hat \beta_{Cau}$ & $\hat \beta^*_{Cau}$ & $\hat \beta_{pBSS}$ & $\hat \beta^*_{pBSS}$ & $\hat \lambda_{gBSS}$ & $\hat \lambda^*_{gBSS}$ \\ 
\midrule
$10$ minutes & $  -0.33 $ & $  -0.35 $ & $   0.21 $ & $   0.21 $ & $   0.02 $ & $   0.29 $ & $   0.06 $ & $   0.03 $\\
$15$ minutes & $  -0.32 $ & $  -0.36 $ & $   0.19 $ & $   0.19 $ & $   0.02 $ & $   0.21 $ & $   0.04 $ & $   0.02 $\\
$30$ minutes & $  -0.29 $ & $  -0.38 $ & $   0.00 $ & $   0.15 $ & $   0.23 $ & $   0.23 $ & $   0.02 $ & $   0.01 $\\
$65$ minutes&  $  -0.35 $ & $  -0.36 $ & $   0.00 $ & $   0.16 $ & $   0.26 $ & $   0.27 $ & $   0.02 $ & $   0.02 $\\
$130$ minutes & $  -0.35 $ & $  -0.28 $ & $   0.22 $ & $   0.23 $ & $   0.47 $ & $   0.48 $ & $   0.04 $ & $   0.03 $\\
$1$ day & $  -0.23 $ & $  -0.31 $ & $   0.00 $ & $   0.19 $ & $   0.33 $ & $   0.35 $ & $   0.02 $ & $   0.02 $\\
\end{tabularx}
\begin{tabularx}{0.995\textwidth}{@{\extracolsep{\stretch{1}}}lcccccccc@{}} 
\toprule
\multicolumn{9}{l}{Panel C: Variance}  \\  
 $\Delta$  & $\hat \alpha_{OLS}$ & $\hat \alpha_{NLLS}^*$ & $\hat \beta_{Cau}$ & $\hat \beta^*_{Cau}$ & $\hat \beta_{pBSS}$ & $\hat \beta^*_{pBSS}$ & $\hat \lambda_{gBSS}$ & $\hat \lambda^*_{gBSS}$ \\ 
\midrule
$10$ minutes & $  -0.37 $ & $  -0.38 $ & $   0.26 $ & $   0.27 $ & $   0.73 $ & $   0.43 $ & $   0.11 $ & $   0.06 $\\
$15$ minutes & $  -0.35 $ & $  -0.37 $ & $   0.23 $ & $   0.23 $ & $   0.50 $ & $   0.27 $ & $   0.07 $ & $   0.03 $\\
$30$ minutes & $  -0.33 $ & $  -0.39 $ & $   0.00 $ & $   0.16 $ & $   0.26 $ & $   0.26 $ & $   0.02 $ & $   0.02 $\\
$65$ minutes& $  -0.36 $ & $  -0.36 $ & $   0.20 $ & $   0.21 $ & $   0.38 $ & $   0.38 $ & $   0.03 $ & $   0.03 $\\
$130$ minutes & $  -0.34 $ & $  -0.35 $ & $   0.00 $ & $   0.20 $ & $   0.36 $ & $   0.38 $ & $   0.02 $ & $   0.02 $\\
$1$ day & $  -0.35 $ & $  -0.37 $ & $   0.00 $ & $   0.16 $ & $   0.25 $ & $   0.26 $ & $   0.01 $ & $   0.01 $\\
\bottomrule 
\end{tabularx}
\end{center}
{\footnotesize \it Estimates of $\alpha$ using the OLS and NLLS estimators, as well as \mikko{method-of-moments estimates obtained by} matching the empirical ACF with the theoretical ACF in our three parametric models, the Cauchy, the Power- and Gamma-$\BSS$ models. We used \mikkel{$H = \lceil n^{1/3}\rceil$} lags \mikko{to estimate the} memory parameters $\lambda_{BSS}$, $\beta_{BSS}$, and $\beta_{Cauchy}$. \mikkel{The estimate of $\beta_{BSS}$ is calculated from the estimate of method-of-moment\mikkof{s} estimate of $\gamma$ of the Power-$\BSS$ process; that is, when $\hat{\gamma} > 1$, set $\hat{\beta}_{BSS} = \hat{\gamma}$, otherwise set $\hat{\beta}_{BSS} = 2\hat{\gamma}-1$.} Asterisks denote noise-robust estimates. The volatility series in Panels A--C are extracted from E-mini S\&P500 futures data over the period $2011$--$2014$ using the methods described in the main text (regarding Panel C, see also footnote \ref{foot:PanelC}).}
\label{tab:estBoth}
\end{table}

\subsubsection{Does roughness change over time?}\label{sec:time}
One may wonder whether the degree of roughness of volatility changes over time. This very question was also studied by \cite{GJR14}, who divided their volatility data into two parts and found tentative evidence that volatility was less rough during a period that overlapped with the financial \mikkolll{turmoil} of 2008 and 2011.

Our methodology and data allow us to investigate this question more precisely and systematically. To this end, we still use transaction data on E-mini S\&P $500$, but now over a longer period from January 3, 2005 until December 31, 2014. Figure \ref{fig:aTime} provides results of rolling-window estimation of $\alpha$ using the NLLS estimator when the volatility is proxy is constructed using $\Delta = 65$ minutes. The window length \mikkolll{is} $900$ observations, corresponding to roughly $150$ days. We \mikkolll{have also} experimented with different window sizes and different values of $\Delta$, but the results \mikkolll{are rather consistent with those presented here}. Figure \ref{fig:aTime} additionally displays the overall median over the entire period, as well as a smoothed \mikkolll{time series} of the estimates. The figure shows that $\alpha$ does indeed appear to vary in time. In particular, we observe two peaks of ``smoothness'', in the smoothed estimates, that both seem to coincide with periods of market \mikkolll{distress}. The first large and long period of smoothness \mikkolll{occurs} during the financial crisis of 2007--2008, while the second, somewhat shorter and less extreme, period of smoothness \mikkolll{takes place} during 2011, \mikkolll{at} the nadir of the Greek debt crisis.

While it seems imprudent to draw any definite conclusions from a single time series, the findings presented here, together with the empirical evidence of \cite{GJR14}, seem to indicate that volatility exhibits less roughness during periods of market turmoil, possibly due to more sustained trading \mikkolll{activity} occurring in such times. We investigate this further using \mikkolll{a large panel of} equities in Section \ref{sec:Liq} below.

\begin{figure}[!t] 
\centering 
\includegraphics[scale=0.9]{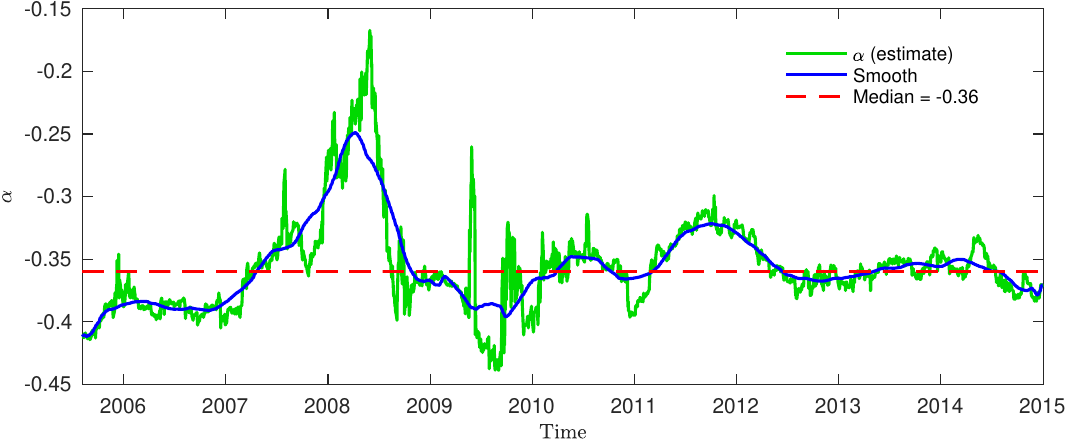} 
\caption{\it Rolling-window NLLS estimates of $\alpha$. The volatility proxy is computed using $\Delta = 65$ minutes and the smoothed version is a simple moving average filter using $75$ observations on both sides of the particular estimate.}
\label{fig:aTime}
\end{figure}

\subsection{Application to individual equities}\label{sec:stocks}
Our goal here is to study whether the findings above are only valid for the E-mini S\&P $500$ futures contract, or whether they hold true more generally. We therefore study volatility data on a large number of US equities. The data consist of daily pre-averaged bipower variation measures (i.e., $BV^{\Delta*}$ with $\Delta = 1$ day), computed using transaction prices obtained from the Trades and Quotes (TAQ) database.

The data set at our disposal runs from January 4, 1993 to December 31, 2013, while the data on some assets \mikkolll{may cover this period only partially}. In total there are $10\ 744$ assets in the sample, classified into ten industry sectors according to the Global Industry Classification Standard.\footnote{See: \url{https://msci.com/gics}} \mikkolll{Since the charactistics of US equities markets have arguably evolved substantially during the past three decades through the decimalization of prices, electronification of trading and proliferation of competing trading venues, we only work with the data over the period
from January 2, 2003 to December 31, 2013, striving for relevance in the present day and beyond.} Additionally, to ensure the reliability of the volatility estimates, we retain only the most liquid assets. Here, a \emph{liquid asset} is characterized by the following \mikkolll{subjective} criteria. 
\begin{enumerate}[label=(\alph*),ref=\alph*,leftmargin=3em]
\item
	The asset is traded on at least $400$ days.
\item
	The maximum number of days when the asset is not traded (on average) every $5$ minutes is $19$.
\item
	The estimate $BV^*$ of IV is strictly positive on every day.\footnote{Although IV is by definition never negative, $BV^*$ can become negative due to the bias correction term appearing in its definition, see Appendix \ref{app:BV}.}
\end{enumerate}
After discarding the assets that do not fulfil these criteria, we are left with 1\,944 liquid assets for our analysis.\footnote{In an earlier version of this paper, available at:\ \url{https://arxiv.org/1610.00332v1}, we \mikkolll{have} employed the realized kernel estimator of \cite{BNHLS08}, which is guaranteed to be non-negative, enabling us to analyze 5\,071 assets. The results therein are very similar to \mikkolll{those in} the present paper.}

The estimates of the roughness parameter $\alpha$ of volatility, using the noise-robust NLLS estimator, for the assets in the sample are summarized in a box plot in Figure \ref{fig:a_box}. A few features of this plot are worth highlighting. Firstly, the estimates of $\alpha$ are practically all negative, mostly in the range $(-0.45,-0.3)$, indicating pronounced roughness. Secondly, there does not seem to be significant \mikkolll{systematic} differences in the roughness estimates across sectors.

\begin{figure}[!t] 
\centering 
\includegraphics[scale=0.9]{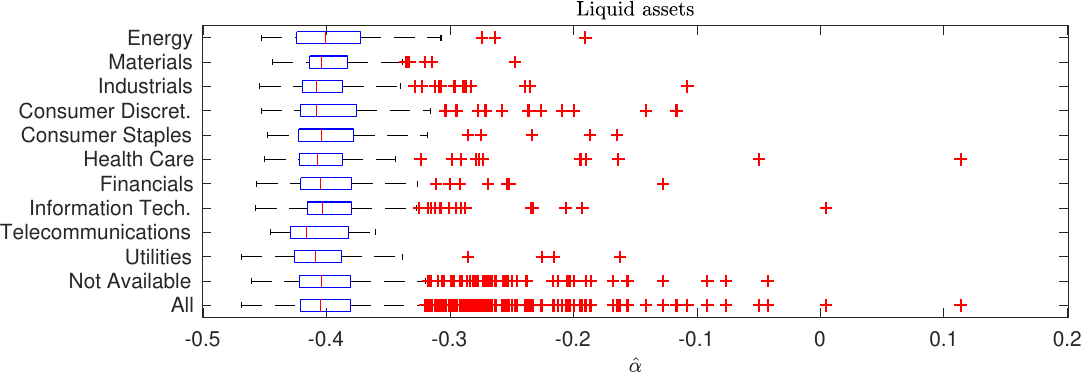}
\caption{\it Box plot for $\hat{\alpha}_{NLLS}^*$  by sector.}
\label{fig:a_box}
\end{figure}

Turning now to persistence properties, Figure \ref{fig:b_box} contains box plots of the estimates of $\beta$, here using the robust parametric estimator $\hat{\beta}_{Cauchy}^*$, based on the Cauchy model. We \mikkolll{have} obtained similar results using the estimator $\hat{\beta}_{BSS}^*$ derived from the Power-$\BSS$ process but for brevity these results are omitted. Again, our analysis confirms the findings seen earlier \mikkolll{with} the E-mini S\&P 500 volatility data. Indeed, the estimates show that log volatility is very persistent, with estimates of $\beta$ \mikkolll{predominantly} in the interval $(0,0.4)$. In particular, there is compelling evidence of very strong persistence in volatility.

\begin{figure}[!t] 
\centering 
\includegraphics[scale=0.9]{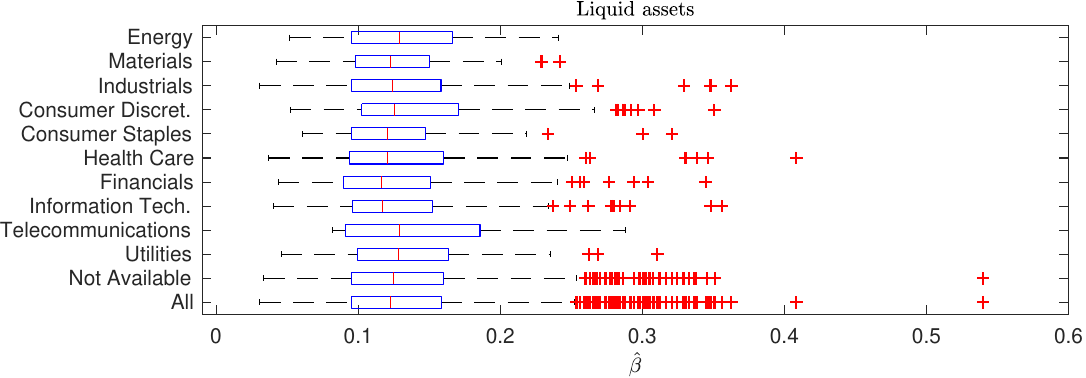}
\caption{\it Box plot for $\hat{\beta}_{Cauchy}^*$ by sector. Bandwidth $H = \lceil n^{1/3} \rceil$.}
\label{fig:b_box}
\end{figure}

The above analysis was conducted using the noise-robust estimators $\hat{\alpha}_{NLLS}^*$ and $\hat{\beta}_{Cauchy}^*$. For completeness, we present the analogous results using the non-noise-robust estimators  $\hat{\alpha}_{OLS}$ and $\hat{\beta}_{Cauchy}$ in the Web Appendix. The findings using these estimators are very similar to those shown in Figures \ref{fig:a_box} and \ref{fig:b_box}.

In summary, we find that roughness and persistence of log volatility extends \mikkolll{to the level of individual equities}. \mikkolll{By and large}, \mikko{they} appear to be \emph{universal} properties of the (logarithm of) realized measures of volatility of equities. (Again \mikko{with} the caveat that these conclusions can, strictly speaking, only be made about the observable quantities, i.e.\mikkolll{,} the realized measures, and not about the latent spot volatility process itself.)

\subsubsection{Roughness and liquidity}\label{sec:Liq}
Figure \ref{fig:aTime} examined the evolution of estimates of the roughness parameter $\alpha$, obtained from the E-mini S\&P 500 volatility data, over time. The degree of roughness appears to be \mikkolll{variable} in time and less rough periods seem to be \mikkolll{concordant with} periods of market turmoil. Given these findings on time-varying roughness, we conjecture that there is a connection between how \mikkolll{actively} an asset is traded and how rough its volatility is. In particular, we expect the volatility of a highly liquid asset to be smoother than the volatility of a less liquid asset. To \mikkolll{briefly explore this idea}, Figure \ref{fig:a_trade} plots the noise-robust NLLS estimates of $\alpha$ against a measure of liquidity,  the logarithm of the average daily volume. The fitted line indicates an increasing pattern (the slope is $0.0091$ with a standard deviation of $0.001$), lending support to the conjecture. \mikkolll{While a compelling explanation for this finding remains to be discovered in future work, it is plausible that market microstructure is at play here. In fact, \cite{JR16} and \cite{EFR16} have recently developed a theoretical framework in which trade execution algorithms that split large trades, that is, \emph{parent orders} (or \emph{metaorders}) into schedules of multiple smaller \emph{child orders} can give rise to rough volatility on aggregate. It is also worth mentioning the recent paper by \cite{GH2019}, where they demonstrate, influenced by an earlier version of the present paper, that a factor investment strategy that uses roughness of volatility as a factor, has had very favorable performance during the recent years. }

\begin{figure}[!t] 
\centering 
\includegraphics[scale=0.95]{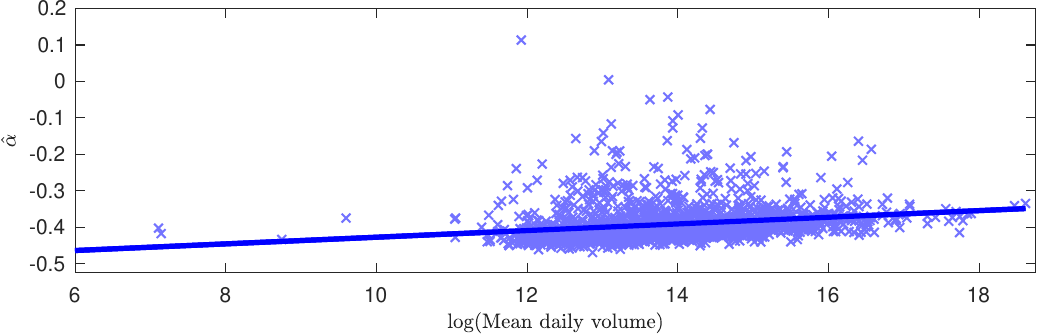}
\caption{\it $\hat{\alpha}^*_{NLLS}$ as a function of average log(mean daily volume). The crosses correspond to individual assets in the data set. The line has been fitted by OLS.}
\label{fig:a_trade}
\end{figure}

\section{Application to volatility forecasting}\label{sec:forecast}
In this section we apply the $\BSS$ and Cauchy models to forecast intraday volatility of the E-mini S\&P $500$ futures contract, comparing the results with a number of benchmark models. The benchmark models can roughly be divided into three categories:
\begin{enumerate}[label=(\roman*),ref=\roman*,leftmargin=3em]
\item\label{cat:standard} standard models,
\item\label{cat:persist} highly persistent models (possibly with long memory),
\item\label{cat:rough} rough volatility models.
\end{enumerate}
The category \eqref{cat:standard} consists of the random walk (RW), autoregressive models (AR) \mik{and an ARMA$(1,1)$ model; \eqref{cat:persist} of the (log) \emph{heterogeneous autoregressive} (log-HAR) model of \cite{corsi09} as well as two ARFIMA models; \eqref{cat:rough} contains only the rough fractional stochastic volatility (RFSV) model of \cite{GJR14}. As for the models suggested in this paper, we consider the Cauchy process and the Power-$\BSS$ and Gamma-$\BSS$ models. As discussed above, these three processes all decouple long- and short-term behavioral characteristics, so with suitable parameter values, they can be seen as members of both \eqref{cat:persist} and \eqref{cat:rough}.}\footnote{
While the $\BSS$ models are initially defined in \emph{continuous time} for the log volatility process $X_t = \log (\sigma_t/\xi)$, $t \geq 0$, cf.\ equation \eqref{eq:sig}, we can apply them, or rather their correlation structure, here in \emph{discrete time} to forecast IV via its estimate $BV^{\Delta*}$, as motivated by the approximation of $\sigma_t$ by $\hat{\sigma}^2_t = \Delta^{-1} \widehat{IV}_t^{\Delta}$, cf.\ equation \eqref{eq:spotvol}.
}

The ARMA and ARFIMA models \mikkolll{have been} estimated by maximum likelihood.\footnote{The ARMA model \mikkolll{has been} estimated and forecasted using the MFE toolbox of Kevin Sheppard, see: \url{https://www.kevinsheppard.com}. The ARFIMA models \mikkolll{has been} estimated and forecasted using the MATLAB package ``ARFIMA(p,d,q) estimator" available from MATLAB Central.} The AR models, as well as the log-HAR model, \mikkolll{have been} estimated by OLS. We apply an intraday adaptation of the standard log-HAR model. The time periods used in constructing the HAR regressors are still one day, one week, and one month, respectively, but the regressors need to be adapted to the step size $\Delta$, which may now be less than a day. More precisely, our log-HAR regression is
\begin{align}
\log \left( BV^{\Delta*}_{t+\Delta h} \right) &= a_0 + a_1 \log \left( BV^{\Delta*}_{t} \right) + a_2 \log \left( BV_{t}^{\Delta*,\textnormal{day}} \right) + a_3 \log \left( BV_{t}^{\Delta*,\textnormal{week}}  \right)  \label{eq:logHAR} \\
&+  a_4 \log \left( BV_{t}^{\Delta*,\textnormal{month}} \right) + \epsilon_{t+\Delta h}, \nonumber
\end{align}
where
\begin{align*}
BV_{t}^{\Delta*,x} := \frac{1}{q} \sum_{k=0}^{q-1} BV^{\Delta*}_{t-k\Delta}, \quad x = \textnormal{day}, \textnormal{week}, \textnormal{month},
\end{align*}
and $q$ is an integer such that $q\Delta = x$. For instance, when $\Delta = 65$ minutes then $q = 6$ for $x = \textnormal{day}$ ($6$ periods of $65$ minutes during a trading day) whereas $q = 30$ for $x = \textnormal{week}$ ($5$ trading days, each consisting of $6$ periods, in a week).\footnote{Although not indicated here, in the variables entering the log-HAR model, all pre-averaged estimates of integrated variance are de-seasonalized when we estimate and forecast the model for $\Delta <$ 1 day. The seasonal factor is re-introduced after estimation and forecasting.}

The estimate of the Hurst index $H$ in the RFSV model is derived via $\hat{H} = \hat{\alpha}_{OLS} + 0.5$, and for the Cauchy and $\BSS$ processes we use the parametric estimates of $\beta$, $\gamma$, and $\lambda$ along with $\hat{\alpha}_{OLS}$. Forecasting the AR and log-HAR models is standard. To forecast the RFSV model, we use the the following approximation \citep[cf.][equation (5.1)]{GJR14}
\begin{multline*}
\E\big[\log \sigma_{t+h\Delta }^2 \big| \mathcal{F}_t\big] \\
\begin{aligned}
& \approx \frac{\cos(H\pi)}{\pi} (h\Delta )^{H+1/2} \int_{-\infty}^t \frac{\log \sigma_s^2}{(t-s+ h\Delta)(t-s)^{H+1/2}}ds \\
	&=  \frac{\cos(H\pi)}{\pi} (h\Delta )^{H+1/2} \sum_{j=1}^{\infty} \int_{t- j\Delta}^{t-(j-1)\Delta} \frac{ \log \sigma_{s}^2}{(t-s+ h\Delta)(t-s)^{H+1/2}}ds \\
	&\approx  \frac{\cos(H\pi)}{\pi} (h\Delta )^{H+1/2} \sum_{j=1}^{\infty} \log \sigma_{t-(j-1)\Delta}^2 \int_{t- j\Delta}^{t-(j-1)\Delta} \frac{1}{(t-s+ h\Delta)(t-s)^{H+1/2}}ds,
\end{aligned}
\end{multline*}
where $\mathcal{F}_t$ is the information set ($\sigma$-algebra) generated by the fBm driving the model up to time $t$. The integrals are approximated by Riemann sums.

To forecast the Cauchy and $\BSS$ processes we rely on the elementary result that for a zero-mean Gaussian random vector $(x_{t+h}, x_t, x_{t-1}, \ldots, x_{t-m})^T$ the distribution of $x_{t+h}$ conditionally on $(x_t, x_{t-1}, \ldots, x_{t-m})^T = a \in \R^{m+1}$ is
\begin{align*}
x_{t+h} | \left\{(x_t, x_{t-1}, \ldots, x_{t-m})^T = a \right\}\sim N(\mu,\xi^2),
\end{align*}
where
\begin{align*}
\mu = \Gamma_{12} \Gamma_{22}^{-1} a,
\end{align*}
with $\Gamma_{22}$ being the correlation matrix of the vector $(x_t, x_{t-1}, \ldots, x_{t-m})^T$, and
\begin{align*}
\Gamma_{12} := \left(Corr(x_{t+h},x_t),Corr(x_{t+h},x_{t-1}) , \ldots, Corr(x_{t+h},x_{t-m})\right).
\end{align*}

Since the processes we consider here are stationary, the variance $\xi^2$ of the conditional distribution is
\begin{align*}
\xi^2 = Var(x_t) \left(1-\Gamma_{12} \Gamma_{22}^{-1}\Gamma_{21}\right),
\end{align*}
where $\Gamma_{21} = \Gamma_{12}^T$.

To implement this procedure for the Cauchy and $\BSS$ models, we assume Gaussianity of the process and use these results, where the correlation matrices and vectors above are calculated from the theoretical correlation structure of the process in question, implied by the estimated parameters. When forecasting log volatility, only the conditional mean $\mu$ needs to be calculated. However, as we will argue in the next section, the conditional variance term $\xi^2$ will be important in forecasting (non-logarithmic) volatility. These results rely on $X$ having mean zero, so in our forecasting experiment we de-mean the data before conducting the experiment.\footnote{In our model for volatility, this de-meaning essentially means removing the term \mikkolll{$\log \xi = \mu$}, cf.\ equation \eqref{eq:sig}. We reintroduce this term after forecasting $X$.}

\subsection{Forecasting intraday integrated variance}\label{sec:predRaw}
The previous section lays out methods of forecasting log volatility or, equivalently, the process $X$, cf. equation \eqref{eq:sig}. However, it is in practice more relevant to forecast non-logarithmic volatility. Before presenting the forecasting results for this quantity, we briefly explain our approach.

As we are now interested in $\E [\exp( X_{t+\Delta}) |\F_{t}]$, instead of $\exp( \E [ X_{t+\Delta} |\F_{t}])$, it is worth reminding that it is a flawed strategy to simply forecast log volatility as above and then exponentiate the forecast. Indeed, by Jensen's inequality we know this approach to be biased.  However, we can often correct the exponentiated forecasts following a simple approach. For the $\BSS$ and Cauchy models we follow the strategy of the preceding section. That is, if we again assume Gaussianity, we have
\begin{align}\label{eq:condExpExp}
\E [\exp( X_{t+\Delta}) |\F_{t}] = \exp \left( \E [X_{t+\Delta} |\F_{t}] + \frac{1}{2}Var[X_{t+\Delta} |\F_{t}]\right).
\end{align}
We will then approximate the former term in the exponential function by $\mu$ and the latter by $\frac{1}{2}\xi^2$. Note that $\xi^2$ depends on the (stationary) variance of the process, $Var(x_t)$; this factor we simply estimate from the (unconditional) variance of the time series being forecasted.

As for the other models, \cite{GJR14} proposed a similar correction to their RFSV model \citep[see][Section 5.2]{GJR14}, which we use in the following. As the log-HAR model is estimated by OLS using the log $BV^{\Delta*}$ data, cf.\ equation \eqref{eq:logHAR}, we exponentiate these estimates and make a correction similar to \eqref{eq:condExpExp}, where the variance factor is estimated as the variance of the error term in the OLS regression \eqref{eq:logHAR}. The remaining models are estimated   directly using the raw (de-seasonalized) $BV^{\Delta*}$ data, so no correction is needed.

\subsubsection{Forecast setup}

We use the methodology described above to forecast integrated variance, as this is most often the object of interest in applications. Since integrated variance is not actually observable, as a feasible \emph{forecast object} (FO) we use
\begin{align}\label{eq:FO}
\text{FO}_t(\Delta,h) :=  \sum_{k=1}^{h} BV^{\Delta*}_{t+k\Delta}  \approx \int_t^{t+h\Delta} \sigma_s^2ds, \quad h = 1, 2, 5, 10, 20,
\end{align}
where $BV^{\Delta*}_t$ is the estimated value of integrated variance using the pre-averaged bipower variation estimator, cf.\ Section \ref{sec:empirical}. 

To forecast the FO in \eqref{eq:FO}, we compute the $h$ individual components, $\widehat{\sigma^2}_{t+k\Delta|t}$, $k = 1, \ldots, h$, \mik{multiply} by $\Delta$ and the seasonal component, and sum them up:
\begin{align*}
\widehat{\text{FO}}_t(\Delta,h) = \sum_{k=1}^{h} \widehat{\sigma^2}_{t+k\Delta|t} \left(\sigma_{t+k\Delta}^s\right)^2 \Delta,
\end{align*}
where $\sigma_t^s$ is the (deterministic) seasonal component of volatility we extracted in the preliminary step of our analysis, as explained in Section \ref{sec:empirical}, and $\widehat{\sigma^2}_{t+k\Delta|t}$ is the forecast of volatility, as detailed above.\footnote{To keep the procedure realistic, in the out-of-sample experiment discussed below, the seasonal component is estimated in a non-anticipative fashion, using only the observations that would be available at the time when the forecast is produced.}

In the forecasting experiments we consider various step sizes $\Delta$, ranging from $15$ minutes to $1$ day, and various forecast horizons $h \in \{1,2,5,10,20\}$. We start the estimation after an initial period of $m \in \N$ time steps and compare the performance of the forecasts using two different loss functions:
\begin{itemize}
\item
	Mean Squared Error (MSE): 
\begin{itemize}
\item[]
MSE$(\Delta,h) = \frac{1}{n-h-m+1}\sum_{t=m}^{n-h} \left| \widehat{\text{FO}}_t(\Delta,h)  - \text{FO}_t(\Delta,h) \right|^2$,
\end{itemize}
\item
	\mikkelll{``Quasi-likelihood" (QLIKE):}
\begin{itemize}
\item[]
QLIKE$(\Delta,h)= \frac{1}{n-h-m+1}\sum_{t=m}^{n-h} \left(\log \widehat{\text{FO}}_t(\Delta,h) + \frac{\text{FO}_t(\Delta,h)}{\widehat{\text{FO}}_t(\Delta,h)}\right)$.
\end{itemize}
\end{itemize}

As discussed in Section \ref{sec:empirical}, the the pre-averaged estimate $BV^{\Delta*}$, our FO, is a noisy estimate of integrated variance, but \cite{patton11} shows that the MSE and QLIKE loss functions still yield consistent rankings of the forecasting models even for integrated variance, in spite of the noisy estimates used to evaluate the loss functions. We calculate MSE, QLIKE, and also the \emph{model confidence set} (MCS)  of \cite{MCS}, which is a procedure to construct a ``set of best models" with a certain probability, as measured by the specific loss function in question, \mikkolll{avoiding} the problems that arise from doing multiple comparisons by pairwise tests. For instance, the \emph{best} model is contained in the $90\%$ MCS --- when understood as a random set ---  with  $90\%$ probability. 

We set up our forecasting experiment so that it is realistic and mirrors the situation a practitioner would face if they were to forecast intraday volatility. At time $t$, we use only the data observed so far to estimate the models and forecast $h$ steps ahead, for $h = 1, 2, 5, 10, 20$. We then move one step forward in time, to $t+\Delta$, re-estimate the models, to allow for time variation in the parameters, and compute new forecasts $h$ steps ahead. In re-estimation, we used a rolling window of 200 observations. While this window length is somewhat arbitrary, brief experimentation with other window lengths \mikkolll{suggests} that our results are not particularly dependent on this choice. We note that, since this forecasting exercise entails that the parameters of the various models need to be re-estimated many times, we found it infeasible to use the NLLS estimator of $\alpha$ here, and we therefore use the OLS estimator instead. As shown in Table \ref{tab:estBoth}, these two estimators often yield similar estimates when applied to this data set, so we expect the OLS estimator to be adequate for the purposes of this forecasting exercise.

We assess the forecasting performance of the different models over two time periods using the E-mini S\&P $500$ data set. First we forecast over a long out-of-sample period from from January 3, 2005 to December 31, 2014, using the the entire data set at our disposal (cf.\ Section \ref{sec:time}, where we presented evidence of the roughness parameter varying in time). This period contains a couple of abrupt market crashes \mikkolll{(e.g., the bankruptcy of Lehman Brothers on September 15, 2008 and the Flash Crash on May 6, 2010)}, precipitated by circumstances that would have been hard to predict by virtually any model based on historical \mikkolll{volatility} data \mikkolll{only}. We observed that failing to forecast the volatility bursts during such episodes leads to disproportionately large losses with both MSE and QLIKE, and especially with MSE, these individual losses can represent a significant proportion of the total loss. This phenomenon also affects the MCS procedure, which would not be very selective over this long out-of-sample period. For this reason, we also experiment with another, shorter out-of-sample period from from January 2, 2013 to December 31, 2014. Volatility over this period is less pronounced, exemplifying ``non-stressed'' market conditions.

\subsubsection{Results of out-of-sample forecasting experiment}

Figure \ref{fig:fig1} plots the cumulative QLIKE losses for select models in the case of forecasting intraday ($\Delta = 15$ minutes) integrated variance over the long out-of-sample period from 2005 to 2014. We employ the QLIKE loss function since it is less prone to exaggerating the impact of large forecast errors than MSE \citep{patton11}. We present our results on the cumulative losses relative to those arising when the Gamma-$\BSS$ model is used. More precisely, the graphs present, as a function of time $t$,
\begin{align*}
\textnormal{Cumulative QLIKE (vs. Gamma-BSS)}^x_t := \sum_{k=1}^{\lfloor t/\Delta \rfloor}\left(\textnormal{QLIKE}(x)_k - \textnormal{QLIKE(Gamma-$\BSS$)}_k\right),
\end{align*}
where $x \in \{ \textnormal{HAR}, \textnormal{RFSV}, \textnormal{Cauchy}, \textnormal{Power-BSS}\}$ denotes the model being compared to the Gamma-$\BSS$ model, and $\textnormal{QLIKE}(x)_k$ stands for the $k$'th loss using model $x$. With this definition, positive numbers indicate that the model performs worse than the Gamma-$\BSS$ model and vice versa for negative numbers.

We observe that both the Gamma-$\BSS$ and especially the Power-$\BSS$ models perform well over the entire period. Interestingly, as the forecasting horizon $h$ is increased, the Cauchy model performs better relative to the Gamma-$\BSS$ model, suggesting that the long memory property of the Cauchy model, which is absent from the Gamma-$\BSS$ model, becomes relevant.
\begin{figure}[!t] 
\centering 
\includegraphics[scale=0.9]{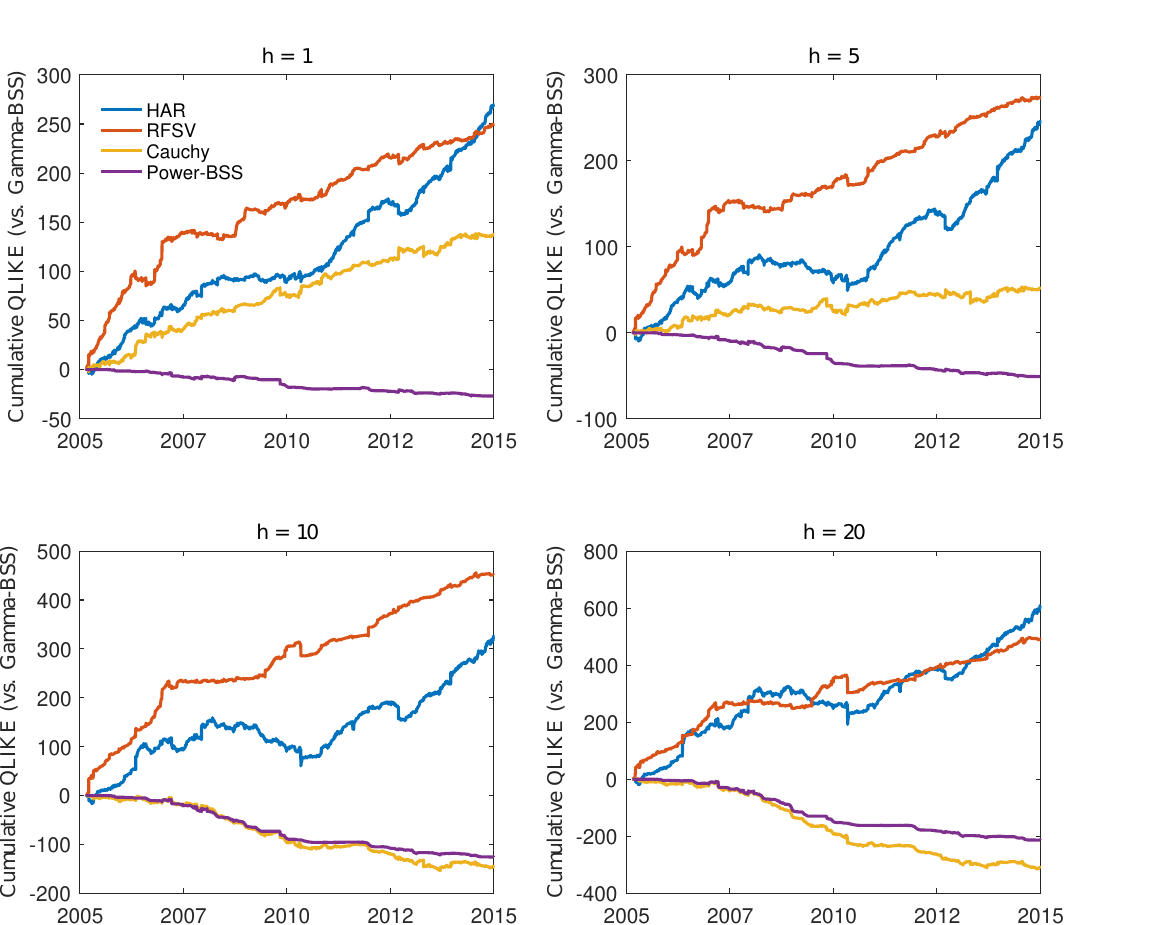}
\caption{\it Cumulative QLIKE forecast errors through time. Here, $h$ is the forecast horizon (i.e., volatility is forecasted over a time interval of length $h\Delta$) and $\Delta = 15$ minutes.}
\label{fig:fig1}
\end{figure}

Turning now to the shorter out-of-sample period from January 2, 2013 to December 31, 2014 that represents calm market conditions, we present more detailed forecasting results for $\Delta = 15, 30, 65$ minutes and $\Delta = 1$ day in Tables \ref{tab:predRawOutES1} and \ref{tab:predRawOutES2}. There, a boldface number identifies the model with the smallest loss, while light grey background highlights the models in the $90\%$ MCS and dark grey background those in the $75\%$ MCS. The models proposed in this paper generally outperform the benchmarks over this period. Indeed, in most cases either the Cauchy or the Power-$\BSS$ model has the lowest MSE and QLIKE losses.  Further, these two models are very often in the $75\%$ MCS. These findings are most convincing for the intraday values of $\Delta$, while, interestingly, RFSV turns out to be the best performing model when $\Delta = 1$ day. (Unfortunately, the MCS is not very selective in this case, at least when using MSE as the loss function.) This result agrees with \cite{GJR14}, where the authors demonstrated that the RFSV model can outperform a range of benchmark models when forecasting daily integrated variance one step ahead. 

We remark on the perhaps surprising observation that including a long memory component seems to also improve short term forecasts. Further research into the reasons for this would be interesting and valuable. All in all, the results presented in this section suggest that it is in general advantageous to exploit the roughness of volatility in forecasting, and at intraday time scales, careful modeling of persistence can further improve the quality of forecasts.

\begin{landscape}

\begin{table*}
\caption{\it Out-of-sample forecasting of intraday \mikko{integrated variance}}
\begin{center}
\scriptsize 
 \begin{tabularx}{1.35\textwidth}{@{\extracolsep{\stretch{1}}}lc@{\hskip -0.2in}cc@{\hskip -0.2in}cc@{\hskip -0.2in}cc@{\hskip -0.2in}cc@{\hskip -0.2in}c@{}} 
 \multicolumn{11}{l}{Panel A: $\Delta = 15$ minutes}\\
\toprule
 & \multicolumn{2}{c}{$h = 1$} & \multicolumn{2}{c}{$h = 2$} & \multicolumn{2}{c}{$h = 5$} & \multicolumn{2}{c}{$h = 10$} & \multicolumn{2}{c}{$h = 20$}  \\ 
\cmidrule{2-11}
 & MSE & QLIKE & MSE & QLIKE  & MSE & QLIKE  & MSE & QLIKE  & MSE & QLIKE  \\ 
  & $\times 10^{11}$  &   & $\times 10^{11}$  &   & $\times 10^{10}$  &   & $\times 10^{9}$  &   & $\times 10^{8}$  &   \\ 
 \midrule 
RW & $  0.229\MCStwo$ & $ -12.883 $ & $  0.733 $ & $ -12.153 $ & $  0.471 $ & $ -11.147 $ & $  0.245 $ & $ -10.355 $ & $  0.102 $ & $ -9.553 $\\
AR5 & $  0.178\MCStwo$ & $ -12.928 $ & $  0.472\MCStwo$ & $ -12.214 $ & $  0.220 $ & $ -11.252 $ & $  0.079 $ & $ -10.509 $ & $  0.027 $ & $ -9.754 $\\
AR10 & $  0.187\MCStwo$ & $ -12.931 $ & $  0.494 $ & $ -12.217 $ & $  0.226 $ & $ -11.257 $ & $  0.077 $ & $ -10.514 $ & $  0.027 $ & $ -9.760 $\\
ARMA(1,1) & $  0.159\MCStwo$ & $ -12.932 $ & $  0.433 $ & $ -12.217 $ & $  0.203 $ & $ -11.255 $ & $  0.074 $ & $ -10.509 $ & $  0.027 $ & $ -9.748 $\\
log-HAR3 & $  0.149 $ & $ -12.941 $ & $  0.408 $ & $ -12.225 $ & $  0.198 $ & $ -11.260 $ & $  0.079 $ & $ -10.512 $ & $  0.038 $ & $ -9.745 $\\
ARFIMA$(0,d,0)$ & $  0.145 $ & $ -12.937 $ & $  0.393 $ & $ -12.222 $ & $  0.176 $ & $ -11.262 $ & $  0.059 $ & $ -10.522 $ & $  0.021 $ & $ -9.772 $\\
RFSV & $  0.145\MCSone$ & $ -12.947 $ & $  0.384\MCSone$ & $ -12.232 $ & $  0.170\MCStwo$ & $ -11.268 $ & $  0.057 $ & $ -10.525 $ & $  0.020 $ & $ -9.774 $\\
Cauchy & $  0.143\MCStwo$ & $ -12.948 $ & $  0.382\MCStwo$ & $ -12.234\MCStwo$ & $  0.164\MCSone$ & $ -11.274\MCSone$ & $  0.053\MCSone$ & $ -10.533\MCSone$ & $\mathbf{ 0.019}\MCSone $ & $\mathbf{-9.783}\MCSone $\\
Power-BSS & $\mathbf{ 0.142}\MCSone $ & $\mathbf{-12.950}\MCSone $ & $\mathbf{ 0.377}\MCSone $ & $\mathbf{-12.235}\MCSone $ & $\mathbf{ 0.163}\MCSone $ & $\mathbf{-11.274}\MCSone $ & $\mathbf{ 0.053}\MCSone $ & $\mathbf{-10.533}\MCSone $ & $  0.019\MCSone$ & $ -9.783\MCSone$\\
Gamma-BSS & $  0.142\MCSone$ & $ -12.949 $ & $  0.377\MCStwo$ & $ -12.234 $ & $  0.164 $ & $ -11.273 $ & $  0.054 $ & $ -10.531 $ & $  0.019 $ & $ -9.780 $\\
\bottomrule 
\end{tabularx}
 \begin{tabularx}{1.35\textwidth}{@{\extracolsep{\stretch{1}}}lc@{\hskip -0.2in}cc@{\hskip -0.2in}cc@{\hskip -0.2in}cc@{\hskip -0.2in}cc@{\hskip -0.2in}c@{}} 
 \multicolumn{11}{l}{Panel B: $\Delta = 30$ minutes}\\
\toprule
 & \multicolumn{2}{c}{$h = 1$} & \multicolumn{2}{c}{$h = 2$} & \multicolumn{2}{c}{$h = 5$} & \multicolumn{2}{c}{$h = 10$} & \multicolumn{2}{c}{$h = 20$}  \\ 
\cmidrule{2-11}
 & MSE & QLIKE & MSE & QLIKE  & MSE & QLIKE  & MSE & QLIKE  & MSE & QLIKE  \\ 
  & $\times 10^{11}$  &   & $\times 10^{10}$  &   & $\times 10^{9}$  &   & $\times 10^{9}$  &   & $\times 10^{8}$  &   \\ 
 \midrule 
RW & $  0.596 $ & $ -12.181 $ & $  0.220 $ & $ -11.435 $ & $  0.170 $ & $ -10.404 $ & $  0.612 $ & $ -9.617 $ & $  0.264 $ & $ -8.874 $\\
AR5 & $  0.500 $ & $ -12.213 $ & $  0.154 $ & $ -11.490 $ & $  0.074 $ & $ -10.512 $ & $  0.259 $ & $ -9.758 $ & $  0.113 $ & $ -9.037 $\\
AR10 & $  0.501 $ & $ -12.214 $ & $  0.151 $ & $ -11.491 $ & $  0.073 $ & $ -10.513 $ & $  0.262 $ & $ -9.761 $ & $  0.117 $ & $ -9.043 $\\
ARMA(1,1) & $  0.459 $ & $ -12.215 $ & $  0.142 $ & $ -11.492 $ & $  0.073 $ & $ -10.511 $ & $  0.261 $ & $ -9.753 $ & $  0.119 $ & $ -9.023 $\\
log-HAR3 & $  0.433\MCSone$ & $ -12.227 $ & $  0.131\MCStwo$ & $ -11.501 $ & $  0.070\MCStwo$ & $ -10.517 $ & $  0.300\MCStwo$ & $ -9.762 $ & $  0.160 $ & $ -9.032 $\\
ARFIMA$(0,d,0)$ & $  0.432 $ & $ -12.222 $ & $  0.129 $ & $ -11.498 $ & $  0.062 $ & $ -10.522 $ & $  0.218 $ & $ -9.772 $ & $  0.088 $ & $ -9.055 $\\
RFSV & $  0.438\MCSone$ & $ -12.230\MCSone$ & $  0.128\MCSone$ & $ -11.504\MCStwo$ & $  0.060\MCStwo$ & $ -10.523 $ & $  0.206\MCSone$ & $ -9.772 $ & $  0.089\MCSone$ & $ -9.059 $\\
Cauchy & $  0.418\MCSone$ & $ -12.229\MCSone$ & $  0.122\MCSone$ & $ -11.506\MCSone$ & $\mathbf{ 0.057}\MCSone $ & $\mathbf{-10.530}\MCSone $ & $\mathbf{ 0.199}\MCSone $ & $\mathbf{-9.781}\MCSone $ & $\mathbf{ 0.080}\MCSone $ & $\mathbf{-9.067}\MCSone $\\
Power-BSS & $\mathbf{ 0.416}\MCSone $ & $\mathbf{-12.231}\MCSone $ & $\mathbf{ 0.122}\MCSone $ & $\mathbf{-11.506}\MCSone $ & $  0.057\MCSone$ & $ -10.529\MCStwo$ & $  0.199\MCSone$ & $ -9.780 $ & $  0.081\MCSone$ & $ -9.065 $\\
Gamma-BSS & $  0.418\MCStwo$ & $ -12.230 $ & $  0.124\MCStwo$ & $ -11.504 $ & $  0.059\MCStwo$ & $ -10.526 $ & $  0.205\MCStwo$ & $ -9.776 $ & $  0.085\MCStwo$ & $ -9.059 $\\
\bottomrule 
\end{tabularx}
\end{center}
{\footnotesize \it Out-of-sample Mean Squared Forecast Error (MSE) and QLIKE for all models considered in the paper. Forecasting period: January 2, 2013, to December 31, 2014. Bold numbers \mikko{indicate} the model with the smallest forecast error (column-wise). The forecast \mikko{object} is the sum of realized kernel\mikko{s \eqref{eq:FO}}, approximating \mikko{integrated variance}, as explained in the text. \mikko{We vary the step size $\Delta$ and the forecast horizon $h$.} Grey cells \mikko{indicate} models which are in the Model Confidence Set (column-wise); the dark grey denotes the $75\%$ MCS, while the light grey denotes the $90\%$ MCS. The MCS uses a block bootstrap method with $25\ 000$ bootstrap replications and a block length of $6$ \mikko{time steps}.}
\label{tab:predRawOutES1}
\end{table*}
  
\end{landscape}

\begin{landscape}

\begin{table*}
\caption{\it Out-of-sample forecasting of intraday \mikko{integrated variance}}
\begin{center}
\scriptsize 
 \begin{tabularx}{1.35\textwidth}{@{\extracolsep{\stretch{1}}}lc@{\hskip -0.2in}cc@{\hskip -0.2in}cc@{\hskip -0.2in}cc@{\hskip -0.2in}cc@{\hskip -0.2in}c@{}} 
 \multicolumn{11}{l}{Panel A: $\Delta = 65$ minutes}\\
\toprule
 & \multicolumn{2}{c}{$h = 1$} & \multicolumn{2}{c}{$h = 2$} & \multicolumn{2}{c}{$h = 5$} & \multicolumn{2}{c}{$h = 10$} & \multicolumn{2}{c}{$h = 20$}  \\ 
\cmidrule{2-11}
 & MSE & QLIKE & MSE & QLIKE  & MSE & QLIKE  & MSE & QLIKE  & MSE & QLIKE  \\ 
  & $\times 10^{10}$  &   & $\times 10^{9}$  &   & $\times 10^{9}$  &   & $\times 10^{8}$  &   & $\times 10^{7}$  &   \\ 
 \midrule 
RW & $  0.314\MCStwo$ & $ -11.355 $ & $  0.147\MCStwo$ & $ -10.579 $ & $  0.675 $ & $ -9.557 $ & $  0.297 $ & $ -8.822 $ & $  0.127 $ & $ -8.079 $\\
AR5 & $  0.237\MCStwo$ & $ -11.389 $ & $  0.074\MCStwo$ & $ -10.648 $ & $  0.367\MCStwo$ & $ -9.659 $ & $  0.158\MCSone$ & $ -8.947 $ & $  0.078\MCSone$ & $ -8.226 $\\
AR10 & $  0.247\MCStwo$ & $ -11.337\MCStwo$ & $  0.076\MCStwo$ & $ -10.615\MCStwo$ & $  0.378\MCStwo$ & $ -9.661 $ & $  0.178\MCSone$ & $ -8.949 $ & $  0.093\MCSone$ & $ -8.226 $\\
ARMA(1,1) & $  0.221\MCStwo$ & $ -11.390 $ & $  0.071\MCStwo$ & $ -10.649 $ & $  0.347 $ & $ -9.656 $ & $  0.154\MCStwo$ & $ -8.935 $ & $  0.078\MCSone$ & $ -8.206 $\\
log-HAR3 & $  0.207\MCSone$ & $ -11.404\MCStwo$ & $  0.067\MCStwo$ & $ -10.657 $ & $  0.370\MCSone$ & $ -9.662 $ & $  0.178\MCSone$ & $ -8.938 $ & $  0.104\MCStwo$ & $ -8.189 $\\
ARFIMA$(0,d,0)$ & $  0.205 $ & $ -11.397 $ & $  0.063 $ & $ -10.658 $ & $  0.302 $ & $ -9.673 $ & $  0.116\MCStwo$ & $ -8.960 $ & $  0.047\MCSone$ & $ -8.240 $\\
RFSV & $  0.201\MCSone$ & $\mathbf{-11.409}\MCSone $ & $  0.062\MCStwo$ & $ -10.664 $ & $  0.281\MCSone$ & $ -9.676 $ & $  0.109\MCSone$ & $ -8.966\MCSone$ & $  0.047\MCSone$ & $\mathbf{-8.250}\MCSone $\\
Cauchy & $  0.199\MCSone$ & $ -11.406\MCStwo$ & $  0.059\MCSone$ & $ -10.668\MCSone$ & $  0.285\MCSone$ & $\mathbf{-9.683}\MCSone $ & $  0.108\MCSone$ & $\mathbf{-8.970}\MCSone $ & $\mathbf{ 0.043}\MCSone $ & $ -8.248\MCSone$\\
Power-BSS & $\mathbf{ 0.195}\MCSone $ & $ -11.409\MCSone$ & $\mathbf{ 0.058}\MCSone $ & $\mathbf{-10.668}\MCSone $ & $\mathbf{ 0.279}\MCSone $ & $ -9.682\MCSone$ & $\mathbf{ 0.107}\MCSone $ & $ -8.970\MCSone$ & $  0.044\MCSone$ & $ -8.247\MCSone$\\
Gamma-BSS & $  0.197\MCSone$ & $ -11.407\MCStwo$ & $  0.059\MCSone$ & $ -10.665 $ & $  0.282\MCSone$ & $ -9.679 $ & $  0.109\MCSone$ & $ -8.967 $ & $  0.046\MCSone$ & $ -8.242 $\\
\bottomrule 
\end{tabularx}
 \begin{tabularx}{1.35\textwidth}{@{\extracolsep{\stretch{1}}}lc@{\hskip -0.2in}cc@{\hskip -0.2in}cc@{\hskip -0.2in}cc@{\hskip -0.2in}cc@{\hskip -0.2in}c@{}} 
 \multicolumn{11}{l}{Panel B: $\Delta = 1$ day}\\
\toprule
 & \multicolumn{2}{c}{$h = 1$} & \multicolumn{2}{c}{$h = 2$} & \multicolumn{2}{c}{$h = 5$} & \multicolumn{2}{c}{$h = 10$} & \multicolumn{2}{c}{$h = 20$}  \\ 
\cmidrule{2-11}
 & MSE & QLIKE & MSE & QLIKE  & MSE & QLIKE  & MSE & QLIKE  & MSE & QLIKE  \\ 
  & $\times 10^{9}$  &   & $\times 10^{8}$  &   & $\times 10^{7}$  &   & $\times 10^{6}$  &   & $\times 10^{5}$  &   \\ 
 \midrule 
RW & $  0.614\MCSone$ & $ -9.491 $ & $  0.253\MCStwo$ & $ -8.783 $ & $  0.170\MCSone$ & $ -7.818 $ & $  0.076\MCStwo$ & $ -7.045 $ & $  0.035\MCStwo$ & $ -6.218 $\\
AR5 & $  0.717\MCSone$ & $ -9.531 $ & $  0.326\MCSone$ & $ -8.826 $ & $  0.179\MCSone$ & $ -7.882 $ & $  0.053\MCSone$ & $ -7.146 $ & $  0.013\MCStwo$ & $ -6.405\MCStwo$\\
AR10 & $  0.729\MCSone$ & $ -8.717\MCSone$ & $  0.338\MCStwo$ & $ -7.976\MCSone$ & $  0.185\MCStwo$ & $ -7.876 $ & $  0.055\MCStwo$ & $ -7.142 $ & $  0.014\MCStwo$ & $ -6.403\MCStwo$\\
ARMA(1,1) & $  0.621\MCSone$ & $ -9.537 $ & $  0.266\MCSone$ & $ -8.831 $ & $  0.232\MCSone$ & $ -7.883 $ & $  0.189\MCSone$ & $ -7.146\MCStwo$ & $  0.496\MCStwo$ & $ -6.400\MCSone$\\
log-HAR3 & $  0.506\MCSone$ & $ -9.543\MCSone$ & $  0.179\MCSone$ & $ -8.838\MCSone$ & $  0.096\MCSone$ & $ -7.890\MCSone$ & $  0.034\MCSone$ & $ -7.154\MCSone$ & $  0.011\MCStwo$ & $ -6.403\MCStwo$\\
ARFIMA$(0,d,0)$ & $  0.542\MCSone$ & $ -9.537 $ & $  0.191\MCStwo$ & $ -8.830 $ & $  0.104\MCStwo$ & $ -7.881 $ & $  0.038\MCStwo$ & $ -7.144 $ & $  0.013\MCStwo$ & $ -6.395 $\\
RFSV & $\mathbf{ 0.493}\MCSone $ & $ -9.546\MCSone$ & $\mathbf{ 0.171}\MCSone $ & $\mathbf{-8.843}\MCSone $ & $  0.091\MCSone$ & $\mathbf{-7.898}\MCSone $ & $  0.034\MCSone$ & $\mathbf{-7.160}\MCSone $ & $  0.014\MCStwo$ & $ -6.402\MCSone$\\
Cauchy & $  0.579\MCSone$ & $ -9.537 $ & $  0.195\MCSone$ & $ -8.831 $ & $  0.098\MCSone$ & $ -7.884 $ & $  0.034\MCStwo$ & $ -7.151\MCStwo$ & $  0.011 $ & $ -6.401 $\\
Power-BSS & $  0.545\MCSone$ & $ -9.543\MCSone$ & $  0.186\MCSone$ & $ -8.837\MCSone$ & $  0.095\MCSone$ & $ -7.889\MCStwo$ & $  0.033\MCSone$ & $ -7.154\MCSone$ & $  0.011\MCStwo$ & $ -6.405\MCStwo$\\
Gamma-BSS & $  0.513\MCSone$ & $\mathbf{-9.547}\MCSone $ & $  0.176\MCSone$ & $ -8.842\MCSone$ & $\mathbf{ 0.091}\MCSone $ & $ -7.895\MCSone$ & $\mathbf{ 0.032}\MCSone $ & $ -7.159\MCSone$ & $\mathbf{ 0.010}\MCSone $ & $\mathbf{-6.412}\MCSone $\\
\bottomrule 
\end{tabularx}
\end{center}
{\footnotesize \it Out-of-sample Mean Squared Forecast Error (MSE) and QLIKE for all models considered in the paper. Forecasting period: January 2, 2013, to December 31, 2014. Bold numbers \mikko{indicate} the model with the smallest forecast error (column-wise). The forecast \mikko{object} is the sum of realized kernel\mikko{s \eqref{eq:FO}}, approximating \mikko{integrated variance}, as explained in the text. \mikko{We vary the step size $\Delta$ and the forecast horizon $h$.}  Grey cells \mikko{indicate} models which are in the Model Confidence Set (column-wise); the dark grey denotes the $75\%$ MCS, while the light grey denotes the $90\%$ MCS. The MCS uses a block bootstrap method with $25\ 000$ bootstrap replications and a block length of $6$ \mikko{time steps}.} 
\label{tab:predRawOutES2}
\end{table*}
  
\end{landscape}

\section{Conclusions}\label{sec:concl}
This paper has presented stochastic models that are able to simultaneously capture roughness and persistence, which are two key empirical features found in realized volatility data. In particular, when modeling log volatility, we advocate using a stochastic process that decouples short- and long-term behavior. The Brownian semistationary process is a flexible and parsimonious example of such a process; since it also allows for easy inclusion of non-Gaussianity and the leverage effect, we argue that the Brownian semistationary process is a particularly good candidate model for stochastic log volatility.

Using our proposed two-step estimation approach, we \mikkolll{have} also presented a thorough investigation of the empirical characteristics of the volatility of the E-mini S\&P \mikkolll{500} futures contract, both \mikkolll{at daily and shorter time scales}. We find that realized measures of volatility of this contract are \mikkolll{indeed rough but also  highly persistent}. Moreover, by also looking at realized volatility data on almost two thousand individual US equities, we corroborated these findings, suggesting that both roughness and strong persistence are universal features of these types of data.

\mikkolll{It is important to note that, since realized measures are noisy estimates of integrated variance, from which working out spot volatility is a numerically ill-posed problem, our findings need not imply that the elusive spot volatility process is actually rough or persistent. However, we have demonstrated extensively that models that incorporate these two characteristics are overwhelmingly \emph{consistent} with the observable statistical features of high-frequency volatility. We have also shown that these models are capable of producing competitive volatility forecasts at daily and shorter time scales. Therefore, irrespective of whether one accepts that volatility is actually rough, we believe that ideas and models stemming from rough volatility offer a valuable alternative viewpoint to volatility modelling, and a useful set of actionable tools, complementing the existing models based on jumps and conventional stochastic volatility.}

\section*{Acknowledgements}
Previous versions of this work have been presented at the \emph{9th Annual SoFiE Conference} in Hong Kong in June 2016, at the ICMS workshop \emph{At the Frontiers of Quantitative Finance} in Edinburgh in June 2016, at the \emph{9th Bachelier World Congress} in New York in July 2016, at the \emph{New Developments in Measuring \& Forecasting Financial Volatility} conference at Duke University, Durham, NC in September 2016, at the \emph{5th Imperial--ETH Workshop on Mathematical Finance} in London in March 2017, and at the Barcelona GSE Summer Forum workshop \emph{Fractional Brownian Motion and Rough Models} in Barcelona in June 2017; and we thank the audiences for stimulating comments and questions. We also thank Jim Gatheral, Sherry Jiang, Mark Podolskij, Roberto Ren\`o, Peter Tankov, Erik Vogt, as well as two anonymous referees, and the Editor, Fabio Trojani, for helpful remarks. Our research has been partially supported by CREATES (DNRF78), funded by the Danish National Research Foundation,  by Aarhus University Research Foundation (project ``Stochastic and Econometric Analysis of Commodity Markets"), and by the Academy of Finland (project 258042).

{\small 
\bibliographystyle{chicago}
\bibliography{mybib-v7}

\begin{thebibliography}{}

\bibitem[\protect\citeauthoryear{Abi~Jaber and El~Euch}{Abi~Jaber and
  El~Euch}{2019}]{JE2019}
Abi~Jaber, E. and O.~El~Euch (2019).
\newblock Multifactor approximation of rough volatility models.
\newblock {\em SIAM Journal on Financial Mathematics\/}~{\em 10\/}(2),
  309--349.

\bibitem[\protect\citeauthoryear{Al\`os, Le\'on, and Vives}{Al\`os
  et~al.}{2007}]{ALV2007}
Al\`os, E., J.~A. Le\'on, and J.~Vives (2007).
\newblock On the short-time behavior of the implied volatility for
  jump-diffusion models with stochastic volatility.
\newblock {\em Finance and Stochastics\/}~{\em 11\/}(4), 571--589.

\bibitem[\protect\citeauthoryear{Andersen and Bollerslev}{Andersen and
  Bollerslev}{1997}]{AB97}
Andersen, T.~G. and T.~Bollerslev (1997).
\newblock Intraday periodicity and volatility persistence in financial markets.
\newblock {\em Journal of Empirical Finance\/}~{\em 4\/}(2--3), 115--158.

\bibitem[\protect\citeauthoryear{Andersen and Bollerslev}{Andersen and
  Bollerslev}{1998}]{AB98}
Andersen, T.~G. and T.~Bollerslev (1998).
\newblock {DM}-dollar volatility: Intraday activity patterns, macroeconomic
  announcements and longer run dependencies.
\newblock {\em Journal of Finance\/}~{\em 53\/}(1), 219--265.

\bibitem[\protect\citeauthoryear{Andersen, Bollerslev, Diebold, and
  Labys}{Andersen et~al.}{2001}]{ABDL01}
Andersen, T.~G., T.~Bollerslev, F.~X. Diebold, and P.~Labys (2001).
\newblock The distribution of realized exchange rate volatility.
\newblock {\em Journal of the American Statistical Association\/}~{\em
  96\/}(453), 42--55.

\bibitem[\protect\citeauthoryear{Andersen, Bollerslev, Diebold, and
  Labys}{Andersen et~al.}{2003}]{ABDL03}
Andersen, T.~G., T.~Bollerslev, F.~X. Diebold, and P.~Labys (2003).
\newblock Modeling and forecasting realized volatility.
\newblock {\em Econometrica\/}~{\em 71\/}(2), 579--625.

\bibitem[\protect\citeauthoryear{Andersen, Bondarenko, Kyle, and
  Obizhaeva}{Andersen et~al.}{2016}]{ABKO16}
Andersen, T.~G., O.~Bondarenko, A.~S. Kyle, and A.~A. Obizhaeva (2016).
\newblock Intraday trading invariance in the {E}-mini {S}\&{P} 500 futures
  market.
\newblock Working paper, available at: \url{http://ssrn.com/abstract=2693810}.

\bibitem[\protect\citeauthoryear{Asmussen and Glynn}{Asmussen and
  Glynn}{2007}]{AG07}
Asmussen, S. and P.~W. Glynn (2007).
\newblock {\em Stochastic simulation: algorithms and analysis}.
\newblock New York: Springer.

\bibitem[\protect\citeauthoryear{Baillie}{Baillie}{1996}]{Baillie1996}
Baillie, R.~T. (1996).
\newblock Long memory processes and fractional integration in econometrics.
\newblock {\em Journal of Econometrics\/}~{\em 73\/}(1), 5--59.

\bibitem[\protect\citeauthoryear{Baillie, Calonaci, Cho, and Rho}{Baillie
  et~al.}{2019}]{BCDS2019}
Baillie, R.~T., F.~Calonaci, D.~Cho, and S.~Rho (2019).
\newblock Long memory, realized volatility and heterogeneous autoregressive
  models.
\newblock {\em Journal of Time Series Analysis\/}~{\em 40\/}(4), 609--628.

\bibitem[\protect\citeauthoryear{Barndorff-Nielsen}{Barndorff-Nielsen}{1997}]{BN97}
Barndorff-Nielsen, O.~E. (1997).
\newblock Normal inverse {G}aussian distributions and stochastic volatility
  modelling.
\newblock {\em Scandinavian Journal of Statistics\/}~{\em 24\/}(1), 1--13.

\bibitem[\protect\citeauthoryear{Barndorff-Nielsen and
  Basse-O'Connor}{Barndorff-Nielsen and Basse-O'Connor}{2011}]{BNBO2011}
Barndorff-Nielsen, O.~E. and A.~Basse-O'Connor (2011).
\newblock Quasi {O}rnstein-{U}hlenbeck processes.
\newblock {\em Bernoulli\/}~{\em 17\/}(3), 916--941.

\bibitem[\protect\citeauthoryear{Barndorff-Nielsen, Benth, and
  Veraart}{Barndorff-Nielsen et~al.}{2013}]{BNBV13}
Barndorff-Nielsen, O.~E., F.~E. Benth, and A.~E.~D. Veraart (2013).
\newblock Modelling energy spot prices by volatility modulated
  {L{\'e}vy}-driven {Volterra} processes.
\newblock {\em Bernoulli\/}~{\em 19\/}(3), 803--845.

\bibitem[\protect\citeauthoryear{Barndorff-Nielsen, Corcuera, and
  Podolskij}{Barndorff-Nielsen et~al.}{2009}]{BNCP09}
Barndorff-Nielsen, O.~E., J.~M. Corcuera, and M.~Podolskij (2009).
\newblock Power variation for {Gaussian} processes with stationary increments.
\newblock {\em Stochastic Processes and their Applications\/}~{\em 119\/}(6),
  1845--1865.

\bibitem[\protect\citeauthoryear{Barndorff-Nielsen, Hansen, Lunde, and
  Shephard}{Barndorff-Nielsen et~al.}{2008}]{BNHLS08}
Barndorff-Nielsen, O.~E., P.~R. Hansen, A.~Lunde, and N.~Shephard (2008).
\newblock Designing realized kernels to measure the ex post variation of equity
  prices in the presence of noise.
\newblock {\em Econometrica\/}~{\em 76\/}(8), 1481--1536.

\bibitem[\protect\citeauthoryear{Barndorff-Nielsen and
  Schmiegel}{Barndorff-Nielsen and Schmiegel}{2007}]{BNSc07}
Barndorff-Nielsen, O.~E. and J.~Schmiegel (2007).
\newblock Ambit processes: with applications to turbulence and tumour growth.
\newblock In {\em Stochastic analysis and applications}, Volume~2 of {\em Abel
  Symp.}, pp.\  93--124. Berlin: Springer.

\bibitem[\protect\citeauthoryear{Barndorff-Nielsen and
  Schmiegel}{Barndorff-Nielsen and Schmiegel}{2009}]{BNSc09}
Barndorff-Nielsen, O.~E. and J.~Schmiegel (2009).
\newblock Brownian semistationary processes and volatility/intermittency.
\newblock In {\em Advanced financial modelling}, Volume~8 of {\em Radon Series
  on Computational and Applied Mathematics}, pp.\  1--25. Berlin: Walter de
  Gruyter.

\bibitem[\protect\citeauthoryear{Barndorff-Nielsen and
  Shephard}{Barndorff-Nielsen and Shephard}{2002}]{BNS02a}
Barndorff-Nielsen, O.~E. and N.~Shephard (2002).
\newblock Econometric analysis of realized volatility and its use in estimating
  stochastic volatility models.
\newblock {\em Journal of the Royal Statistical Society, Series B\/}~{\em
  64\/}(2), 253--280.

\bibitem[\protect\citeauthoryear{Barndorff-Nielsen and
  Shephard}{Barndorff-Nielsen and Shephard}{2004}]{BNS04}
Barndorff-Nielsen, O.~E. and N.~Shephard (2004).
\newblock Power and bipower variation with stochastic volatility and jumps.
\newblock {\em Journal of Financial Econometrics\/}~{\em 1\/}(2), 1--48.

\bibitem[\protect\citeauthoryear{Bayer, Friz, and Gatheral}{Bayer
  et~al.}{2016}]{BFG15}
Bayer, C., K.~Friz, and J.~Gatheral (2016).
\newblock Pricing under rough volatility.
\newblock {\em Quantitative Finance\/}~{\em 16\/}(6), 887--904.

\bibitem[\protect\citeauthoryear{Bennedsen}{Bennedsen}{2020}]{bennedsen16a}
Bennedsen, M. (2020).
\newblock Semiparametric inference on the fractal index of {G}aussian and
  conditionally {G}aussian time series data.
\newblock {\em Econometric Reviews\/}~{\em 39\/}(9), 875--903.

\bibitem[\protect\citeauthoryear{Bennedsen, Lunde, and Pakkanen}{Bennedsen
  et~al.}{2017}]{BLP15}
Bennedsen, M., A.~Lunde, and M.~S. Pakkanen (2017).
\newblock Hybrid scheme for {Brownian} semistationary processes.
\newblock {\em Finance and Stochastics\/}~{\em 21\/}(4), 931--965.

\bibitem[\protect\citeauthoryear{Bingham, Goldie, and Teugels}{Bingham
  et~al.}{1989}]{BGT}
Bingham, N.~H., C.~M. Goldie, and J.~L. Teugels (1989).
\newblock {\em Regular variation}.
\newblock Cambridge: Cambridge University Press.

\bibitem[\protect\citeauthoryear{Bollerslev and Wright}{Bollerslev and
  Wright}{2000}]{BW00}
Bollerslev, T. and J.~H. Wright (2000).
\newblock Semiparametric estimation of long-memory volatility dependencies: The
  role of high-frequency data.
\newblock {\em Journal of Econometrics\/}~{\em 98\/}(1), 81--106.

\bibitem[\protect\citeauthoryear{Bos, Janus, and Koopman}{Bos
  et~al.}{2012}]{BJK12}
Bos, C.~S., P.~Janus, and S.~J. Koopman (2012).
\newblock Spot variance path estimation and its application to high-frequency
  jump-testing.
\newblock {\em Journal of Financial Econometrics\/}~{\em 10\/}(2), 356--389.

\bibitem[\protect\citeauthoryear{Cheridito, Kawaguchi, and Maejima}{Cheridito
  et~al.}{2003}]{CKM2003}
Cheridito, P., H.~Kawaguchi, and M.~Maejima (2003).
\newblock Fractional {O}rnstein-{U}hlenbeck processes.
\newblock {\em Electronic Journal of Probability\/}~{\em 8}, paper no.\ 3, 14
  pp.

\bibitem[\protect\citeauthoryear{Christensen, Oomen, and Podolskij}{Christensen
  et~al.}{2014}]{COP14}
Christensen, K., R.~C.~A. Oomen, and M.~Podolskij (2014).
\newblock Fact or friction: Jumps at ultra high frequency.
\newblock {\em Journal of Financial Economics\/}~{\em 114\/}(3), 576--599.

\bibitem[\protect\citeauthoryear{Comte}{Comte}{1996}]{comte96}
Comte, F. (1996).
\newblock Simulation and estimation of long memory continuous time models.
\newblock {\em Journal of Time Series Analysis\/}~{\em 17\/}(1), 19--36.

\bibitem[\protect\citeauthoryear{Comte, Coutin, and Renault}{Comte
  et~al.}{2012}]{CCR12}
Comte, F., L.~Coutin, and E.~Renault (2012).
\newblock Affine fractional stochastic volatility models.
\newblock {\em Annals of Finance\/}~{\em 8\/}(3), 337--378.

\bibitem[\protect\citeauthoryear{Comte and Renault}{Comte and
  Renault}{1996}]{CR96}
Comte, F. and E.~Renault (1996).
\newblock Long memory continuous time models.
\newblock {\em Journal of Econometrics\/}~{\em 73\/}(1), 101--149.

\bibitem[\protect\citeauthoryear{Comte and Renault}{Comte and
  Renault}{1998}]{CR98}
Comte, F. and E.~Renault (1998).
\newblock Long memory in continuous-time stochastic volatility models.
\newblock {\em Mathematical Finance\/}~{\em 8\/}(4), 291--323.

\bibitem[\protect\citeauthoryear{Constantine and Hall}{Constantine and
  Hall}{1994}]{CH94}
Constantine, A.~G. and P.~Hall (1994).
\newblock Characterizing surface smoothness via estimation of effective fractal
  dimension.
\newblock {\em Journal of the Royal Statistical Society: Series B\/}~{\em
  56\/}(1), 97--113.

\bibitem[\protect\citeauthoryear{Cont}{Cont}{2001}]{Cont2001}
Cont, R. (2001).
\newblock Empirical properties of asset returns: stylized facts and statistical
  issues.
\newblock {\em Quantitative Finance\/}~{\em 1}, 223--236.

\bibitem[\protect\citeauthoryear{Corsi}{Corsi}{2009}]{corsi09}
Corsi, F. (2009).
\newblock A simple approximate long-memory model of realized volatility.
\newblock {\em Journal of Financial Econometrics\/}~{\em 7\/}(2), 174--196.

\bibitem[\protect\citeauthoryear{Ding, Granger, and Engle}{Ding
  et~al.}{1993}]{DGE1993}
Ding, Z., C.~W.~J. Granger, and R.~F. Engle (1993).
\newblock A long memory property of stock market returns and a new model.
\newblock {\em Journal of Empirical Finance\/}~{\em 1\/}(1), 83--106.

\bibitem[\protect\citeauthoryear{Dudley and Norvai\u{s}a}{Dudley and
  Norvai\u{s}a}{2011}]{DN11}
Dudley, R.~M. and R.~Norvai\u{s}a (2011).
\newblock {\em Concrete functional calculus}.
\newblock New York: Springer.

\bibitem[\protect\citeauthoryear{{El Euch}, Fukasawa, and Rosenbaum}{{El Euch}
  et~al.}{2018}]{EFR16}
{El Euch}, O., M.~Fukasawa, and M.~Rosenbaum (2018).
\newblock The microstructural foundations of leverage effect and rough
  volatility.
\newblock {\em Finance and Stochastics\/}~{\em 22\/}(2), 241--280.

\bibitem[\protect\citeauthoryear{Fleming and Kirby}{Fleming and
  Kirby}{2011}]{FK2011}
Fleming, J. and C.~Kirby (2011).
\newblock Long memory in volatility and trading volume.
\newblock {\em Journal of Banking \& Finance\/}~{\em 35\/}(7), 1714--1726.

\bibitem[\protect\citeauthoryear{Fukasawa}{Fukasawa}{2017}]{fukasawa15}
Fukasawa, M. (2017).
\newblock Short-time at-the-money skew and rough fractional volatility.
\newblock {\em Quantitative Finance\/}~{\em 17\/}(2), 189--198.

\bibitem[\protect\citeauthoryear{Fukasawa, Takabatake, and Westphal}{Fukasawa
  et~al.}{2019}]{FTW2019}
Fukasawa, M., T.~Takabatake, and R.~Westphal (2019).
\newblock Is volatility rough?
\newblock Working paper, available at: \url{https://arxiv.org/abs/1905.04852}.

\bibitem[\protect\citeauthoryear{Gatheral}{Gatheral}{2006}]{gatheral06}
Gatheral, J. (2006).
\newblock {\em The volatility surface}.
\newblock Hoboken: Wiley.

\bibitem[\protect\citeauthoryear{Gatheral, Jaisson, and Rosenbaum}{Gatheral
  et~al.}{2018}]{GJR14}
Gatheral, J., T.~Jaisson, and M.~Rosenbaum (2018).
\newblock Volatility is rough.
\newblock {\em Quantitative Finance\/}~{\em 18\/}(6), 933--949.

\bibitem[\protect\citeauthoryear{Glasserman and He}{Glasserman and
  He}{2019}]{GH2019}
Glasserman, P. and P.~He (2019).
\newblock Buy rough, sell smooth.
\newblock {\em Quantitative Finance\/}~{\em 20\/}(3), 363--378.

\bibitem[\protect\citeauthoryear{Gneiting and Schlather}{Gneiting and
  Schlather}{2004}]{GS04}
Gneiting, T. and M.~Schlather (2004).
\newblock Stochastic models that separate fractal dimension and the {Hurst}
  effect.
\newblock {\em SIAM Review\/}~{\em 46\/}(2), 269--282.

\bibitem[\protect\citeauthoryear{Gneiting, Sevcikova, and Percival}{Gneiting
  et~al.}{2012}]{GSP12}
Gneiting, T., H.~Sevcikova, and D.~B. Percival (2012).
\newblock Estimators of {Fractal} {Dimension}: {Assessing} the {Roughness} of
  {Time} {Series} and {Spatial} {Data}.
\newblock {\em Statistical Science\/}~{\em 27\/}(2), 247--277.

\bibitem[\protect\citeauthoryear{Gradshteyn and Ryzhik}{Gradshteyn and
  Ryzhik}{2007}]{integrals}
Gradshteyn, I.~S. and I.~M. Ryzhik (2007).
\newblock {\em Table of integrals, series, and products\/} (Seventh ed.).
\newblock Amsterdam: Academic Press.

\bibitem[\protect\citeauthoryear{Handcock and Stein}{Handcock and
  Stein}{1993}]{matern93}
Handcock, M.~S. and M.~L. Stein (1993).
\newblock A {B}ayesian analysis of kriging.
\newblock {\em Technometrics\/}~{\em 35\/}(4), 403--410.

\bibitem[\protect\citeauthoryear{Hansen and Lunde}{Hansen and
  Lunde}{2006}]{PeterAsgerRV06}
Hansen, P.~R. and A.~Lunde (2006).
\newblock Realized variance and market microstructure noise.
\newblock {\em Journal of Business and Economic Statistics\/}~{\em 24\/}(2),
  127--161.

\bibitem[\protect\citeauthoryear{Hansen, Lunde, and Nason}{Hansen
  et~al.}{2011}]{MCS}
Hansen, P.~R., A.~Lunde, and J.~M. Nason (2011).
\newblock The model confidence set.
\newblock {\em Econometrica\/}~{\em 79\/}(2), 453--497.

\bibitem[\protect\citeauthoryear{Jacod, Li, Mykland, Podolskij, and
  Vetter}{Jacod et~al.}{2009}]{JLMPV09}
Jacod, J., Y.~Li, P.~A. Mykland, M.~Podolskij, and M.~Vetter (2009).
\newblock Microstructure noise in the continuous case: The pre-averaging
  approach.
\newblock {\em Stochastic Processes and their Applications\/}~{\em 119\/}(7),
  2249--2276.

\bibitem[\protect\citeauthoryear{Jaisson and Rosenbaum}{Jaisson and
  Rosenbaum}{2016}]{JR16}
Jaisson, T. and M.~Rosenbaum (2016).
\newblock Rough fractional diffusions as scaling limits of nearly unstable
  heavy tailed {Hawkes} processes.
\newblock {\em The Annals of Applied Probability\/}~{\em 26\/}(5), 2860--2882.

\bibitem[\protect\citeauthoryear{Kristensen}{Kristensen}{2010}]{kristensen10}
Kristensen, D. (2010).
\newblock Nonparametric filtering of the realized spot volatility: a
  kernel-based approach.
\newblock {\em Econometric Theory\/}~{\em 23\/}(1), 60--93.

\bibitem[\protect\citeauthoryear{Mat\'{e}rn}{Mat\'{e}rn}{1960}]{matern60}
Mat\'{e}rn, B. (1960).
\newblock {\em Spatial variation: stochastic models and their application to
  some problems in forest surveys and other sampling investigations}.
\newblock Meddelanden fr\aa n Statens Skogsforskningsinstitut, Band 49, Nr. 5,
  Stockholm.

\bibitem[\protect\citeauthoryear{Oomen}{Oomen}{2006}]{oomen06b}
Oomen, R. C.~A. (2006).
\newblock Comment on 2005 {JBES} invited address ``{R}ealized variance and
  market microstructure noise'' by {P}eter {R}. {H}ansen and {A}sger {L}unde.
\newblock {\em Journal of Business and Economic Statistics\/}~{\em 2\/}(24),
  195--202.

\bibitem[\protect\citeauthoryear{Patton}{Patton}{2011}]{patton11}
Patton, A. (2011).
\newblock Volatility forecast comparison using imperfect volatility proxies.
\newblock {\em Journal of Econometrics\/}~{\em 160\/}(1), 246--256.

\bibitem[\protect\citeauthoryear{Poon and Granger}{Poon and
  Granger}{2003}]{PG2003}
Poon, S.-H. and C.~W. Granger (2003).
\newblock Forecasting volatility in financial markets: A review.
\newblock {\em Journal of Economic Literature\/}~{\em 41\/}(2), 478--539.

\bibitem[\protect\citeauthoryear{Rossi and Fantazzini}{Rossi and
  Fantazzini}{2015}]{RF15}
Rossi, E. and D.~Fantazzini (2015).
\newblock Long memory and periodicity in intraday volatility.
\newblock {\em Journal of Financial Econometrics\/}~{\em 13\/}(4), 922--961.

\bibitem[\protect\citeauthoryear{Zhang, Mykland, and A\"it-Sahalia}{Zhang
  et~al.}{2005}]{ZMS05}
Zhang, L., P.~A. Mykland, and Y.~A\"it-Sahalia (2005).
\newblock A tale of two time scales: Determining integrated volatility with
  noisy high-frequency data.
\newblock {\em Journal of the American Statistical Association\/}~{\em
  100\/}(472), 1394--1411.

\bibitem[\protect\citeauthoryear{Zu and Boswijk}{Zu and Boswijk}{2014}]{ZB14}
Zu, Y. and H.~P. Boswijk (2014).
\newblock Estimating spot volatility with high-frequency financial data.
\newblock {\em Journal of Econometrics\/}~{\em 181\/}(2), 117--135.

\end{thebibliography}
}

\appendix

\section{Pre-averaged measures of integrated variance}\label{app:BV}
This section follows the methodology in \cite{COP14} closely. Suppose we want to estimate IV in some interval $[(i-1)\Delta,i\Delta]$ for $i\geq 1$ and we have $N+1$ observations, $Z_0, Z_1, \ldots, Z_N$, where $Z_i = \log S_{t_i}$,  of the log price process $Z = \log S$ in this interval. We define the \emph{pre-averaged log returns},
\begin{align*}
\mikkell{r_{j,K}^* = \frac{1}{K} \left( \sum_{k=K/2}^{K-1}Z_{(j+k)} - \sum_{k=0}^{K/2-1} Z_{(j+k)}  \right), \quad j = 0, 1, \ldots, N-K,}
\end{align*}
\mikkell{where $K\geq 2$ is even. For the asymptotics to work, it is required that $K = \theta \sqrt{N} + o\left(N^{-1/4}\right)$, and in our implementation we set $\theta = 1$ and $K = \lfloor \sqrt{N} \rfloor$ if  $\lfloor \sqrt{N} \rfloor$ is an even number and  $K = \lfloor \sqrt{N} \rfloor + 1$ otherwise.\footnote{\mikkell{Setting the tuning parameter $\theta$ equal to $1$ was found to work well for data similar to ours in \cite{COP14}. We \mikkol{drew} the same conclusion \mikkol{from both simulated data mimicking our setup, as well from the actual data analyzed later in the paper.}}} Here, $\lfloor x \rfloor$ means the largest integer smaller than, or equal to, $x\in \R.$ Using these pre-averaged returns, we suggest the following estimators of IV, which are robust to market microstructure noise:}
\begin{align}
\mikkell{RV^{\Delta*}_t }&\mikkell{=  \frac{N}{N-K+2} \frac{1}{K \psi_K} \sum_{j=0}^{N-K+1} |r_{j,K}^*|^2 - \frac{\hat{\omega}^2}{\theta^2 \psi_K}, } \nonumber \\
\mikkell{BV^{\Delta*}_t }&\mikkell{= \frac{N}{N-2K+2} \frac{1}{K \psi_K} \frac{\pi}{2}\sum_{j=0}^{N-2K+1} |r_{j,K}^*||r_{j+K,K}^*| - \frac{\hat{\omega}^2}{\theta^2 \psi_K}, } \label{eq:BVstar}
\end{align}
\mikkell{where $\psi_K := (1+2K^{-2})/12$ and $t \in [(i-1)\Delta,i\Delta]$. The term $\frac{\hat{\omega}^2}{\theta^2 \psi_K}$ is a bias-correction, where $\hat{\omega}^2$ is an estimate of the variance of the microstructure noise process $U$. We use the estimator of \cite{oomen06b}:}
\begin{align*}
\mikkell{\hat{\omega}_{AC}^2 = - \frac{1}{N-1} \sum_{j=2}^N r_ir_{i-1},}
\end{align*}
where $r_i = Z_i - Z_{i-1}$ is the \mikkol{$i$-th} log return. The \mikkol{statistics} $RV^{\Delta*}$ and $BV^{\Delta*}$ are the pre-averaged \mikkol{analogs} of the realized variance \citep{ABDL01} and bipower variation \citep{BNS04} estimators of IV, respectively. While both estimators are robust to market microstructure noise, only $BV^{\Delta*}$ is also robust to jumps in the price process. This is the main reason we use the $BV^{\Delta*}$ estimate of IV in this paper.

\section{Mathematical proofs}\label{app:proofs}

\subsection{Proofs of Propositions \ref{prop:BSSlm} and \ref{prop:BSSsm}}

Recall first that we can write
\begin{align*}
\rho_X(h) = \frac{\int_0^{\infty} g(x)g(x+\mikko{|h|})dx}{\int_0^{\infty}g(x)^2dx}, \quad \mikko{h \in \R}.
\end{align*}

\begin{proof}[Proof of Proposition \ref{prop:BSSlm}]
\eqref{asymp1} \mikkol{We may assume that} $h>1$, \mikkol{since we let $h \rightarrow \infty$}. By \mikkol{the assumption \eqref{A:infty}, we may} write 
\begin{align*}
\int_0^{\infty} g(x)g(x+h)dx &= \int_0^{\infty} g(x)(x+h)^{-\gamma} L_1(x+h)dx \\
	&= h^{-\gamma} L_1(h) \int_0^{\infty} g(x)(x/h+1)^{-\gamma} \frac{L_1(x+h)}{L(h)} dx \\
	&= h^{-\gamma} L_1(h) \int_0^{\infty} g(x)(x/h+1)^{-\gamma} \frac{L_1(h(x/h+1))}{L(h)} dx \\
	&\sim h^{-\gamma} L_1(h) \int_0^{\infty} g(x)dx, \quad h \rightarrow \infty, 
\end{align*}
\mikkol{using} the properties of slowly varying functions and where we applied the dominated convergence theorem, which is valid since for all $\epsilon > 0$ and large enough $h$,
\begin{align*}
g(x)\frac{L_1\left( (x/h+1)h\right)}{L_1(h)} < g(x)(1+\epsilon), \quad x \in(0,\infty),
\end{align*}
which is integrable over $(0,1)$ by assumption and over $[1,\infty)$ since $\gamma > 1$.

\eqref{asymp2} Let first
\begin{align*}
\int_0^{\infty} g(x)g(x+h)dx =& \int_0^1 g(x) (x+h)^{-\gamma}L_1(x+h) dx \\
&+ \int_1^{\infty} x^{-\gamma}(x+h)^{-\gamma}L_1(x) L_1(x+h)dx \\
	=:& I_{1,h} + I_{2,h}, 
\end{align*}
where
\begin{align*}
I_{1,h} &=  \int_0^1 g(x) (x+h)^{-\gamma}L_1(x+h) dx, \\
I_{2,h} &= \int_1^{\infty} x^{-\gamma}(x+h)^{-\gamma}L_1(x) L_1(x+h)dx.
\end{align*}
We may write
\begin{align*}
I_{2,h} & = h^{-\gamma} \int_{1}^\infty x^{-\gamma}\Big( 1+\frac{x}{h}\Big)^{-\gamma} L_1(x)  L_1(x+h) dx \\
& = h^{-2\gamma+1} L_1(h)^2 \int_{1/h}^\infty \underbrace{y^{-\gamma} (1+y)^{-\gamma} \frac{L_1(hy)}{L_1(h)}\frac{L_1(h(1+y))}{L_1(h)}}_{=: k_h(y)}dy,
\end{align*}
where the second equality follows by substituting $y = x/h$. Fix now $\delta \in (0,1-\gamma) \subset (0,1/2)$. By the Potter bounds \citep[][Theorem 1.5.6(ii)]{BGT}, under \eqref{A:infty} there exists a constant $C_\delta > 0$ such that
\begin{equation*}
\frac{L_1(hy)}{L_1(h)} \leq C_\delta \max \{ y^{-\delta},y^\delta\}, \quad y>1/h, \quad h>1.
\end{equation*}
Accordingly, we find a dominant for $k_h$, given by
\begin{equation*}
k_h(y) \leq \overline{k}(y) :=
\begin{cases}
C_\delta y^{-\gamma-\delta} (1+y)^{-\gamma+\delta}, & y \in (0,1],\\
C_\delta y^{-2(\gamma-\delta)}, & y \in (1,\infty),
\end{cases}
\end{equation*}
for any $h>1$. Note that the dominant $\overline{k}$ is integrable on $(0,1]$, since $-\gamma -\delta > -\gamma-(1-\gamma)=-1$, as well as on $(1,\infty)$, since $-2(\gamma-\delta) < -1$. Applying the dominated convergence theorem, we get
\begin{equation*}
I_{2,h} \sim h^{-2\gamma+1} L_1(h)^2 \int_0^\infty y^{-\gamma} (1+y)^{-\gamma} dy, \quad h \rightarrow \infty.
\end{equation*}
Moreover, the asymptotic estimate $I_{1,h} \sim c h^{-\gamma} L_1(h)$, as $h \rightarrow \infty$, can be deduced using the same strategy as  in \eqref{asymp1}. Observing that $-\gamma < -2\gamma+1$ when $\gamma < 1$, we deduce that $I_{1,h} = o\big(h^{-2\gamma+1} L_1(h)^2\big)$ as $h \rightarrow \infty$\mikkol{, so the} assertion follows.
\end{proof}

\begin{proof}[Proof of Proposition \ref{prop:BSSsm}]
As in the proof of Proposition \ref{prop:BSSlm}\mikko{,} we take $h > 1$ and write
\begin{align*}
\int_0^{\infty} g(x)g(x+h)dx &= \int_0^{\infty} g(x)(x+h)^{-\gamma}e^{-\lambda(x+h)} L_1(x+h)dx \\
 &= e^{-\lambda h}h^{-\gamma}L_1(h) \int_0^{\infty} g(x)(x/h+1)^{-\gamma}e^{-\lambda x} \frac{L_1(x+h)}{L(h)}dx  \\
 &\sim e^{-\lambda h}h^{-\gamma}L_1(h) \int_0^{\infty} g(x)e^{-\lambda x}dx, 
\end{align*}
where we used the dominated convergence theorem, which can be justified in the same manner as in \mikkol{the proof of} Proposition \ref{prop:BSSlm}.
\end{proof}

\subsection{Proof of Theorem \ref{th:sig}}
\mikkol{We first recall an elementary result that we will need below. Namely, as $x \downarrow 0$,
\begin{align}\label{eq:exp}
\sum_{k=2}^{\infty} \frac{x^k}{k!} = o(x).
\end{align}
This result is easy to prove by examining the Taylor expansion of the exponential function and applying l'H\^opital's rule.}

\begin{proof}[Proof of Theorem \ref{th:sig}]
\eqref{item:rough} Suppose w.l.o.g.\ that $\xi =1$. Since $\sigma$ is \mikko{covariance-stationary,} \mikko{it suffices to study}
\begin{align*}
\rho(h) = Corr(\sigma_t,\sigma_{t+h}) = \frac{Cov(\sigma_t,\sigma_{t+h})}{Var(\sigma_0)}, \quad h \geq 0.
\end{align*}
We get using the definition \eqref{eq:sig} of $\sigma$
\begin{align*}
Cov(\sigma_t,\sigma_{t+h}) &= \E[\sigma_t \sigma_{t+h}] - \E[\sigma_0]^2 \\
	&=\E[\exp(X_t + X_{t+h})] -  \E[\exp(X_0)]^2.
\end{align*}
Now, by the fact that $X$ is a zero-mean Gaussian process we get, using the moment generating function of the Gaussian distribution,
\begin{align*}
\E[\exp(X_t + X_{t+h})] &= \exp \left( \frac{1}{2} Var(X_t + X_{t+h}) \right) \\
	&= \exp \left( Var(X_0) + cov(X_0,X_{t+h}) \right) \\
	&= \exp \left( \gamma_X(0) + \gamma_X(h) \right),
\end{align*}
and
\begin{align*}
\E[\exp(X_0)] = \exp\left(\frac{1}{2} \gamma_X(0)\right),
\end{align*}
where we write $\gamma_X(h)$ for $Cov(X_h,X_0) = \E[X_h X_0]$. Putting this together, we arrive at
\begin{align*}
\rho(h) = \frac{ \exp(\gamma_X(h)) - 1}{\exp(\gamma(0)) - 1}.
\end{align*}
Now, \mikko{using generic positive constants} $C, C_1, C_2$ \mikko{that may vary} from line to line\mikko{,}
\begin{align*}
1-\rho(h) &= \frac{\exp(\gamma_X(0)) -1 - \left(\exp(\gamma_X(h)) - 1\right) }{\exp(\gamma(0)) -1} \\
	&= \frac{\exp(\gamma_X(0)) -\exp(\gamma_X(h)) }{\exp(\gamma(0)) -1} \\
	&= C \left[ 1- \exp (\gamma_X(h) - \gamma_X(0)) \right] \\
	&=  C \left[ 1- \exp (-\gamma_X(0) (1- \rho_X(h))) \right] \\
	&=   C_1 (1-\rho_X(h)) + C_2 \sum_{k=2}^{\infty} \frac{\left( -\gamma_X(0)(1-\rho_X(h))\right)^k}{k!} \\
	&\sim C |h|^{2\alpha+1}L_0(h),
\end{align*} 
by the assumption on $1-\rho_X(h)$ \mikkol{and the asymptotic result} \eqref{eq:exp}\mikko{.} \mikko{(O}n the \mikko{penultimate} line\mikko{,} we Taylor expanded the exponential.\mikko{)} This concludes the proof of \eqref{item:rough}.

\eqref{item:memory} Suppose w.l.o.g.\ that $\xi =1$. Using the same approach as in part \eqref{item:rough}, we get by Taylor expansion:
\begin{align*}
\rho(h) &= \frac{\exp(\gamma_X(h)) - 1 }{\exp(\gamma(0)) -1} \\
	&= C_1 \gamma_X(h) + C_2 \sum_{k=2}^{\infty} \frac{\gamma_X(h)^k}{k!} \\
	&= C_1 \gamma_X(0) \rho_X(h) + C_2 \sum_{k=2}^{\infty} \frac{\gamma_X(h)^k}{k!} \\
	&\sim C \rho_X(h), \quad h \rightarrow \infty,
\end{align*}
\mikkol{where we have finally applied} \eqref{eq:exp}.
\end{proof}

\subsection{Proof of Theorem \ref{th:NLLS}}

\begin{proof}[Proof of Theorem \ref{th:NLLS}]
Let
\begin{align*}
Q_n(\theta) = Q_n(a,b,\alpha) := \sum_{k=1}^m \left( \hat{\gamma}_2^*(k\Delta) - a - b (k\Delta)^{2\alpha+1} \right)^2,
\end{align*}
and
\begin{align*}
Q(\theta) = Q(a,b,\alpha) := \sum_{k=1}^m \left( a_0 + b_0 (k\Delta)^{2\alpha_0+1} - a - b (k\Delta)^{2\alpha+1} \right)^2,
\end{align*}
where $a_0 := 2\sigma_u^2$. Since, by the assumptions on $X$ and $Z$, $\hat{\gamma}_2^*(k\Delta)$ is a consistent estimator of $a_0 + b_0 (k\Delta)^{2\alpha_0+1}$, it is not hard to show that
\begin{align}\label{eq:ucon}
\sup_{\theta \in \Theta} | Q_n(\theta) -Q(\theta)| \rightarrow 0, \quad n \rightarrow \infty.
\end{align}
Because $m\geq 3$, $\theta_0 := (a_0,b_0,\alpha_0)$ is the unique minimizer of $Q(\theta)$ (in fact $Q(\theta_0) = 0$)\mikkol{,} the rest of the proof therefore follows standard arguments, which we give here for completeness. 

Let $\epsilon > 0$. For $n$ sufficiently large, we have\mikkol{,} because of the uniform convergence \eqref{eq:ucon}, that
\begin{align*}
|Q(\hat{\theta}) - Q_n(\hat{\theta})| < \epsilon.
\end{align*}
Using first that $Q(\theta_0) = 0$, then that $\hat{\theta}$ minimizes $Q_n(\theta)$, and finally the uniform convergence \eqref{eq:ucon} again, it holds that
\begin{align*}
|Q_n(\hat{\theta}) - Q(\theta_0)| = Q_n(\hat{\theta}) \leq Q_n(\theta_0) <  \epsilon,
\end{align*}
for $n$ sufficiently large. Let $B$ be an open neighborhood of $\theta_0$. Now\mikkol{,}
\begin{align*}
\mathbb{P}(\hat{\theta} \in B^c \cap \Theta) &\leq \mathbb{P}( | Q(\hat{\theta}) - Q(\theta_0)| > 0) \\\
	&\leq \mathbb{P}( |Q(\hat{\theta}) - Q_n(\hat{\theta})| + |Q_n(\hat{\theta}) - Q(\theta_0)| > 0) \\
	& \rightarrow 0, \quad n \rightarrow \infty,
\end{align*}
by the results above\mikkol{,} since $\epsilon>0$ was arbitrary\mikkol{.} 
\end{proof}

\end{document}